\newcommand{\rvline}{\hspace*{-\arraycolsep}\vline\hspace*{-\arraycolsep}}
\def\pstree@balancedfit#1#2{\edef\next{\noexpand\pstree@@balancedfit#1\noexpand\@nil#2\noexpand\@nil}\next
\ifnum\pst@cntg=\z@
\pstree@max{#1}\pst@cnth
\else
\pstree@max{#2}\pst@cnth
\fi
\advance\pst@cnth\pst@cnth
\advance\pst@cnth\psk@thistreesep\relax
\advance\pst@cnth\pstree@tspace\relax
\gdef\pstree@tspace{\z@}}
\def\pstree@@balancedfit#1,#2\@nil#3,#4\@nil{\ifnum#1=\pstree@stop
\let\next\relax
\pst@cntg=\@ne
\else
\ifnum#3=\pstree@stop
\let\next\relax
\pst@cntg=\z@
\else
\def\next{\pstree@@balancedfit#2\@nil#4\@nil}\fi
\fi
\next}
\def\h{\mathfrak h}
\newtheorem{theorem}{Theorem}[section]
\newtheorem{axiom}[theorem]{Axiom}\newtheorem{conjecture}[theorem]{Conjecture}\newtheorem{corollary}[theorem]{Corollary}\newtheorem{definition}[theorem]{Definition}\newtheorem{example}[theorem]{Example}\newtheorem{exercise}[theorem]{Exercise}\newtheorem{lemma}[theorem]{Lemma}\newtheorem{proposition}[theorem]{Proposition}
\newtheorem{prop}[theorem]{Proposition}
\newcommand{\cod}{\operatorname{cod}}
\newcommand{\tgsp}{T^{\operatorname{ext}}}
\theoremstyle{definition}
\newtheorem{defn}[theorem]{Definition}
\newtheorem*{remark}{Remark}
\let\pdfoutput=\undefined\fi
\chardef\@x10\chardef\@xv60
\def\tcitime{
\def\@time{%
  \@minute\time\@hour\@minute\divide\@hour\@xv
  \ifnum\@hour<\@x 0\fi\the\@hour:%
  \multiply\@hour\@xv\advance\@minute-\@hour
  \ifnum\@minute<\@x 0\fi\the\@minute
  }}%
\def\x@hyperref#1#2#3{%
   \catcode`\~ = 12
   \catcode`\$ = 12
   \catcode`\_ = 12
   \catcode`\# = 12
   \catcode`\& = 12
   \catcode`\% = 12
   \y@hyperref{#1}{#2}{#3}%
}
\def\y@hyperref#1#2#3#4{%
   #2\ref{#4}#3
   \catcode`\~ = 13
   \catcode`\$ = 3
   \catcode`\_ = 8
   \catcode`\# = 6
   \catcode`\& = 4
   \catcode`\% = 14
}
\def\QCTOpt[#1]#2{%
  \def\QCTOptB{#1}
  \def\QCTOptA{#2}
}
\def\QCTNOpt#1{%
  \def\QCTOptA{#1}
  \let\QCTOptB\empty
}
\def\Qct{%
  \@ifnextchar[{%
    \QCTOpt}{\QCTNOpt}
}
\def\QCBOpt[#1]#2{%
  \def\QCBOptB{#1}%
  \def\QCBOptA{#2}%
}
\def\QCBNOpt#1{%
  \def\QCBOptA{#1}%
  \let\QCBOptB\empty
}
\def\Qcb{%
  \@ifnextchar[{%
    \QCBOpt}{\QCBNOpt}%
}
\def\PrepCapArgs{%
  \ifx\QCBOptA\empty
    \ifx\QCTOptA\empty
      {}%
    \else
      \ifx\QCTOptB\empty
        {\QCTOptA}%
      \else
        [\QCTOptB]{\QCTOptA}%
      \fi
    \fi
  \else
    \ifx\QCBOptA\empty
      {}%
    \else
      \ifx\QCBOptB\empty
        {\QCBOptA}%
      \else
        [\QCBOptB]{\QCBOptA}%
      \fi
    \fi
  \fi
}
\def\GRAPHICSPS#1{%
 \ifcase\GRAPHICSTYPE
   \special{ps: #1}%
 \or
   \special{language "PS", include "#1"}%
 \fi
}%
\def\graffile#1#2#3#4{%
    \bgroup
	   \@inlabelfalse
       \leavevmode
       \@ifundefined{bbl@deactivate}{\def~{\string~}}{\activesoff}%
        \raise -#4 \BOXTHEFRAME{%
           \hbox to #2{\raise #3\hbox to #2{\null #1\hfil}}}%
    \egroup
}%
\def\draftbox#1#2#3#4{%
 \leavevmode\raise -#4 \hbox{%
  \frame{\rlap{\protect\tiny #1}\hbox to #2%
   {\vrule height#3 width\z@ depth\z@\hfil}%
  }%
 }%
}%
\let\nographics=\@msidraft
\newif\ifwasdraft
\def\GRAPHIC#1#2#3#4#5{%
   \ifnum\@msidraft=\@ne\draftbox{#2}{#3}{#4}{#5}%
   \else\graffile{#1}{#3}{#4}{#5}%
   \fi
}
\def\addtoLaTeXparams#1{%
    \edef\LaTeXparams{\LaTeXparams #1}}%
\newif\ifBoxFrame \BoxFramefalse
\newif\ifOverFrame \OverFramefalse
\newif\ifUnderFrame \UnderFramefalse
\def\BOXTHEFRAME#1{%
   \hbox{%
      \ifBoxFrame
         \frame{#1}%
      \else
         {#1}%
      \fi
   }%
}
\def\doFRAMEparams#1{\BoxFramefalse\OverFramefalse\UnderFramefalse\readFRAMEparams#1\end}%
\def\readFRAMEparams#1{%
 \ifx#1\end%
  \let\next=\relax
  \else
  \ifx#1i\dispkind=\z@\fi
  \ifx#1d\dispkind=\@ne\fi
  \ifx#1f\dispkind=\tw@\fi
  \ifx#1t\addtoLaTeXparams{t}\fi
  \ifx#1b\addtoLaTeXparams{b}\fi
  \ifx#1p\addtoLaTeXparams{p}\fi
  \ifx#1h\addtoLaTeXparams{h}\fi
  \ifx#1X\BoxFrametrue\fi
  \ifx#1O\OverFrametrue\fi
  \ifx#1U\UnderFrametrue\fi
  \ifx#1w
    \ifnum\@msidraft=1\wasdrafttrue\else\wasdraftfalse\fi
    \@msidraft=\@ne
  \fi
  \let\next=\readFRAMEparams
  \fi
 \next
 }%
\def\IFRAME#1#2#3#4#5#6{%
      \bgroup
      \let\QCTOptA\empty
      \let\QCTOptB\empty
      \let\QCBOptA\empty
      \let\QCBOptB\empty
      #6%
      \parindent=0pt
      \leftskip=0pt
      \rightskip=0pt
      \setbox0=\hbox{\QCBOptA}%
      \@tempdima=#1\relax
      \ifOverFrame
          \typeout{This is not implemented yet}%
          \show\HELP
      \else
         \ifdim\wd0>\@tempdima
            \advance\@tempdima by \@tempdima
            \ifdim\wd0 >\@tempdima
               \setbox1 =\vbox{%
                  \unskip\hbox to \@tempdima{\hfill\GRAPHIC{#5}{#4}{#1}{#2}{#3}\hfill}%
                  \unskip\hbox to \@tempdima{\parbox[b]{\@tempdima}{\QCBOptA}}%
               }%
               \wd1=\@tempdima
            \else
               \textwidth=\wd0
               \setbox1 =\vbox{%
                 \noindent\hbox to \wd0{\hfill\GRAPHIC{#5}{#4}{#1}{#2}{#3}\hfill}\\%
                 \noindent\hbox{\QCBOptA}%
               }%
               \wd1=\wd0
            \fi
         \else
            \ifdim\wd0>0pt
              \hsize=\@tempdima
              \setbox1=\vbox{%
                \unskip\GRAPHIC{#5}{#4}{#1}{#2}{0pt}%
                \break
                \unskip\hbox to \@tempdima{\hfill \QCBOptA\hfill}%
              }%
              \wd1=\@tempdima
           \else
              \hsize=\@tempdima
              \setbox1=\vbox{%
                \unskip\GRAPHIC{#5}{#4}{#1}{#2}{0pt}%
              }%
              \wd1=\@tempdima
           \fi
         \fi
         \@tempdimb=\ht1
         \advance\@tempdimb by -#2
         \advance\@tempdimb by #3
         \leavevmode
         \raise -\@tempdimb \hbox{\box1}%
      \fi
      \egroup%
}%
\def\DFRAME#1#2#3#4#5{%
  \vspace\topsep
  \hfil\break
  \bgroup
     \leftskip\@flushglue
	 \rightskip\@flushglue
	 \parindent\z@
	 \parfillskip\z@skip
     \let\QCTOptA\empty
     \let\QCTOptB\empty
     \let\QCBOptA\empty
     \let\QCBOptB\empty
	 \vbox\bgroup
        \ifOverFrame 
           #5\QCTOptA\par
        \fi
        \GRAPHIC{#4}{#3}{#1}{#2}{\z@}%
        \ifUnderFrame 
           \break#5\QCBOptA
        \fi
	 \egroup
  \egroup
  \vspace\topsep
  \break
}%
\def\FFRAME#1#2#3#4#5#6#7{%
  \@ifundefined{floatstyle}
    {
     \begin{figure}[#1]%
    }
    {
	 \ifx#1h
      \begin{figure}[H]%
	 \else
      \begin{figure}[#1]%
	 \fi
	}
  \let\QCTOptA\empty
  \let\QCTOptB\empty
  \let\QCBOptA\empty
  \let\QCBOptB\empty
  \ifOverFrame
    #4
    \ifx\QCTOptA\empty
    \else
      \ifx\QCTOptB\empty
        \caption{\QCTOptA}%
      \else
        \caption[\QCTOptB]{\QCTOptA}%
      \fi
    \fi
    \ifUnderFrame\else
      \label{#5}%
    \fi
  \else
    \UnderFrametrue%
  \fi
  \begin{center}\GRAPHIC{#7}{#6}{#2}{#3}{\z@}\end{center}%
  \ifUnderFrame
    #4
    \ifx\QCBOptA\empty
      \caption{}%
    \else
      \ifx\QCBOptB\empty
        \caption{\QCBOptA}%
      \else
        \caption[\QCBOptB]{\QCBOptA}%
      \fi
    \fi
    \label{#5}%
  \fi
  \end{figure}%
 }%
\def\makeactives{
  \catcode`\"=\active
  \catcode`\;=\active
  \catcode`\:=\active
  \catcode`\'=\active
  \catcode`\~=\active
}
   \gdef\activesoff{%
      \def"{\string"}%
      \def;{\string;}%
      \def:{\string:}%
      \def'{\string'}%
      \def~{\string~}%
    }
\def\FRAME#1#2#3#4#5#6#7#8{%
 \bgroup
 \ifnum\@msidraft=\@ne
   \wasdrafttrue
 \else
   \wasdraftfalse%
 \fi
 \def\LaTeXparams{}%
 \dispkind=\z@
 \def\LaTeXparams{}%
 \doFRAMEparams{#1}%
 \ifnum\dispkind=\z@\IFRAME{#2}{#3}{#4}{#7}{#8}{#5}\else
  \ifnum\dispkind=\@ne\DFRAME{#2}{#3}{#7}{#8}{#5}\else
   \ifnum\dispkind=\tw@
    \edef\@tempa{\noexpand\FFRAME{\LaTeXparams}}%
    \@tempa{#2}{#3}{#5}{#6}{#7}{#8}%
    \fi
   \fi
  \fi
  \ifwasdraft\@msidraft=1\else\@msidraft=0\fi{}%
  \egroup
 }%
\def\TEXUX#1{"texux"}
\long\def\QQQ#1#2{%
     \long\expandafter\def\csname#1\endcsname{#2}}%
\long\def\QQA#1#2{}%
\def\QTR#1#2{{\csname#1\endcsname {#2}}}%
\def\EXPAND#1[#2]#3{}%
\def\NOEXPAND#1[#2]#3{}%
\def\LaTeXparent#1{}%
\def\ChildStyles#1{}%
\def\ChildDefaults#1{}%
\def\QTagDef#1#2#3{}%
  \providecommand{\UNICODE}[2][]{\protect\rule{.1in}{.1in}}
  \providecommand{\U}[1]{\protect\rule{.1in}{.1in}}
\def\QQfnmark#1{\footnotemark}
 \def\abstract{%
  \if@twocolumn
   \section*{Abstract (Not appropriate in this style!)}%
   \else \small 
   \begin{center}{\bf Abstract\vspace{-.5em}\vspace{\z@}}\end{center}%
   \quotation 
   \fi
  }%
   \def\registered{\relax\ifmmode{}\r@gistered
                    \else$\m@th\r@gistered$\fi}%
 \def\r@gistered{^{\ooalign
  {\hfil\raise.07ex\hbox{$\scriptstyle\rm\text{R}$}\hfil\crcr
  \mathhexbox20D}}}}{}%
\newdimen\theight
\def\newfmtname{LaTeX2e}
  \DeclareOldFontCommand{\rm}{\normalfont\rmfamily}{\mathrm}
  \DeclareOldFontCommand{\sf}{\normalfont\sffamily}{\mathsf}
  \DeclareOldFontCommand{\tt}{\normalfont\ttfamily}{\mathtt}
  \DeclareOldFontCommand{\bf}{\normalfont\bfseries}{\mathbf}
  \DeclareOldFontCommand{\it}{\normalfont\itshape}{\mathit}
  \DeclareOldFontCommand{\sl}{\normalfont\slshape}{\@nomath\sl}
  \DeclareOldFontCommand{\sc}{\normalfont\scshape}{\@nomath\sc}
\def\alpha{{\Greekmath 010B}}%
\def\beta{{\Greekmath 010C}}%
\def\gamma{{\Greekmath 010D}}%
\def\delta{{\Greekmath 010E}}%
\def\epsilon{{\Greekmath 010F}}%
\def\zeta{{\Greekmath 0110}}%
\def\eta{{\Greekmath 0111}}%
\def\theta{{\Greekmath 0112}}%
\def\iota{{\Greekmath 0113}}%
\def\kappa{{\Greekmath 0114}}%
\def\lambda{{\Greekmath 0115}}%
\def\mu{{\Greekmath 0116}}%
\def\nu{{\Greekmath 0117}}%
\def\xi{{\Greekmath 0118}}%
\def\pi{{\Greekmath 0119}}%
\def\rho{{\Greekmath 011A}}%
\def\sigma{{\Greekmath 011B}}%
\def\tau{{\Greekmath 011C}}%
\def\upsilon{{\Greekmath 011D}}%
\def\phi{{\Greekmath 011E}}%
\def\chi{{\Greekmath 011F}}%
\def\psi{{\Greekmath 0120}}%
\def\omega{{\Greekmath 0121}}%
\def\varepsilon{{\Greekmath 0122}}%
\def\vartheta{{\Greekmath 0123}}%
\def\varpi{{\Greekmath 0124}}%
\def\varrho{{\Greekmath 0125}}%
\def\varsigma{{\Greekmath 0126}}%
\def\varphi{{\Greekmath 0127}}%
\def\nabla{{\Greekmath 0272}}
\def\FindBoldGroup{%
   {\setbox0=\hbox{$\mathbf{x\global\edef\theboldgroup{\the\mathgroup}}$}}%
}
\def\Greekmath#1#2#3#4{%
    \if@compatibility
        \ifnum\mathgroup=\symbold
           \mathchoice{\mbox{\boldmath$\displaystyle\mathchar"#1#2#3#4$}}%
                      {\mbox{\boldmath$\textstyle\mathchar"#1#2#3#4$}}%
                      {\mbox{\boldmath$\scriptstyle\mathchar"#1#2#3#4$}}%
                      {\mbox{\boldmath$\scriptscriptstyle\mathchar"#1#2#3#4$}}%
        \else
           \mathchar"#1#2#3#4%
        \fi 
    \else 
        \FindBoldGroup
        \ifnum\mathgroup=\theboldgroup 
           \mathchoice{\mbox{\boldmath$\displaystyle\mathchar"#1#2#3#4$}}%
                      {\mbox{\boldmath$\textstyle\mathchar"#1#2#3#4$}}%
                      {\mbox{\boldmath$\scriptstyle\mathchar"#1#2#3#4$}}%
                      {\mbox{\boldmath$\scriptscriptstyle\mathchar"#1#2#3#4$}}%
        \else
           \mathchar"#1#2#3#4%
        \fi     	    
	  \fi}
\newif\ifGreekBold  \GreekBoldfalse
\let\SAVEPBF=\pbf
\def\pbf{\GreekBoldtrue\SAVEPBF}%
  \newcounter{equationnumber}  
  \def\mathletters{%
     \addtocounter{equation}{1}
     \edef\@currentlabel{\theequation}%
     \setcounter{equationnumber}{\c@equation}
     \setcounter{equation}{0}%
     \edef\theequation{\@currentlabel\noexpand\alph{equation}}%
  }
    \def\BibTeX{{\rm B\kern-.05em{\sc i\kern-.025em b}\kern-.08em
                 T\kern-.1667em\lower.7ex\hbox{E}\kern-.125emX}}}{}%
\def\AmS{{\protect\usefont{OMS}{cmsy}{m}{n}%
                A\kern-.1667em\lower.5ex\hbox{M}\kern-.125emS}}}{}%
\def\@@eqncr{\let\@tempa\relax
    \ifcase\@eqcnt \def\@tempa{& & &}\or \def\@tempa{& &}%
      \else \def\@tempa{&}\fi
     \@tempa
     \if@eqnsw
        \iftag@
           \@taggnum
        \else
           \@eqnnum\stepcounter{equation}%
        \fi
     \fi
     \global\tag@false
     \global\@eqnswtrue
     \global\@eqcnt\z@\cr}
\def\TCItag{\@ifnextchar*{\@TCItagstar}{\@TCItag}}
\def\@TCItag#1{%
    \global\tag@true
    \global\def\@taggnum{(#1)}%
    \global\def\@currentlabel{#1}}
\def\@TCItagstar*#1{%
    \global\tag@true
    \global\def\@taggnum{#1}%
    \global\def\@currentlabel{#1}}
\def\tint{\msi@int\textstyle\int}%
\def\tiint{\msi@int\textstyle\iint}%
\def\tiiint{\msi@int\textstyle\iiint}%
\def\tiiiint{\msi@int\textstyle\iiiint}%
\def\tidotsint{\msi@int\textstyle\idotsint}%
\def\toint{\msi@int\textstyle\oint}%
\newtoks\temptoksa
\newtoks\temptoksb
\newtoks\temptoksc
\def\msi@int#1#2{%
 \def\@temp{{#1#2\the\temptoksc_{\the\temptoksa}^{\the\temptoksb}}}%
 \futurelet\@nextcs
 \@int
}
\def\@int{%
   \ifx\@nextcs\limits
      \typeout{Found limits}%
      \temptoksc={\limits}%
	  \let\@next\@intgobble%
   \else\ifx\@nextcs\nolimits
      \typeout{Found nolimits}%
      \temptoksc={\nolimits}%
	  \let\@next\@intgobble%
   \else
      \typeout{Did not find limits or no limits}%
      \temptoksc={}%
      \let\@next\msi@limits%
   \fi\fi
   \@next   
}%
\def\@intgobble#1{%
   \typeout{arg is #1}%
   \msi@limits
}
\def\msi@limits{%
   \temptoksa={}%
   \temptoksb={}%
   \@ifnextchar_{\@limitsa}{\@limitsb}%
}
\def\@limitsa_#1{%
   \temptoksa={#1}%
   \@ifnextchar^{\@limitsc}{\@temp}%
}
\def\@limitsb{%
   \@ifnextchar^{\@limitsc}{\@temp}%
}
\def\@limitsc^#1{%
   \temptoksb={#1}%
   \@ifnextchar_{\@limitsd}{\@temp}%
}
\def\@limitsd_#1{%
   \temptoksa={#1}%
   \@temp
}
\def\dint{\msi@int\displaystyle\int}%
\def\diint{\msi@int\displaystyle\iint}%
\def\diiint{\msi@int\displaystyle\iiint}%
\def\diiiint{\msi@int\displaystyle\iiiint}%
\def\didotsint{\msi@int\displaystyle\idotsint}%
\def\doint{\msi@int\displaystyle\oint}%
\def\ExitTCILatex{\makeatother }
\if@compatibility\message{amsmath already loaded}\fi\aftergroup\ExitTCILatex}
\if@compatibility\message{amstex already loaded}\fi\aftergroup\ExitTCILatex}
\if@compatibility\message{amsgen already loaded}\fi\aftergroup\ExitTCILatex}
\let\DOTSI\relax
\def\RIfM@{\relax\ifmmode}%
\def\FN@{\futurelet\next}%
\def\iint{\DOTSI\intno@\tw@\FN@\ints@}%
\def\iiint{\DOTSI\intno@\thr@@\FN@\ints@}%
\def\iiiint{\DOTSI\intno@4 \FN@\ints@}%
\def\idotsint{\DOTSI\intno@\z@\FN@\ints@}%
\def\ints@{\findlimits@\ints@@}%
\newif\iflimtoken@
\newif\iflimits@
\def\findlimits@{\limtoken@true\ifx\next\limits\limits@true
 \else\ifx\next\nolimits\limits@false\else
 \limtoken@false\ifx\ilimits@\nolimits\limits@false\else
 \ifinner\limits@false\else\limits@true\fi\fi\fi\fi}%
\def\multint@{\int\ifnum\intno@=\z@\intdots@                          
 \else\intkern@\fi                                                    
 \ifnum\intno@>\tw@\int\intkern@\fi                                   
 \ifnum\intno@>\thr@@\int\intkern@\fi                                 
 \int}
\def\multintlimits@{\intop\ifnum\intno@=\z@\intdots@\else\intkern@\fi
 \ifnum\intno@>\tw@\intop\intkern@\fi
 \ifnum\intno@>\thr@@\intop\intkern@\fi\intop}%
\def\intic@{%
    \mathchoice{\hskip.5em}{\hskip.4em}{\hskip.4em}{\hskip.4em}}%
\def\negintic@{\mathchoice
 {\hskip-.5em}{\hskip-.4em}{\hskip-.4em}{\hskip-.4em}}%
\def\ints@@{\iflimtoken@                                              
 \def\ints@@@{\iflimits@\negintic@
   \mathop{\intic@\multintlimits@}\limits                             
  \else\multint@\nolimits\fi                                          
  \eat@}
 \else                                                                
 \def\ints@@@{\iflimits@\negintic@
  \mathop{\intic@\multintlimits@}\limits\else
  \multint@\nolimits\fi}\fi\ints@@@}%
\def\intkern@{\mathchoice{\!\!\!}{\!\!}{\!\!}{\!\!}}%
\def\plaincdots@{\mathinner{\cdotp\cdotp\cdotp}}%
\def\intdots@{\mathchoice{\plaincdots@}%
 {{\cdotp}\mkern1.5mu{\cdotp}\mkern1.5mu{\cdotp}}%
 {{\cdotp}\mkern1mu{\cdotp}\mkern1mu{\cdotp}}%
 {{\cdotp}\mkern1mu{\cdotp}\mkern1mu{\cdotp}}}%
\def\RIfM@{\relax\protect\ifmmode}
\def\text{\RIfM@\expandafter\text@\else\expandafter\mbox\fi}
\let\nfss@text\text
\def\text@#1{\mathchoice
   {\textdef@\displaystyle\f@size{#1}}%
   {\textdef@\textstyle\tf@size{\firstchoice@false #1}}%
   {\textdef@\textstyle\sf@size{\firstchoice@false #1}}%
   {\textdef@\textstyle \ssf@size{\firstchoice@false #1}}%
   \glb@settings}
\def\textdef@#1#2#3{\hbox{{%
                    \everymath{#1}%
                    \let\f@size#2\selectfont
                    #3}}}
\newif\iffirstchoice@
\def\Let@{\relax\iffalse{\fi\let\\=\cr\iffalse}\fi}%
\def\vspace@{\def\vspace##1{\crcr\noalign{\vskip##1\relax}}}%
\def\multilimits@{\bgroup\vspace@\Let@
 \baselineskip\fontdimen10 \scriptfont\tw@
 \advance\baselineskip\fontdimen12 \scriptfont\tw@
 \lineskip\thr@@\fontdimen8 \scriptfont\thr@@
 \lineskiplimit\lineskip
 \vbox\bgroup\ialign\bgroup\hfil$\m@th\scriptstyle{##}$\hfil\crcr}%
\def\Sb{_\multilimits@}%
\def\endSb{\crcr\egroup\egroup\egroup}%
\def\Sp{^\multilimits@}%
\newdimen\ex@
\def\rightarrowfill@#1{$#1\m@th\mathord-\mkern-6mu\cleaders
 \hbox{$#1\mkern-2mu\mathord-\mkern-2mu$}\hfill
 \mkern-6mu\mathord\rightarrow$}%
\def\leftarrowfill@#1{$#1\m@th\mathord\leftarrow\mkern-6mu\cleaders
 \hbox{$#1\mkern-2mu\mathord-\mkern-2mu$}\hfill\mkern-6mu\mathord-$}%
\def\leftrightarrowfill@#1{$#1\m@th\mathord\leftarrow
\mkern-6mu\cleaders
 \hbox{$#1\mkern-2mu\mathord-\mkern-2mu$}\hfill
 \mkern-6mu\mathord\rightarrow$}%
\def\overrightarrow{\mathpalette\overrightarrow@}%
\def\overrightarrow@#1#2{\vbox{\ialign{##\crcr\rightarrowfill@#1\crcr
 \noalign{\kern-\ex@\nointerlineskip}$\m@th\hfil#1#2\hfil$\crcr}}}%
\def\overleftarrow{\mathpalette\overleftarrow@}%
\def\overleftarrow@#1#2{\vbox{\ialign{##\crcr\leftarrowfill@#1\crcr
 \noalign{\kern-\ex@\nointerlineskip}$\m@th\hfil#1#2\hfil$\crcr}}}%
\def\overleftrightarrow{\mathpalette\overleftrightarrow@}%
\def\overleftrightarrow@#1#2{\vbox{\ialign{##\crcr
   \leftrightarrowfill@#1\crcr
 \noalign{\kern-\ex@\nointerlineskip}$\m@th\hfil#1#2\hfil$\crcr}}}%
\def\underrightarrow{\mathpalette\underrightarrow@}%
\def\underrightarrow@#1#2{\vtop{\ialign{##\crcr$\m@th\hfil#1#2\hfil
  $\crcr\noalign{\nointerlineskip}\rightarrowfill@#1\crcr}}}%
\def\underleftarrow{\mathpalette\underleftarrow@}%
\def\underleftarrow@#1#2{\vtop{\ialign{##\crcr$\m@th\hfil#1#2\hfil
  $\crcr\noalign{\nointerlineskip}\leftarrowfill@#1\crcr}}}%
\def\underleftrightarrow{\mathpalette\underleftrightarrow@}%
\def\underleftrightarrow@#1#2{\vtop{\ialign{##\crcr$\m@th
  \hfil#1#2\hfil$\crcr
 \noalign{\nointerlineskip}\leftrightarrowfill@#1\crcr}}}%
\def\qopnamewl@#1{\mathop{\operator@font#1}\nlimits@}
\let\nlimits@\displaylimits
\def\setboxz@h{\setbox\z@\hbox}
\def\varlim@#1#2{\mathop{\vtop{\ialign{##\crcr
 \hfil$#1\m@th\operator@font lim$\hfil\crcr
 \noalign{\nointerlineskip}#2#1\crcr
 \noalign{\nointerlineskip\kern-\ex@}\crcr}}}}
 \def\rightarrowfill@#1{\m@th\setboxz@h{$#1-$}\ht\z@\z@
  $#1\copy\z@\mkern-6mu\cleaders
  \hbox{$#1\mkern-2mu\box\z@\mkern-2mu$}\hfill
  \mkern-6mu\mathord\rightarrow$}
\def\leftarrowfill@#1{\m@th\setboxz@h{$#1-$}\ht\z@\z@
  $#1\mathord\leftarrow\mkern-6mu\cleaders
  \hbox{$#1\mkern-2mu\copy\z@\mkern-2mu$}\hfill
  \mkern-6mu\box\z@$}
\def\projlim{\qopnamewl@{proj\,lim}}
\def\injlim{\qopnamewl@{inj\,lim}}
\def\varinjlim{\mathpalette\varlim@\rightarrowfill@}
\def\varprojlim{\mathpalette\varlim@\leftarrowfill@}
\def\varliminf{\mathpalette\varliminf@{}}
\def\varliminf@#1{\mathop{\underline{\vrule\@depth.2\ex@\@width\z@
   \hbox{$#1\m@th\operator@font lim$}}}}
\def\varlimsup{\mathpalette\varlimsup@{}}
\def\varlimsup@#1{\mathop{\overline
  {\hbox{$#1\m@th\operator@font lim$}}}}
\def\align{\@verbatim \frenchspacing\@vobeyspaces \@alignverbatim
You are using the "align" environment in a style in which it is not defined.}
\let\csname endalign*\endcsname =\endtrivlist
\def\alignat{\@verbatim \frenchspacing\@vobeyspaces \@alignatverbatim
You are using the "alignat" environment in a style in which it is not defined.}
\let\csname endalignat*\endcsname =\endtrivlist
\def\xalignat{\@verbatim \frenchspacing\@vobeyspaces \@xalignatverbatim
You are using the "xalignat" environment in a style in which it is not defined.}
\let\csname endxalignat*\endcsname =\endtrivlist
\def\gather{\@verbatim \frenchspacing\@vobeyspaces \@gatherverbatim
You are using the "gather" environment in a style in which it is not defined.}
\let\csname endgather*\endcsname =\endtrivlist
\def\multiline{\@verbatim \frenchspacing\@vobeyspaces \@multilineverbatim
You are using the "multiline" environment in a style in which it is not defined.}
\let\csname endmultiline*\endcsname =\endtrivlist
\def\arrax{\@verbatim \frenchspacing\@vobeyspaces \@arraxverbatim
You are using a type of "array" construct that is only allowed in AmS-LaTeX.}
\def\tabulax{\@verbatim \frenchspacing\@vobeyspaces \@tabulaxverbatim
You are using a type of "tabular" construct that is only allowed in AmS-LaTeX.}
\let\csname endarrax*\endcsname =\endtrivlist
\let\csname endtabulax*\endcsname =\endtrivlist
 \def\endequation{%
     \ifmmode\ifinner 
      \iftag@
        \addtocounter{equation}{-1} 
        $\hfil
           \displaywidth\linewidth\@taggnum\egroup \endtrivlist
        \global\tag@false
        \global\@ignoretrue   
      \else
        $\hfil
           \displaywidth\linewidth\@eqnnum\egroup \endtrivlist
        \global\tag@false
        \global\@ignoretrue 
      \fi
     \else   
      \iftag@
        \addtocounter{equation}{-1} 
        \eqno \hbox{\@taggnum}
        \global\tag@false%
        $$\global\@ignoretrue
      \else
        \eqno \hbox{\@eqnnum}
        $$\global\@ignoretrue
      \fi
     \fi\fi
 } 
 \newif\iftag@ \tag@false
 \def\TCItag{\@ifnextchar*{\@TCItagstar}{\@TCItag}}
 \def\@TCItag#1{%
     \global\tag@true
     \global\def\@taggnum{(#1)}%
     \global\def\@currentlabel{#1}}
 \def\@TCItagstar*#1{%
     \global\tag@true
     \global\def\@taggnum{#1}%
     \global\def\@currentlabel{#1}}
     \def\tag{\@ifnextchar*{\@tagstar}{\@tag}}
     \def\@tag#1{%
         \global\tag@true
         \global\def\@taggnum{(#1)}}
     \def\@tagstar*#1{%
         \global\tag@true
         \global\def\@taggnum{#1}}
\begin{document}
\title{Catastrophe Theory for $\Gamma$-invariant Unfoldings with Applications
to Quantum Many-body Theory}
\author{K. Rodrigues Alves}
\maketitle

\begin{abstract}
\bigskip{}

The theory of singularities and its broad ramifications, especially catastrophe theory, have found fertile ground in some areas of physics (e.g., caustics, wave optics) for their applications. In the context of quantum many-body theory, however, their results, despite being useful, are not generally known by the scientific community and often require a non-trivial adaptation, mainly due to the effects of symmetries in the associated physical system, which are not encompassed by the original theory. In this article, we provide an extension of the main results of Ren\'{e} Thom's catastrophe theory for the case of germs and unfoldings possessing special symmetries. In a more mathematically precise language, we provide a proof of the determinacy theorems for germs that are invariant under the action of an arbitrary compact Lie group, and of the transversality and stability theorems for the case of invariant unfoldings. The results obtained can be seen as an extension and adaptation of the works of {[}7{]} and {[}8{]} on singularity theory of $\mathbb{R}^{n}$ to $\mathbb{R}^{n}$ mappings to the particular scenario of catastrophe theory. Finally, we also provide a classification theorem for unfoldings invariant under the action of $\mathbb{Z}_{2}$, and present some of the possible applications of the theory to the study of phase transitions in quantum many-body systems.
\end{abstract}
\tableofcontents{}\newpage{}

\section{Introduction}

During the years 1950s, the theory of classification of singularities
of smooth mappings has seen an outstanding progress. Such progress was mainly
a consequence of the seminal work developed by H. Whitney, who in 1952 published an article
on the classification of mappings from $\mathbb{R}^{2}$ to $\mathbb{R}^{2}$
\cite{whitney1955singularities}, that laid the basis for many subsequent
works on the study of singularities of mappings. One of these works,
in particular, carried out by the mathematician Ren\'e Thom \cite{thom1974stabilite},
now receives the name of catastrophe theory. To first understand what
kind of classification of singularities Thom's work is concerned with,
we may begin by looking at mappings of functions $f$ from $U\subset\mathbb{R}^{n}$
to $\mathbb{R}$. For a \textit{regular point} $x_{0}\in U$ of $f$,
i.e., a point $x_{0}$ such that the gradient $\Vec{\nabla}f(x_{0})$
is non-degenerate, a consequence of the implicit function theorem
is that there exists a diffeomorphism $\phi$ such that, in its coordinates,
$f$ is given by 
\begin{equation}
(f\circ\phi)(x)=x_{1}.
\end{equation}
If, instead, $x_{0}$ is not a regular point but rather a \textit{non-degenerate
critical point}, i.e., a point where the gradient $\Vec{\nabla}f(x_{0})$
is degenerate but the Hessian $\operatorname{Hess}(f)(x_{0})$ is not, then similarly
to the previous case, by the so-called Morse Lemma it follows that
there exists again a diffeomorphism $\phi$ such that, around $x_{0}$, $f$ is
given by 
\begin{equation}
(f\circ\phi)(x)=\pm x_{1}^{2}+\cdots\pm x_{n}^{2}.
\end{equation}
The qualitative analysis of the behaviour of $f$ around $x_{0}$
can be, therefore, easily done in these two cases, since the diffeomorphism
$\phi$ gives us a great deal of simplification. Any function
near a regular or a non-degenerate critical point follows into one
of the examples above. However, the situation 
changes if $x_{0}$ is now a \textit{degenerate critical point}; a
critical point with a degenerate Hessian. Then, the behavior
of $f$ around $x_{0}$ ceases to be trivial, and we need to enter
more deeply in the theory of singularities in order to understand it. 


The brief analysis described above can be restated in a more general and abstract way, as in the following consideration: we start with the space of
smooth real-valued functions on $\mathbb{R}^{n}$, which we simply
denote by $S$, and an equivalence relation associated with the (infinite-dimensional) Lie group $G$ of smooth diffeomorphisms on $\mathbb{R}^{n}$
preserving the origin, defined by $f\sim g$ iff there exists
a $\phi\in G$ such that $f=g\circ\phi$. The equivalence relation
is supposed to preserve the properties that we are interested in;
therefore, given a function $f\in S$, one of the goals is to find
a function $g$, belonging to the same equivalence class of $f$, such that $g$ is an element we
can explicitly compute the relevant properties of our system, which
in turn will still hold true for $f$. Another related important concept
is that of \textit{stability}: $f$ is stable when there exists a
neighborhood of $f$ (in a suitable topology) where every point is equivalent to $f$ itself.
A remarkable discovery of singularity theory is that we can learn
about the stability of $f$ by analyzing the differential of the action
of $G$ into $S$, i.e., the differential $Dr_{f}$ of the mapping
$r_{f}:G\rightarrow S$ given by $r_{f}(\phi)=f\circ\phi$. It follows
that $f$ is stable if and only if $Dr_{f}$ is surjective. And this
happens if and only if $\vec{\nabla}f$ or $\operatorname{Hess}(f)$ are non-degenerate;
in these scenarios, $f$ falls into one of the two cases shown above.


Suppose now that $Dr_{f}$ is not surjective, i.e., that $f$ is not
a stable point. Then, we can consider a submanifold $F$ passing through
$f$, such that its tangent space $T_F$ at $f$ is a complement of the image of the differential $Dr_{f}$. Note that $F$ can also be thought of as a parametric family of functions in $S$. It follows that although $f$ is not stable, $F$, by intersecting
transversally the manifold generated by $r_{f}$, is itself stable -- for an
associated equivalence relation. In fact, this relationship between transversality and stability is one of the main theorems
about catastrophe theory. The number of extra dimensions that the
differential $Dr_{f}$ fails to generate is called the \textit{codimension}
of the function. This is what is in the core of catastrophe theory:
to show that such transversal families of functions are stable, and
equivalent to each other. Moreover, for functions with codimension
smaller than or equal to $6$, it can also be shown that there exists
only a finite number of equivalence classes, which in turn implies
the existence of only a finite number of equivalence classes of parametric
families of functions; a result that is famously known as \textit{Thom's Classification
Theorem}. The polynomial representatives of these equivalence classes
are usually called elementary catastrophes in the literature. \footnote{The first list of the elementary catastrophes went only up to codimension
$4$. Later, however, extensions to the initial classification were
provided, and from codimension $7$ onwards it was shown that the
total number of functions in the classification theorem already becomes
infinite. To the author's knowledge, the largest classification table
is exposed in \cite{siersma1974classification}.} There are, however, certain limitations of the original theory regarding
families of functions $F$ possessing special symmetries. The symmetry
of the functions in $F$ can prevent its tangent space from generating
the missing complementary vectors, and hence, it would not be stable.
However, if we in turn restrict $G$ to only diffeomorphisms that
preserve the desired symmetry (also called \textit{equivariant} diffeomorphisms), and rescrict the
function space $S$ to just functions that possess the assumed symmetry, thus
limiting the perturbations to only symmetric perturbations, one may
then recover similar results for the symmetric family $F$. Such extensions
of the original theory of singularities has been studied in a similar
albeit different context by \cite{meyer1986singularities} for some
kinds of symmetry, and finally by \cite{damon1984unfolding} in more
generality. Their studies, however, did not focus on catastrophe theory,
but rather to general mappings $F:\mathbb{R}^{n}\rightarrow\mathbb{R}^{k}$, with a different equivalence relation
and also a different symmetry: rather than being \textit{invariant} under a
symmetry group, they are what is called \textit{equivariant}. Here, we intend
to present a concise and complete exposure of these results adapted
to the context of catastrophe theory.

Thom's original goal for developing catastrophe theory, introduced
to the scientific community in his book \cite{thom1974stabilite},
was to provide a general theory for classifying the behaviour of discontinuous
phenomena coming from a continuous parameter change. In his setup,
given a system with $r$ control parameters, the "state" of the
system (which will depend on the values of the parameters) is described
by the minima $x_{0}\in\mathbb{R}^{n}$ of a function $F:(x,u)\mapsto F(x,u)\in\mathbb{R}$,
i.e., the set of possible states of the system for some fixed value of $u\in \mathbb{R}^r$ are described by the
points of the set 
\begin{equation}
S_{u}=\{x_{0}\in\mathbb{R}^{n}\;|\;F(x_{0},u)=\min_{x}F(x,u)\},\;\;u\in\mathbb{R}^{r}.
\end{equation}
As the parameter $u$ moves, so do the points in $S_{u}$, possibly
leading to an eventual sudden jump in the states of the system, i.e.,
a catastrophe. Morover, as mentioned before, depending on the tangent space generated by $F$ at $u=0$, it is also possible to show that there
is only a finite number of equivalence classes of functions that $F$ must belong to; a result known as Thom's Classification Theorem. Thom's works, together with Zeeman's, had a great impact at
the time in areas outside of exact sciences, because of their claims
of being able to correctly model, with only a finite number of functions, an incredible amount of different
phenomena: from prison uprisings to a dog's aggressiveness. After
this initial excitement, however, a stronger criticism came \cite{guckenheimer1978catastrophe},
questioning the validity of modelling real-world phenomena in such
a way, and at last the theory fell into disuse outside of more mathematical
areas.

We, in particular, avoid this criticism here by considering the application of catastrophe
theory not as a generic theory for modelling discontinuous processes,
but as a direct application of singularity theory to the study of
phase transitions in classical and quantum mechanics, where the function
$F$ does not appear as an a priori consideration of the physical
system, but as a direct consequence of the proposed model. The results
coming from catastrophe theory have relevant contributions, especially
in proving the existence of phase transitions and the stability of
phase diagrams. As an example, we may consider the BCS model of superconductivity.
In the so-called strong-coupling limit, and disregarding the kinetic
energy of the electrons in the lattice, it can be shown \cite{bru2010effect}
that the thermodynamical pressure of the system converges, in the
infinite-volume limit, to 
\begin{equation}
P(u_{1},u_{2})=-\inf_{c\in\mathbb{R}}F(c,u_{1},u_{2}),\;\;F(c,u_{1},u_{2})=u_{1}c^{2}-\ln\left(1+e^{-u_{2}}\cosh\left(\sqrt{u_{2}^{2}+u_{1}^{2}c^{2}}\right)\right),
\end{equation}
where the parameters $u_{1},u_{2}$ are associated with the strengths
of the electron-electron interactions in our system, and the variable
$c$ is associated with different phases of our system. By a suitable
change of coordinates (that in particular must not mix the parameters
and the variable), one may show that the function $F$ is \textit{equivalent}
(in a suitable sense, via diffeomorphisms that preserve its minima
and critical points, as we shall expose) to 
\begin{equation}
\tilde{F}(c,u_{1},u_{2})=c^{6}+u_{1}c^{2}+u_{2}c^{4},
\end{equation}
and hence, its set of bifurcations of states can be easily studied
by the analysis of the above polynomial function, giving rise to a
phase diagram of the system.

It follows, however, that the above example does not fall into the
original classification proposed by Thom; and the reason behind it
is the existence of symmetries in the family $F$. The symmetry that
plays a role here is the parity symmetry of $F(\cdot,u_{1},u_{2})$,
which is associated with the $U(1)$ Gauge symmetry of the respective
Hamiltonian of the quantum system. Families of functions with some
special symmetry in its variables are in general not stable, since
even an arbitrarily small non-symmetric perturbation can break the
symmetry of the function and, as a consequence, generate different
singularities and catastrophes that were not present in the original
symmetric case. However, they can be considered stable, in a slightly
different sense, if we restrict the perturbations only to those who
also possess the same symmetry of the functions in consideration.
Then, we may recover the standard results of catastrophe theory, but
for functions possessing a special symmetry, which are abundant in
the context of physics. This is thus the aim of this paper, to expose
to a broader audience the features of catastrophe theory that can
be explored in a mathematically precise way, in contexts encountered
in quantum statistical mechanics, where the family $F$ has a special
symmetry, and hence the original theory cannot be applied. To the author's
knowledge, there is no other work in quantum statistical mechanics
that uses this area of singularity theory as a powerful mathematical
tool to extract relevant qualitative behaviours of the system. In
a further work, we aim at using the results exposed here to prove
the existence and stability of phase transitions to a large class
of mean-field models for fermionic systems.

The paper is divided as follows: In the second chapter, we prove the
so-called finite determinacy theorem (Theorem 2.3) for germs with
special symmetries. 
In the third chapter we show the transversality and stability theorems
(Theorems 3.6 and 3.7, respectively) for families of functions possessing symmetries, and finally in chapter $4$ we present
a classification theorem (Theorem 4.4) for the special case of $\mathbb{Z}_{2}$-symmetric
families of functions.

\section{$\Gamma$-invariant germs}

\subsection{Basic definitions}

Here we start with the basic definitions necessary for the development of the
results. We note first that catastrophe theory deals with singularities
\textit{locally} i.e., in a neighborhood of a point. Hence, the natural
language for the theory is the language of germs. We introduce now
certain germ spaces that will be used throughout the text. We keep
a similar notation to the one of \cite{damon1984unfolding}.

\begin{defn}\label{defbasic} Let $U\subset\mathbb{R}^{n}$ be an open neighborhood
of $0$. We define:
\begin{enumerate}[label=\roman*.)]
\item $\mathcal{E}_{n}$ as the set of all germs at $0$ of functions in
$C^{\infty}(U,\mathbb{R})$, and $\vec{\mathcal{E}}_{n,m}$ as the set
of all germs at $0$ in $C^{\infty}(U,\mathbb{R}^{m})$,
\item $\mathcal{M}^k_{n}$ as the set of all germs at $0$ of functions in
$C^{\infty}(U,\mathbb{R})$ whose derivatives up to order $k-1$
vanish at $0$, and $\vec{\mathcal{M}}_{n,m}^{k}$ as the set
of all germs at $0$ in $C^{\infty}(U,\mathbb{R}^{m})$ whose derivatives up to order $k-1$
vanish at $0$,
\item $J_{n}^{k}$ as the set of all germs of polynomials in $C^{\infty}(U,\mathbb{R})$
of order up to $k$, and $J_{0,n}^{k}\subset J_{n}^{k}$ as the set
of germs of polynomials in $J_{n}^{k}$ with no constant term. 
\end{enumerate}
\end{defn}

\begin{remark}
We recall that, for an open $U\subset\mathbb{R}^{n}$ and a function $f\in C^{\infty}(U,\mathbb{R})$ the germ of $f$ at $p\in U$ is the equivalence class of all functions $g\in C^{\infty}(U,\mathbb{R})$ such that $g=f$ for some neighborhood $V\subset U$ of $p$.
\end{remark}

In order to include the symmetries in the germs, we, for now on, let $\Gamma$ be a compact Lie group, acting orthogonally on $\mathbb{R}^{n}$. Since we will only be interested in the action of $\Gamma$, to simplify the notation, the matrix representative in $GL(n)$ of any element $\gamma\in \Gamma$ will also be denoted by the same letter $\gamma$.

\begin{defn}\label{defsym} Let $\Gamma$ be a compact Lie group acting orthogonally on $\mathbb{R}^{n}$. A germ $f\in\mathcal{E}_{n}$ satisfying 
\[
f(\gamma x)=f(x)\quad\text{ for all }\gamma\in\Gamma
\]
will be called \textit{$\Gamma$-invariant}. A germ $\phi\in\vec{\mathcal{E}}_{n}$
satisfying

\[
\phi(\gamma x)=\gamma\phi(x)\quad\text{ for all }\gamma\in\Gamma
\]
will be called \textit{$\Gamma$-equivariant}. The set of all $\Gamma$-invariant
and $\Gamma$-equivariant germs will be denoted, respectively, by
$\mathcal{E}_{n}(\Gamma)$ and $\vec{\mathcal{E}}_{n}(\Gamma)$. In
particular, we denote the set of $\Gamma$-invariant polynomials by $P_n(\Gamma)\subset \mathcal{E}_n(\Gamma)$ and the set of $\Gamma$-equivariant germ diffeomorphisms preserving the origin
by $L_{n}(\Gamma)\subset \vec{\mathcal{E}}_n(\Gamma)$. \end{defn}

\begin{remark} One of the reasons to restrict the theory to compact Lie
groups is to be able to apply in that case the so-called Hilbert basis theorem, which states that the sub-algebra
$P_n(\Gamma)$ of $\Gamma$-invariant polynomial germs is generated (as an algebra)
by finitely many $\Gamma$-invariant polynomials. A similar result also states that $\vec{\mathcal{E}}_n(\Gamma)$ is generated, as an $\mathcal{E}_n(\Gamma)$-module, by finitely many homogeneous
$\Gamma$-equivariant polynomials.
\end{remark}

The following definition gives us the relevant equivalence class of germs in $\mathcal{E}_n(\Gamma)$, which will in turn give rise to the classification theorem.

\begin{defn}\label{defeq1}
Given $f,g\in\mathcal{E}_{n}(\Gamma)$, if there exists
some $\phi\in L_{n}(\Gamma)$ such that $g=f\circ\phi$, we say that
$f$ and $g$ are $\Gamma$-equivalent.
\end{defn}

\subsection{The tangent space of $\Gamma$-invariant germs}

We now analyze the tangent space of the orbit of a germ $f$ with
respect to the action of $L_n(\Gamma)$ (recall Definition \ref{defsym}). As stated in the introduction, the tangent space generated by the orbit of $f$ with respect to $L_n(\Gamma)$ is associated with the stability of our germ $f$, as we shall see later. Hence, let $f\in\mathcal{E}_{n}(\Gamma)$.
The orbit of $f$ with respect to the action of the (infinite-dimensional
Lie) group $L_{n}(\Gamma)$ is denoted by 
\[
\mathcal{O}_{f}\doteq\{f\circ\phi\;|\;\phi\in L_{n}(\Gamma)\}.
\]
Let $\mathbf{1}$ be the identity of $L_{n}(\Gamma)$. A tangent vector
at $\mathbf{1}$ is then of the type $(\mathbf{1}+t\phi)(x)$, where
$\phi\in\vec{\mathcal{M}}_{n}(\Gamma)$, since one must have $\phi(0)=0$.
Hence, an element of the tangent space $T_{\mathcal{O}_f}$ of $\mathcal{O}_{f}$
is given by

\[
\frac{df(\mathbf{1}+t\phi)(x)}{dt}\bigg\vert_{t=0}=\sum_{i=1}^{n}\phi_{i}(x)\frac{\partial f(x)}{\partial x_{i}}.
\]
The tangent space $T_{\mathcal{O}_f}$ is consequently given by

\[
T_{\mathcal{O}_f}=\{\phi\cdot\vec{\nabla}f\;|\;\phi=(\phi_{1},\dots,\phi_{n})\in\vec{\mathcal{M}}_{n}(\Gamma)\}.
\]
We formalize it in the next definition:

\begin{defn}\label{deftg}
Let $f\in \mathcal{E}_n(\Gamma)$. Its tangent space with respect to the action of $L_n(\Gamma)$ is given by
\begin{equation}
T_{\mathcal{O}_f}=\{\phi\cdot\vec{\nabla}f\;|\;\phi=(\phi_{1},\dots,\phi_{n})\in\vec{\mathcal{M}}_{n}(\Gamma)\}.
\end{equation}
We also define the \textit{extended} tangent space by
\begin{equation}
\tgsp_{\mathcal{O}_f}=\{\phi\cdot\vec{\nabla}f\;|\;\phi=(\phi_{1},\dots,\phi_{n})\in\vec{\mathcal{E}}_{n}(\Gamma)\}.
\end{equation}
\end{defn}

\begin{prop} For any $f\in\mathcal{E}_{n}(\Gamma)$, $T_{\mathcal{O}_f}\subset \mathcal{M}_{n}(\Gamma)$
and $\tgsp_{\mathcal{O}_f}\subset \mathcal{E}_{n}(\Gamma)$.
\end{prop}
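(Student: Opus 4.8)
The plan is to reduce both inclusions to a single observation: \emph{the Euclidean gradient of a $\Gamma$-invariant germ is itself $\Gamma$-equivariant}. Granting this, both halves follow at once, because the pointwise Euclidean pairing of two $\Gamma$-equivariant vector fields is $\Gamma$-invariant, and because elements of $\vec{\mathcal{M}}_{n}(\Gamma)$ vanish at the origin. Throughout, the hypothesis that $\Gamma$ acts orthogonally on $\mathbb{R}^{n}$ is what makes everything work.

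First I would establish the gradient observation. Fix $f\in\mathcal{E}_{n}(\Gamma)$ and $\gamma\in\Gamma$, and write $\gamma=(\gamma_{ij})$ for its matrix. Differentiating $f(\gamma x)=f(x)$ by the chain rule gives $\sum_{i}(\partial_{i}f)(\gamma x)\,\gamma_{ij}=(\partial_{j}f)(x)$ for each $j$, i.e. $\gamma^{T}(\vec{\nabla}f)(\gamma x)=(\vec{\nabla}f)(x)$ with $\vec{\nabla}f$ read as a column vector. Since $\gamma$ is orthogonal, $\gamma^{T}=\gamma^{-1}$, so $(\vec{\nabla}f)(\gamma x)=\gamma\,(\vec{\nabla}f)(x)$; that is, $\vec{\nabla}f\in\vec{\mathcal{E}}_{n}(\Gamma)$.

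Next I would treat the extended tangent space. Let $\phi=(\phi_{1},\dots,\phi_{n})\in\vec{\mathcal{E}}_{n}(\Gamma)$ and put $g\doteq\phi\cdot\vec{\nabla}f=\sum_{i}\phi_{i}\,\partial_{i}f$, which is a finite sum of products of smooth germs, hence lies in $\mathcal{E}_{n}$. For any $\gamma\in\Gamma$,
\[
g(\gamma x)=\langle\phi(\gamma x),(\vec{\nabla}f)(\gamma x)\rangle=\langle\gamma\phi(x),\gamma(\vec{\nabla}f)(x)\rangle=\langle\phi(x),(\vec{\nabla}f)(x)\rangle=g(x),
\]
using $\Gamma$-equivariance of $\phi$, the gradient observation, and orthogonality of $\gamma$ in the middle equality. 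Thus $g\in\mathcal{E}_{n}(\Gamma)$, which gives $\tgsp_{\mathcal{O}_f}\subset\mathcal{E}_{n}(\Gamma)$. For the first inclusion I would repeat the same computation, but now with $\phi\in\vec{\mathcal{M}}_{n}(\Gamma)$ so that $\phi(0)=0$; then in addition $g(0)=\langle\phi(0),(\vec{\nabla}f)(0)\rangle=0$, so $g\in\mathcal{M}_{n}$, and combined with the $\Gamma$-invariance already shown, $g\in\mathcal{M}_{n}(\Gamma)$. Hence $T_{\mathcal{O}_f}\subset\mathcal{M}_{n}(\Gamma)$.

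I do not expect a genuine obstacle: the argument is a short computation, and there is nothing to iterate or estimate. The only two points that deserve care are that orthogonality of the action is invoked twice — once to pass from $\gamma^{T}$ to $\gamma^{-1}$ in the equivariance of $\vec{\nabla}f$, and once to say $\gamma$ preserves the Euclidean inner product — and that $\mathcal{M}_{n}$ must be read as the ideal of germs vanishing at $0$, so that the evaluation $g(0)=0$ is exactly the membership condition one needs.
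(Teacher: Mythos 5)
Your proof is correct and follows essentially the same route as the paper: you derive the $\Gamma$-equivariance of $\vec{\nabla}f$ from the chain rule and the orthogonality of the action, and then use invariance of the Euclidean pairing under $\gamma$ to conclude that $\phi\cdot\vec{\nabla}f$ is $\Gamma$-invariant. Your explicit check that $g(0)=0$ when $\phi\in\vec{\mathcal{M}}_{n}(\Gamma)$ merely spells out what the paper dismisses as obvious, so there is no substantive difference.
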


\begin{proof} The inclusions  $T_{\mathcal{O}_f}\subset \mathcal{M}_{n}$ and $\tgsp_{\mathcal{O}_f}\subset \mathcal{E}_{n}$ are obvious. Furthermore, by the chain rule, note that
\begin{equation}
\frac{\partial(f(\gamma x))}{\partial x_{i}}=\sum_{j=1}^{n}\frac{\partial f(\gamma x)}{\partial x_{j}}\gamma_{j,i},
\end{equation}
where $\gamma_{i,j}$ are the matrix elements of $\gamma$ at row $i$ and column $j$. I.e., one has
\begin{equation}
\vec{\nabla}(f(\gamma x))=\gamma^{T}\vec{\nabla}f(\gamma x).
\end{equation}
Inverting the equation, making use of the fact that $\Gamma$ acts
orthogonally on $\mathbb{R}^{n}$, and that $f(\gamma x)=f(x)$, it
follows that
\begin{equation}
\vec{\nabla}f(\gamma x)=\gamma\vec{\nabla}f(x).
\end{equation}
Hence, if $\phi\in\vec{\mathcal{E}}_{x}(\Gamma)$, one has
\begin{equation}
\phi(\gamma x)\cdot\vec{\nabla}f(\gamma x)=((\gamma^{T}\gamma\phi(x))\cdot\vec{\nabla}f(x)=\phi(x)\cdot\vec{\nabla}f(x),
\end{equation}
i.e., $\phi\cdot\vec{\nabla}f$ is $\Gamma$-invariant. Hence, the propoition follows.
\end{proof}

\begin{remark} For a given family $F(x,u)$ with $f(x)=F(x,0)$,
the number of parameters $u$ that $F$ needs for it to be stable
is associated with the size of the tangent space $\tgsp_{\mathcal{O}_f}$, this is why $\tgsp_{\mathcal{O}_f}$ is
so important. We note here that the above tangent space is infinite-dimensional, and eventually
we will need to reduce it to a finite-dimensional manifold, by considering
the orbit in the jet space $J_{n}^{k}(\Gamma)$, where we then can
make use of standard results in the theory of euclidean manifolds.
\end{remark}

\subsection{Determinacy of $\Gamma$-invariant germs}

To apply theorems about transversality, coming from standard differential
geometry, it will be required for us to work on a finite-dimensional
manifold. For this, we prove the so-called determinacy theorem, which allows us to work on the jet space of functions up to some finite order
$k$.

\begin{defn} Let $f\in\mathcal{E}_{n}(\Gamma)$. Then, $f$ is said
to be $\Gamma$-$k$-determined if $f$ is $\Gamma$-equivalent to any
other $g\in\mathcal{E}_n(\Gamma)$ such that $j^{k}f=j^{k}g$, where $j^kf$ is the germ of the Taylor expansion of $f$ up to order $k$. Moreover, $f$ is said to be $\Gamma$-finitely-determined if it is
$\Gamma$-$k$-determined for some $k<\infty$.
\end{defn}

We note that if $f\in \mathcal{E}_n(\Gamma)$, then $j^kf\in \mathcal{E}_n(\Gamma)$ for any $k\in\mathbb{N}$. To see this, first recall that $j^kf$ is the unique polynomial that satisfies
\begin{equation*}
\lim_{\Vert x\Vert\rightarrow0}\frac{\vert f(x)-j^kf(x)\vert}{\Vert x\Vert^k}=0.
\end{equation*}
Since $\Gamma$ acts orthogonally on $\mathbb{R}^n$ and $f$ is $\Gamma$-invariant, one also has
\begin{equation*}
\lim_{\Vert x\Vert\rightarrow0}\frac{\vert f(x)-j^kf(\gamma x)\vert}{\Vert x\Vert^k}=0,\quad \text{for all }\gamma\in\Gamma.
\end{equation*}
In particular, by the uniqueness of $j^kf$, it follows that
\begin{equation*}
j^kf(x)=j^kf(\gamma x),\quad \text{for all }\gamma\in \Gamma,
\end{equation*}
i.e., $j^kf$ is $\Gamma$-invariant.

\begin{defn}\label{defcod} Let $f\in\mathcal{E}_{n}(\Gamma)$. The $\Gamma$-codimension
of $f$ is defined as
\begin{equation}
\cod_{\Gamma}(f)\doteq\dim_{\mathbb{R}}\frac{\mathcal{M}_{n}(\Gamma)}{\tgsp_{\mathcal{O}_f}}.
\end{equation}
\end{defn}

\begin{remark} Recall the definition of $\tgsp_{\mathcal{O}_f}$ (Equation (\ref{deftg})). The codimension
of a germ is related to the degree of degeneracy of its critical point,
and consequently to the number of parameters necessary to have an
unfolding of the germ stable under perturbations. In particular, only germs with finite $\Gamma$-codimension admit stable unfoldings, as we shall see later. To prove the main
theorem about the determinacy of $\Gamma$-invariant germs, which
states that finite codimension implies finite determinacy and vice-versa,
we need a generalization for the scenario of invariant germs of whats
is known as Mather's Lemma in the non-invariant case \cite{mather1968stability},
and such generalization is given in terms of \textit{DA-algebras}.
\end{remark}

\begin{defn} A \textit{differentiable algebra (DA-algebra)} is an
$\mathbb{R}$-algebra $A$, together with a surjective homomorphism
$\phi:\mathcal{E}_m\rightarrow A$, for some $m\in\mathbb{N}$. \end{defn}

\begin{theorem}[Schwarz theorem] Let $\Gamma$ be a compact Lie group acting
orthogonally on $\mathbb{R}^{n}$, and let $\rho_{1},\dots,\rho_{m}$
be the generators of $P_{n}(\Gamma)$. Then, it follows that, for
any germ $f\in\mathcal{E}_n(\Gamma)$, there exists some $h\in\mathcal{E}_m(\Gamma)$
such that
\begin{equation}
f(x)=h(\rho(x)),\;\text{ where }\;\rho(x)=(\rho_{1}(x),\dots,\rho_{k}(x)).
\end{equation}
\end{theorem}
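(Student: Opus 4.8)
The plan is to follow G.~Schwarz's classical argument for the smooth invariants of a compact group, which splits cleanly into an elementary \emph{formal} step and a genuinely hard \emph{analytic} step. For the formal step I would take $f\in\mathcal{E}_n(\Gamma)$ and look at its Taylor series $\hat f$ at $0$: since $\Gamma$ acts orthogonally and $f(\gamma x)=f(x)$, the series $\hat f$ is $\Gamma$-invariant degree by degree, so each homogeneous component lies in $P_n(\Gamma)$ and, by the Hilbert basis theorem already invoked above, is a polynomial in $\rho_1,\dots,\rho_m$; assembling these components yields a formal power series $\hat h$ in $m$ variables with $\hat f=\hat h\circ\rho$ as formal series. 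I would then note that the same reasoning applies near every point $x_0$ close to $0$: by the slice theorem (available because $\Gamma$ is compact) the local lifting problem at $x_0$ reduces to the formal problem at the origin for the isotropy representation of $\Gamma_{x_0}$ on the normal space, where the Hilbert basis theorem again applies. This produces a compatible family of formal solutions along the image $V=\rho(U)\subset\mathbb{R}^m$.

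The analytic step is to upgrade this compatible family of formal solutions to a single honest smooth germ $h$ with $f=h\circ\rho$, and here compactness of $\Gamma$ is used in force: the orbit map $\rho$ is proper and separates orbits, so an invariant $f$ is automatically constant on the fibres of $\rho$, and the image $V=\rho(U)$ is a closed semialgebraic subset of $\mathbb{R}^m$ satisfying a \L{}ojasiewicz inequality relative to its complement. Given these properties I would appeal to the composite-function machinery of Malgrange, Mather and Tougeron --- built on the Malgrange preparation and division theorems, \L{}ojasiewicz's inequalities, and Whitney's extension theorem on the semialgebraic set $V$ --- to convert the formally compatible data into a genuine smooth germ $h\in\mathcal{E}_m$ with $h\circ\rho=f$; since $\Gamma$ acts trivially on the target $\mathbb{R}^m$, this $h$ automatically lies in $\mathcal{E}_m(\Gamma)$.

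The hard part will be precisely this analytic step: establishing the \L{}ojasiewicz estimates for the image variety $V$ and marshalling the Malgrange--Mather--Tougeron preparation and division technology needed to realise the formal solution by a genuine smooth germ; by comparison the formal step and the slice-theoretic bookkeeping are routine. Since a self-contained treatment of this analytic core would take us well beyond the scope of the present paper, I would in practice take Schwarz's theorem as given and refer the reader to his original work for the complete proof.
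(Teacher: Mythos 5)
Your approach matches the paper's: the paper states Schwarz's theorem without proof, treating it as a classical external result (citing the literature on invariant theory), which is precisely what you propose to do in the end. Your two-step sketch --- the formal reduction via the Hilbert basis theorem and the slice theorem, followed by the hard analytic realization via Malgrange--Mather--Tougeron composite-function and division technology with \L{}ojasiewicz estimates on the semialgebraic image of the orbit map --- is an accurate summary of the known proof, so deferring its details to Schwarz's original work is entirely consistent with how the result is used here.
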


\begin{remark} We recall that for any compact Lie group $\Gamma$, the set $P_{n}(\Gamma)$ of $\Gamma$-invariant polynomials has finite
polynomial generators $\rho_1,\dots,\rho_m$, by Hilbert Basis theorem (see \cite{poenaru2006singularites}, Theorem 1). Therefore, by choosing the homomorphism $\rho^*:\mathcal{E}_m\rightarrow \mathcal{E}_n(\Gamma)$ defined as
\begin{equation}
\rho^*(h)=h\circ \rho,\quad\rho(x)=(\rho_{1}(x),\dots,\rho_{k}(x)),
\end{equation}
it clearly follows, from Schwarz theorem stated above, that the pair $(\mathcal{E}_{x}(\Gamma),\rho^*)$ is a DA-algebra.
\end{remark}

We now state the generalization of Mather's lemma, whose proof we omit here due to its technicality, but it can be found in \cite[Lemma 7.3]{damon1984unfolding}.

\begin{theorem}[Generalized Mather's Lemma]\label{mlemma} Let $\mathfrak{R}=(R,\phi)$
be a DA-algebra, $a:N\rightarrow M$ be a homomorphism of finitely
generated $R$-modules, $\mathfrak{M}_R$ the maximal ideal of $M$, and let $M_{0}$ be a submodule of finite codimension. Then:
\begin{enumerate}[label=\roman*.)]
\item There is a number $\omega\in \mathbb{N}$ such that

\[
M_{0}\subset a(N)+\mathfrak{M}_{R}^{\omega}\cdot M_{0}
\]
implies

\[
M_{0}\subset a(N).
\]

\item If $\dim_{\mathbb{R}}M/a(N)<\infty$, then there exists a number $\omega'\in \mathbb{N}$
such that

\[
\mathfrak{M}_{R}^{\omega'}\cdot M\subset a(N).
\]

\end{enumerate}
\end{theorem}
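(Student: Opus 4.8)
The plan is to strip the statement down to the only ring-theoretic features of a DA-algebra that are actually used, and then run two short Nakayama arguments. First I would record these features: since the structure map $\phi:\mathcal{E}_m\to R$ is a surjection of $\mathbb{R}$-algebras and $\mathcal{E}_m$ is a local ring whose maximal ideal $\mathcal{M}_m$ (the germs vanishing at $0$) equals $(x_1,\dots,x_m)\mathcal{E}_m$ by Hadamard's lemma, the ring $R$ is local with maximal ideal $\mathfrak{M}_R=\phi(\mathcal{M}_m)$, with residue field $R/\mathfrak{M}_R\cong\mathbb{R}$, and with $\mathfrak{M}_R$ finitely generated; in particular $\mathcal{E}_m/\mathcal{M}_m^{k}$, and hence its quotient $R/\mathfrak{M}_R^{k}$, is finite-dimensional over $\mathbb{R}$ for every $k$. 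From here on the argument uses only these facts together with Nakayama's lemma, which is available because $R$ is local.

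The one lemma on which everything rests is: \emph{a finitely generated $R$-module $P$ with $\dim_{\mathbb{R}}P<\infty$ is annihilated by $\mathfrak{M}_R^{k}$ for some $k$}. I would prove this by observing that the descending chain $P\supseteq\mathfrak{M}_RP\supseteq\mathfrak{M}_R^{2}P\supseteq\cdots$ of finite-dimensional subspaces must stabilize, say $\mathfrak{M}_R^{k}P=\mathfrak{M}_R^{k+1}P$, and then applying Nakayama to the finitely generated module $\mathfrak{M}_R^{k}P$ to conclude $\mathfrak{M}_R^{k}P=0$. Part (ii) is then immediate: apply the lemma to $P=M/a(N)$, which is finite-dimensional by hypothesis, to read off $\mathfrak{M}_R^{\omega'}M\subseteq a(N)$.

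For part (i) I would first establish the auxiliary fact that \emph{a submodule $M_0$ of finite codimension in a finitely generated $R$-module $M$ is finitely generated}: choose a surjection $\pi:R^{r}\twoheadrightarrow M$ and set $M_0'=\pi^{-1}(M_0)$, so that $R^{r}/M_0'\cong M/M_0$ is finite-dimensional; the lemma above gives $\mathfrak{M}_R^{k}R^{r}\subseteq M_0'$ for some $k$, and since $\mathfrak{M}_R^{k}R^{r}$ is finitely generated (finitely many degree-$k$ monomials in the $\phi(x_i)$ times the standard basis) and $M_0'/\mathfrak{M}_R^{k}R^{r}\subseteq R^{r}/\mathfrak{M}_R^{k}R^{r}$ is finite-dimensional hence finitely generated, $M_0'$ and therefore $M_0=\pi(M_0')$ is finitely generated. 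Now I would finish: projecting the hypothesis $M_0\subseteq a(N)+\mathfrak{M}_R^{\omega}M_0$ along the map $M\to M/a(N)=:Q$ gives $\overline{M_0}\subseteq\mathfrak{M}_R^{\omega}\overline{M_0}$, hence (for any $\omega\ge 1$) $\overline{M_0}=\mathfrak{M}_R\overline{M_0}$; Nakayama applied to the finitely generated module $\overline{M_0}$ forces $\overline{M_0}=0$, i.e.\ $M_0\subseteq a(N)$. In particular $\omega=1$ already works, so the existential "$\exists\,\omega$" in the statement is never the point.

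The main obstacle, and the single place where the finite-codimension hypothesis of (i) is genuinely needed, is the finite generation of $M_0$ over the \emph{non-Noetherian} ring $R$: without it one cannot apply Nakayama to $\overline{M_0}$, and one cannot instead argue via $\bigcap_{j}\mathfrak{M}_R^{j}Q$ because Krull's intersection theorem fails here---$\bigcap_{k}\mathcal{M}_m^{k}$ already contains all nonzero flat germs. Everything else is bookkeeping; the additional care taken in \cite[Lemma 7.3]{damon1984unfolding} is spent carrying this out uniformly in the general DA-algebra (and equivariant) setting rather than on any new idea.
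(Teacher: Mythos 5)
Your proof is correct for Theorem \ref{mlemma} as it is stated, but it follows a genuinely different route from the paper, which offers no proof of its own and defers to \cite[Lemma 7.3]{damon1984unfolding}. What you do differently is to exploit fully that, as transcribed here, $a$ is a homomorphism of finitely generated modules over the single ring $R$ (and you correctly read $\mathfrak{M}_R$ as the maximal ideal of $R$, not of $M$ as the misprint says): then $a(N)$ is an $R$-submodule, $R$ is local with $\mathfrak{M}_R=\phi(\mathcal{M}_m)$ finitely generated and $R/\mathfrak{M}_R^{k}$ finite-dimensional, and both items reduce to Nakayama --- item (ii) via your annihilation lemma for finite-dimensional finitely generated modules, item (i) via the finite generation of $M_0$, whose reduction through $\pi^{-1}(M_0)\subset R^{r}$ is sound; your remark that this finite generation, rather than any Krull-intersection argument, is the crux is exactly right, since $\bigcap_{k}\mathcal{M}_m^{k}$ contains all flat germs. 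What each approach buys: yours is short, self-contained, and even shows $\omega=1$ suffices; the cited argument is heavier because Damon's Lemma 7.3 is proved for homomorphisms that are linear only over a smaller ring in an adequately ordered system of DA-algebras, where one cannot form $M/a(N)$ as a module over one local ring and quote Nakayama, and where the exponent $\omega$ genuinely matters. Your closing observation that the existential $\omega$ is ``never the point'' should therefore be read as a diagnosis that the statement printed here is a weakened transcription of Damon's lemma, not as a feature of the original result; indeed, where the paper actually invokes item (i) (proof of Lemma \ref{lemmadet}) the error term is the jet-order ideal $\mathcal{M}_n^{\omega+e}(\Gamma)$ rather than $\mathfrak{M}_R^{\omega}\cdot M_0=\mathcal{M}_n(\Gamma)^{\omega}\cdot\mathcal{M}_n^{e}(\Gamma)$, and comparing these two filtrations is additional work that your $\omega=1$ version does not by itself supply. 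So your argument is a complete, more elementary proof of the theorem as written; it does not substitute for the cited lemma in the generality in which the paper uses it, and that discrepancy is worth flagging explicitly.
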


Finally, we are able to prove the determinacy theorem for $\Gamma$-invariant
germs. But first we state a useful lemma, whose detailed proof we let for the technical results section.

\begin{lemma}\label{lemmadet}
If $g\in \mathcal{E}_n(\Gamma)$ has finite codimension, then there exists some $r\in \mathbb{N}$ such that for any $f\in\mathcal{E}_n(\Gamma)$ satisfying $j^rf=j^rg$, one has $\mathcal{M}_n^r(\Gamma)\subset T_{\mathcal{O}_f}$.
\end{lemma}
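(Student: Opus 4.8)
The plan is to transport the problem into the local ring $R:=\mathcal{E}_n(\Gamma)$ of $\Gamma$-invariant germs and apply the Generalized Mather's Lemma (Theorem~\ref{mlemma}). For $h\in\mathcal{E}_n(\Gamma)$ let $a_h\colon\vec{\mathcal{M}}_n(\Gamma)\to\mathcal{M}_n(\Gamma)$ denote the $R$-module homomorphism $a_h(\phi)=\phi\cdot\vec\nabla h$, so that $a_h(\vec{\mathcal{M}}_n(\Gamma))=T_{\mathcal{O}_h}$; by the Proposition above the image does lie in $\mathcal{M}_n(\Gamma)$, and both $\vec{\mathcal{M}}_n(\Gamma)$ and $\mathcal{M}_n(\Gamma)$ are finitely generated $R$-modules (the Hilbert basis theorem for the invariants and its equivariant counterpart recalled after Definition~\ref{defsym}). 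I would regard $R$ as the DA-algebra $(\mathcal{E}_n(\Gamma),\rho^*)$ of Schwarz's theorem, with maximal ideal $\mathfrak{m}:=\mathcal{M}_n(\Gamma)$. Two elementary facts about $R$ are used: (a) $\mathfrak{m}^{k}\subseteq\mathcal{M}_n^{k}(\Gamma)$ for all $k$, since a product of $k$ invariant germs vanishing at $0$ vanishes to order $\geq k$; and (b) conversely, for each $k$ there is an integer $k'$ with $\mathcal{M}_n^{k'}(\Gamma)\subseteq\mathfrak{m}^{k}$, i.e. the order filtration and the $\mathfrak{m}$-adic filtration of $R$ are mutually cofinal --- a standard consequence of the finiteness of $\mathcal{E}_n(\Gamma)$ over $\rho^*(\mathcal{E}_m)$ (see \cite{poenaru2006singularites}).

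The first step is to locate a power of $\mathfrak{m}$ inside $T_{\mathcal{O}_g}$. Finite $\Gamma$-codimension of $g$ means $\dim_{\mathbb{R}}\mathcal{M}_n(\Gamma)/\tgsp_{\mathcal{O}_g}<\infty$ (here $\tgsp_{\mathcal{O}_g}\subseteq\mathcal{M}_n(\Gamma)$, the germs relevant to the theory being critical at the origin). Writing $\vec{\mathcal{E}}_n(\Gamma)=\vec{\mathcal{M}}_n(\Gamma)\oplus(\mathbb{R}^n)^{\Gamma}$ gives $\tgsp_{\mathcal{O}_g}=T_{\mathcal{O}_g}+\{\,v\cdot\vec\nabla g\mid v\in(\mathbb{R}^n)^{\Gamma}\,\}$, whose second summand is finite-dimensional; hence $T_{\mathcal{O}_g}=a_g(\vec{\mathcal{M}}_n(\Gamma))$ also has finite codimension in $\mathcal{M}_n(\Gamma)$. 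Theorem~\ref{mlemma}(ii) applied to $a_g$ (with $N=\vec{\mathcal{M}}_n(\Gamma)$ and $M=\mathcal{M}_n(\Gamma)=\mathfrak{m}$) then yields an integer $q$ with $\mathfrak{m}^{q}\subseteq T_{\mathcal{O}_g}$.

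The second step transfers this to $f$. Fix $r$, depending only on $g$, so large --- using (b) --- that $\mathcal{M}_n^{r}(\Gamma)\subseteq\mathfrak{m}^{q}$ and $\mathcal{M}_n^{r+1}(\Gamma)\subseteq\mathfrak{m}^{q+1}$, and let $f\in\mathcal{E}_n(\Gamma)$ satisfy $j^{r}f=j^{r}g$. I would apply Theorem~\ref{mlemma}(i) to $a_f$ with the fixed finitely generated, finite-codimension submodule $M_0:=\mathfrak{m}^{q}$ of $M=\mathcal{M}_n(\Gamma)$; its exponent $\omega$ may be taken to be $1$ (equivalently, by Nakayama: if $M_0\subseteq a_f(\vec{\mathcal{M}}_n(\Gamma))+\mathfrak{m}M_0$, the modular law gives $M_0=(M_0\cap a_f(\vec{\mathcal{M}}_n(\Gamma)))+\mathfrak{m}M_0$, and Nakayama applied to the finitely generated module $M_0/(M_0\cap a_f(\vec{\mathcal{M}}_n(\Gamma)))$ forces $M_0\subseteq a_f(\vec{\mathcal{M}}_n(\Gamma))$). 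So it suffices to verify $\mathfrak{m}^{q}\subseteq T_{\mathcal{O}_f}+\mathfrak{m}^{q+1}$. Since $j^{r}f=j^{r}g$ gives $g-f\in\mathcal{M}_n^{r+1}(\Gamma)$, each component of $\vec\nabla(g-f)$ lies in $\mathcal{M}_n^{r}$, and for $\phi\in\vec{\mathcal{M}}_n(\Gamma)$ the germ $(a_g-a_f)(\phi)=\phi\cdot\vec\nabla(g-f)$ lies in $\mathcal{M}_n\cdot\mathcal{M}_n^{r}\subseteq\mathcal{M}_n^{r+1}$; being moreover $\Gamma$-invariant by the Proposition above (with $g-f$ in place of $f$), it lies in $\mathcal{M}_n^{r+1}(\Gamma)\subseteq\mathfrak{m}^{q+1}$. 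Therefore
\[
\mathfrak{m}^{q}\subseteq T_{\mathcal{O}_g}=a_g(\vec{\mathcal{M}}_n(\Gamma))\subseteq a_f(\vec{\mathcal{M}}_n(\Gamma))+\mathcal{M}_n^{r+1}(\Gamma)\subseteq T_{\mathcal{O}_f}+\mathfrak{m}^{q+1},
\]
so Theorem~\ref{mlemma}(i) gives $\mathfrak{m}^{q}\subseteq T_{\mathcal{O}_f}$, and since $\mathcal{M}_n^{r}(\Gamma)\subseteq\mathfrak{m}^{q}$ we conclude $\mathcal{M}_n^{r}(\Gamma)\subseteq T_{\mathcal{O}_f}$, as desired.

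I expect the only step with real content to be fact (b): Mather's Lemma and Nakayama's lemma work with powers of the \emph{invariant} maximal ideal $\mathfrak{m}=\mathcal{M}_n(\Gamma)$, whereas the statement and the jet hypothesis $j^{r}f=j^{r}g$ are naturally phrased with the order filtration $\mathcal{M}_n^{k}(\Gamma)$, and these genuinely differ --- already for $\Gamma=\mathbb{Z}_2$ acting by $x\mapsto-x$ on $\mathbb{R}$ one has $\mathfrak{m}=(x^{2})$ whereas $\mathcal{M}_1^{k}(\Gamma)=(x^{2\lceil k/2\rceil})$. Producing the inclusion $\mathcal{M}_n^{k'}(\Gamma)\subseteq\mathfrak{m}^{k}$ with $k'$ explicit in terms of $k$ and the degrees of the Hilbert generators $\rho_1,\dots,\rho_m$ is the place where an honest argument about the algebra of invariants is needed; everything else is module bookkeeping together with the two applications of Theorem~\ref{mlemma}.
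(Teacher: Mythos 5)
Your proposal follows essentially the same route as the paper's proof: extract from the finite codimension of $g$ a power of an ideal inside $T_{\mathcal{O}_g}$ via part (ii) of the Generalized Mather's Lemma, then transfer to any $f$ with $j^rf=j^rg$ by noting that $(a_g-a_f)(\phi)=\phi\cdot\vec\nabla(g-f)$ lands deep in the filtration, and close with part (i) of that lemma --- which you correctly replace, for the finitely generated module $M_0=\mathfrak{m}^q$, by the modular law plus Nakayama, so that $\omega=1$ suffices. The genuine difference is bookkeeping: you distinguish the $\mathfrak{m}$-adic filtration $\mathfrak{m}^k$, $\mathfrak{m}=\mathcal{M}_n(\Gamma)$, in which Mather/Nakayama operate, from the order filtration $\mathcal{M}_n^k(\Gamma)$, in which the hypothesis $j^rf=j^rg$ and the conclusion are phrased. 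The paper's proof silently identifies $\mathfrak{M}_R^{\omega}\cdot\mathcal{M}_n^{e}(\Gamma)$ with $\mathcal{M}_n^{\omega+e}(\Gamma)$ (and, earlier, deduces ``$\mathcal{M}_n^e(\Gamma)\subset T_{\mathcal{O}_g}$'' directly from Mather (ii), which only yields a power of $\mathfrak{m}$); your $\mathbb{Z}_2$ example shows these identifications are not literally valid, so making the comparison of filtrations explicit is a real improvement in rigor, not a detour.

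That comparison, your fact (b), is also the one place where the proposal is not self-contained, and your stated justification is off: it is not a consequence of ``finiteness of $\mathcal{E}_n(\Gamma)$ over $\rho^*(\mathcal{E}_m)$'' --- by Schwarz these two coincide. The statement is true, and with homogeneous Hilbert generators one may take $k'=k\cdot\max_i\deg\rho_i$: an invariant homogeneous polynomial of degree $j$ can be written as a weighted-homogeneous polynomial of weight $j$ in $\rho_1,\dots,\rho_m$, hence of order $\geq j/\max_i\deg\rho_i$ in the generators, which handles Taylor series; to pass from formal to smooth one then needs the refined form of Schwarz's theorem (Mather's differentiable invariants) guaranteeing that flat invariant germs lie in $\rho^*(\mathcal{M}_m^{k})=\mathfrak{m}^{k}$ for every $k$. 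Supplying that argument, or a precise citation for it, closes the gap; a one-line remark that $\vec{\mathcal{M}}_n(\Gamma)$ is finitely generated over $\mathcal{E}_n(\Gamma)$ (choose the degree-zero equivariant generators to be a basis of $(\mathbb{R}^n)^\Gamma$, and multiply them by the $\rho_i$) would also be welcome before invoking Mather (ii). With these two points addressed, your proof is complete and, in fact, tighter than the one in the paper.
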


\begin{theorem}[Determinacy theorem]\label{theodet} Let $f\in\mathcal{E}_{n}(\Gamma)$.
then, the conditions are equivalent:
\begin{enumerate}[label=\roman*.)]
\item $f$ is $\Gamma$-finitely-determined,
\item $\mathcal{M}_{n}^{k}(\Gamma)\subset \tgsp_{\mathcal{O}_f}$ for some $k\in\mathbb{N}$,
\item $\cod_{\Gamma}(f)$ is finite. 
\end{enumerate}
\end{theorem}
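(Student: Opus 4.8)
The plan is to run the cycle $(\mathrm{i})\Rightarrow(\mathrm{ii})\Rightarrow(\mathrm{iii})\Rightarrow(\mathrm{i})$; the implication $(\mathrm{iii})\Rightarrow(\mathrm{ii})$ will also fall out immediately. All of this rests on three structures already in place: Schwarz's theorem, by which $\mathfrak{R}=(\mathcal{E}_n(\Gamma),\rho^*)$ is a DA-algebra — and in fact a local ring with maximal ideal $\mathfrak{M}_R=\mathcal{M}_n(\Gamma)$; the module form of the Hilbert basis theorem recalled in the remark after Definition \ref{defsym}, by which $\vec{\mathcal{E}}_n(\Gamma)$ is a finitely generated $\mathcal{E}_n(\Gamma)$-module; and the Generalized Mather Lemma (Theorem \ref{mlemma}). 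Set $R=\mathcal{E}_n(\Gamma)$ and $a_f\colon\vec{\mathcal{E}}_n(\Gamma)\to\mathcal{E}_n(\Gamma)$, $a_f(\phi)=\phi\cdot\vec\nabla f$; by the $\Gamma$-invariance of $\phi\cdot\vec\nabla f$ for $\Gamma$-equivariant $\phi$ (proved just after Definition \ref{deftg}), $a_f$ is a homomorphism of $R$-modules with $a_f(\vec{\mathcal{E}}_n(\Gamma))=\tgsp_{\mathcal{O}_f}$ and $a_f(\vec{\mathcal{M}}_n(\Gamma))=T_{\mathcal{O}_f}$. I also record that, with the invariant generators $\rho_i$ chosen homogeneous of positive degree, the $\mathfrak{M}_R$-adic filtration of $R$ is cofinal with $(\mathcal{M}_n^{k}(\Gamma))_{k\ge1}$: for each $j$ there are $k,k'$ with $\mathcal{M}_n^{k}(\Gamma)\subseteq\mathfrak{M}_R^{\,j}\subseteq\mathcal{M}_n^{k'}(\Gamma)$.

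$(\mathrm{ii})\Rightarrow(\mathrm{iii})$ is a dimension count: if $\mathcal{M}_n^{k}(\Gamma)\subseteq\tgsp_{\mathcal{O}_f}$ then $\mathcal{M}_n(\Gamma)/\tgsp_{\mathcal{O}_f}$ is a quotient of the finite-dimensional space of invariant $k$-jets $\mathcal{M}_n(\Gamma)/\mathcal{M}_n^{k}(\Gamma)$, so $\cod_\Gamma(f)<\infty$. For $(\mathrm{iii})\Rightarrow(\mathrm{ii})$ one applies the second part of Theorem \ref{mlemma} with $M=\mathcal{E}_n(\Gamma)$, $N=\vec{\mathcal{E}}_n(\Gamma)$, $a=a_f$: since $\dim_\mathbb{R}M/a_f(N)\le\cod_\Gamma(f)+1<\infty$, some power $\mathfrak{M}_R^{\,\omega'}$ is contained in $\tgsp_{\mathcal{O}_f}$, and cofinality of the filtrations promotes this to $\mathcal{M}_n^{k}(\Gamma)\subseteq\tgsp_{\mathcal{O}_f}$ for a suitable $k$.

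$(\mathrm{iii})\Rightarrow(\mathrm{i})$. Let $r$ be the integer furnished by Lemma \ref{lemmadet} for the finite-codimension germ $f$. Take any $g\in\mathcal{E}_n(\Gamma)$ with $j^rg=j^rf$, put $w=g-f\in\mathcal{M}_n^{r+1}(\Gamma)$ and $f_t=f+tw$ for $t\in[0,1]$; every $f_t$ is $\Gamma$-invariant and satisfies $j^rf_t=j^rf$, so Lemma \ref{lemmadet} gives $\mathcal{M}_n^{r}(\Gamma)\subseteq T_{\mathcal{O}_{f_t}}$ for all $t$, and in particular $\dot f_t=w\in T_{\mathcal{O}_{f_t}}$. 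One then solves $X_t\cdot\vec\nabla f_t=-w$ with $X_t\in\vec{\mathcal{M}}_n(\Gamma)$, choosing $t\mapsto X_t$ smooth by a parametrized version of this solvability for the family $f_t$, and integrates the time-dependent $\Gamma$-equivariant vector field $X_t$ to a germ of isotopy $\psi_t\in L_n(\Gamma)$ with $\psi_0=\mathbf{1}$ — the equivariance of $\psi_t$ and $\psi_t(0)=0$ following, by uniqueness of ODE solutions, from the equivariance of $X_t$ and from $X_t\in\vec{\mathcal{M}}_n(\Gamma)$. The choice of $X_t$ forces $t\mapsto f_t\circ\psi_t$ to be constant, so $g\circ\psi_1=f_1\circ\psi_1=f_0\circ\psi_0=f$; thus $f$ and $g$ are $\Gamma$-equivalent (Definition \ref{defeq1}), so $f$ is $\Gamma$-$r$-determined.

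$(\mathrm{i})\Rightarrow(\mathrm{ii})$. Suppose $f$ is $\Gamma$-$k$-determined and let $w\in\mathcal{M}_n^{k+1}(\Gamma)$. Then $j^k(f+tw)=j^kf$ for all $t$, so $f+tw$ is $\Gamma$-equivalent to $f$; a first-order analysis of this equivalence at $t=0$ — realizing it by a family of $\Gamma$-equivariant diffeomorphisms depending smoothly on $t$ modulo $\mathcal{M}_n^{k+2}(\Gamma)$, obtained by a transversality/slice argument on the finite-dimensional space of invariant $(k{+}1)$-jets — yields $w\in T_{\mathcal{O}_f}+\mathfrak{M}_R\cdot\mathcal{M}_n^{k+1}(\Gamma)$, hence
\[
\mathcal{M}_n^{k+1}(\Gamma)\ \subseteq\ \tgsp_{\mathcal{O}_f}+\mathfrak{M}_R\cdot\mathcal{M}_n^{k+1}(\Gamma).
\]
Since $\tgsp_{\mathcal{O}_f}$ is an ideal of $R$, iterating this inclusion replaces $\mathfrak{M}_R$ by any power $\mathfrak{M}_R^{\,\omega}$; as $\mathcal{M}_n^{k+1}(\Gamma)$ has finite codimension in $\mathcal{E}_n(\Gamma)$, the first part of Theorem \ref{mlemma} (with $M_0=\mathcal{M}_n^{k+1}(\Gamma)$, $N=\vec{\mathcal{E}}_n(\Gamma)$, $a=a_f$) then forces $\mathcal{M}_n^{k+1}(\Gamma)\subseteq\tgsp_{\mathcal{O}_f}$, which is $(\mathrm{ii})$. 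The algebraic steps here — the two uses of Theorem \ref{mlemma} and the cofinality of the filtrations — are formal once the module structures are fixed; the genuine work, and the only place where compactness of $\Gamma$ and the restriction to $\Gamma$-invariant data really intervene, is in the two analytic sub-steps: the parametrized, $\Gamma$-equivariant solution of $X_t\cdot\vec\nabla f_t=-w$ and its integration inside $L_n(\Gamma)$ in $(\mathrm{iii})\Rightarrow(\mathrm{i})$ (the equivariant Thom--Levine homotopy method), and, in $(\mathrm{i})\Rightarrow(\mathrm{ii})$, the passage from the merely existential $\Gamma$-equivalences between $f+tw$ and $f$ to the infinitesimal inclusion above. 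I expect the latter to be the main obstacle: the realizing diffeomorphisms are not a priori smooth in the parameter, so they must be produced by a finite-dimensional transversality argument on jet spaces, and one must then verify — this being where equivariance is delicate — that that argument goes through entirely within the subspaces of $\Gamma$-invariant germs and $\Gamma$-equivariant diffeomorphisms.
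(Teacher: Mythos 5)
Your handling of (ii)$\Rightarrow$(iii), (iii)$\Rightarrow$(ii) and (iii)$\Rightarrow$(i) coincides with the paper's: the jet-space dimension count; part (ii) of the generalized Mather lemma (Theorem \ref{mlemma}) applied to $a_f(\phi)=\phi\cdot\vec{\nabla}f$ on $N=\vec{\mathcal{E}}_n(\Gamma)$, $M=\mathcal{E}_n(\Gamma)$ (your cofinality remark between the $\mathcal{M}_n(\Gamma)$-adic powers and the spaces $\mathcal{M}_n^{k}(\Gamma)$ is a point the paper passes over silently, and is welcome); and the equivariant homotopy method based on Lemma \ref{lemmadet} --- solve the infinitesimal equation with an equivariant field vanishing at $0$, integrate, and use uniqueness of ODE solutions to keep the flow in $L_n(\Gamma)$ --- which is exactly the paper's argument for determinacy (stated there as (ii)$\Rightarrow$(i), but invoking Lemma \ref{lemmadet} and hence finite codimension, just as you do). One caveat: in the paper the solvability of $\vec{\xi}\cdot\vec{\nabla}_xF_s+(g-f)=0$ is only local in the homotopy parameter, and the equivalences are chained by compactness of $[0,1]$; your ``parametrized solvability on all of $[0,1]$'' needs a word of justification (e.g.\ patching local-in-$t$ solutions of this linear equation by a partition of unity in $t$), though this is repairable.

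The genuine gap is in (i)$\Rightarrow$(ii). Everything there hinges on the inclusion $w\in T_{\mathcal{O}_f}+\mathcal{M}_n(\Gamma)\cdot\mathcal{M}_n^{k+1}(\Gamma)$ for every $w\in\mathcal{M}_n^{k+1}(\Gamma)$, which you extract from a ``first-order analysis'' of the equivalences $f+tw\sim f$. But $k$-determinacy only hands you an equivariant diffeomorphism for each fixed $t$, with no smoothness in $t$, and the transversality/slice argument on the $\Gamma$-invariant $(k{+}1)$-jet space that is supposed to repair this is precisely the content you would have to supply --- you yourself flag it as the main obstacle. As written, this implication is a program rather than a proof. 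Note also that the Nakayama-style iteration plus part (i) of Theorem \ref{mlemma} is not needed once that step is done right: the paper argues more directly by setting $D_f=\{g\in\mathcal{E}_n(\Gamma)\;|\;j^kg=j^kf\}$, observing that its tangent space at $f$ is $\mathcal{M}_n^{k+1}(\Gamma)$, and that $\Gamma$-$k$-determinacy means $D_f\subset\mathcal{O}_f$, whence $\mathcal{M}_n^{k+1}(\Gamma)\subset\tgsp_{\mathcal{O}_f}$ by inclusion of tangent spaces. (Making that inclusion rigorous is the same finite-dimensional jet-orbit argument you gesture at --- the orbit of $j^{k+1}f$ under the group of invariant jets of equivariant diffeomorphisms, a Lie group since $\Gamma$ is compact, is a genuine submanifold containing the affine subspace $j^{k+1}f+j^{k+1}(\mathcal{M}_n^{k+1}(\Gamma))$ --- so if you complete your route, that is where the work should be done.)
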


\begin{proof}

$ii.)\implies iii.)$. This implication is trivial, since the dimension
of $\mathcal{M}_{n}(\Gamma)/\mathcal{M}_{n}^{k}(\Gamma)$ is always
finite for any $k$. Recall Definitions \ref{defbasic} and \ref{defcod}.

\smallskip

\noindent $iii.)\implies ii.)$. Let $a:\vec{\mathcal{E}}_{n}(\Gamma)\rightarrow\mathcal{E}_{n}(\Gamma)$ be the map given by $a(\phi)=\phi\cdot\vec{\nabla}f$. Now, we apply
Theorem \ref{mlemma} with $R=\mathcal{E}_{n}(\Gamma)$, $N=\vec{\mathcal{E}}_{n}(\Gamma)$
and $M=\mathcal{E}_{n}(\Gamma)$ (recall that $\vec{\mathcal{E}}_{n}(\Gamma)$ is finitely generated as an $\mathcal{E}_{n}(\Gamma)$-module). Since the maximal ideal of $\mathcal{E}_n(\Gamma)$ is $\mathcal{M}_n(\Gamma)$ and $a(\vec{\mathcal{E}}_{n}(\Gamma))=\tgsp_{\mathcal{O}_f}$ (see Definition (\ref{deftg})),
the implication directly follows from item $(b)$ of Theorem \ref{mlemma}.

\smallskip

\noindent $ii.)\implies i.)$.
Let $g\in\mathcal{E}_{n}(\Gamma)$ be such that
$j^rf=j^{r}g$, for the respective $r$ of Lemma \ref{lemmadet}. The idea of the proof is the following: let
\begin{equation}
F_{s}=sg(x)+(1-s)f(x),\quad s\in[0,1].
\end{equation}
To prove condition $i.)$, we must show that $F_0$ is equivalent to $F_1$. The goal will be then to prove that, for every $s\in [0,1]$, there exits an interval $I_s$ around $s$ such that for any $s'\in I_s$, $F_{s'}$ is $\Gamma$-equivalent to $F_s$. Since $\cup_{s\in [0,1]}I_s$ is an open cover for $[0,1]$ and $[0,1]$ is compact, it follows that there exists a finite sequence $(s_1,\dots,s_M)$ such that
\begin{equation}
F_0\sim F_{s_1}\sim \dots \sim F_{s_M}\sim F_1,
\end{equation}
which implies that $i.)$ holds. Hence, define for $s\in [0,1]$,
\begin{equation}
F_{s}(x,t)=(s+t)g(x)+(1-s-t)f(x).
\end{equation}
We must prove that there exists a
neighborhood of $0\in \mathbb{R}$ and a map $h(x,t)$ such that, if $t$ belongs
to that neighborhood, we have $h(\cdot,t)\in L_{x}(\Gamma)$ and $F_{s}(h(x,t),t)=F_{s}(x,0)$. Since $j^rF_{s}(\cdot,t)=j^rg$ for any $t$ and $g-f\in \mathcal{M}^r_n(\Gamma)$, by Lemma \ref{lemmadet}, it follows that there exists some $\vec{\xi}\in\vec{\mathcal{M}}_{n+1}$ with $\vec{\xi}(\cdot,t)\in\vec{\mathcal{M}}_n(\Gamma)$, such that for $t$ in some sufficiently small neighborhood $I$ of $0$,
\begin{equation}\label{eqsim}
\vec{\nabla}_xF_{s}(x,t)\cdot \vec{\xi}(x,t)+(g-f)(x)=0.
\end{equation}
Now, take $h(x,t)$ as the (unique) solution to:
\begin{equation}\label{eqdiff1}
\begin{cases}
\frac{\partial h(x,t)}{\partial t}=\vec{\xi}(h(x,t),t),\quad t\in I, \\
h(x,0)=x.
\end{cases}
\end{equation}
By the $\Gamma$-equivariance of $\vec{\xi}(\cdot,t)$ and the uniqueness of the solution of Equation (\ref{eqdiff1}), it follows that $h(\cdot,t)$ is also $\Gamma$-equivariant, and by Equation (\ref{eqsim}) one easily checks that

\begin{equation*}
\frac{\partial F_{s}(h(x,t),t)}{\partial t} = 0,\quad t\in I
\end{equation*}
i.e., $F_{s}(h(x,t),t)=F_{s}(x,0)$, for $t\in I$.

\bigskip

\noindent $i.)\implies ii.)$. Let $f$ be $\Gamma$-$k$-determined, and define
\begin{equation}
D_f=\{g\in \mathcal{E}_n(\Gamma)\;|\; j^kg=j^kf\}.
\end{equation}
$D_f$ is a linear subspace of $\mathcal{E}_n(\Gamma)$, and a tangent vector to $D_f$ at $f$ is of the type $f+th$, for any $h\in \mathcal{M}^{k+1}_n(\Gamma)$. Hence, it follows that its tangent space $T_{D_f}$ at $f$ is given by
\begin{equation}
T_{D_f}=\mathcal{M}^{k+1}_n(\Gamma).
\end{equation} 
But since $f$ is $\Gamma$-k-determined, it follows that $D_f$ is a subset of the orbit of $f$ with respect to $L_n(\Gamma)$; in particular, one also has the inclusion of the tangent spaces, i.e., $T_{D_f}\subset \tgsp_{\mathcal{O}_f}$.

\end{proof}

\begin{remark}
The determinacy theorem is of central importance, mainly because it
connects the size of the tangent space of $f$, with $f$ being $\Gamma$-finitely-determined. As a consequence, this theorem will allow us further to work in finite dimensional jet spaces for germs which are $\Gamma$-finitely-determined, and hence enabling to use important tools that comes from finite-dimensional differential geometry.
\end{remark}

\section{$\Gamma$-invariant unfoldings}

In this section we properly define the families of germs that will be relevant for the theory here proposed. In particular, we will be interested in families -- also called ufoldings -- whose elements are $\Gamma$-invariant. 

\begin{defn} A $\Gamma$-invariant $r$-unfolding of a germ $f\in\mathcal{E}_{n}(\Gamma)$
is a germ $F\in \mathcal{E}_{n+r}$ such that $F(x,0)=f(x)$ and, for some
neighborhood $W$ of $0\in\mathbb{R}^{r}$, one has $x\mapsto F(x,u)\in\mathcal{E}_n(\Gamma)$
for all $u\in W$. The set of all $\Gamma$-invariant $r$-unfoldings will be from now on denoted by $\mathcal{U}_{n,r}(\Gamma)$.

\end{defn}

The following definition is of central importance for the applications
of catastrophe theory that we are concerned, since it states what
classes of unfoldings we will call equivalent to each other. For our
purposes, we need equivalent unfoldings preserving the minima and
the critical points of our family while also preserving the assumed symmetry. Therefore, we allow for a $\Gamma$-equivariant diffeomorphism
that connects the variables, another that connects the parameters,
and a possibly third one which represents only a constant shift.

\begin{defn}\label{defeq2}
Let $f\in\mathcal{E}_{n}(\Gamma)$, $F$ be a
$\Gamma$-invariant $r$-unfolding of $f$ and $G$ be a 
$\Gamma$-invariant $d$-unfolding of $f$. We say that $G$ is \textit{induced
from} $F$ if there exists:
\begin{enumerate}[label=\roman*.)]
\item a germ $\phi\in \vec{\mathcal{M}}_{n+d,n}$, where $x\mapsto\phi(x,u)\in\vec{\mathcal{E}}_{n}(\Gamma)$
for $u$ in some neighborhood of $0\in\mathbb{R}^{d}$.
\item a germ $\psi\in \vec{\mathcal{M}}_{d,r}$,
\item a germ $h\in \mathcal{M}_d$,
\end{enumerate}
such that
\[
G(x,u)=F(\phi(x,u),\psi(u))+h(u).
\]
If $r=d$ and $\psi$ is a germ diffeomorphism, $F$ and $G$ are
said to be \textit{$\Gamma$-equivalent}. Moreover, if every $\Gamma$-invariant
unfolding of $f$ is induced from $F$, $F$ is said to be \textit{versal}.
\end{defn}

\begin{prop}\label{propinduced} Let $F_{1},F_{2},F_{3}$ be $\Gamma$-invariant $r_{1}$-,
$r_{2}$-, $r_{3}$-unfoldings, respectively, such that $F_{1}$ is
induced from $F_{2}$ and $F_{2}$ is induced from $F_{3}$. Then,
$F_{1}$ is induced from $F_{3}$.
\end{prop}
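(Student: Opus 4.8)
The plan is to obtain the inducing data for ``$F_1$ induced from $F_3$'' simply by composing the data witnessing ``$F_1$ induced from $F_2$'' and ``$F_2$ induced from $F_3$''. Let $\phi^{(1)}\in\vec{\mathcal{M}}_{n+r_1,n}$, $\psi^{(1)}\in\vec{\mathcal{M}}_{r_1,r_2}$, $h^{(1)}\in\mathcal{M}_{r_1}$ realise the first relation and $\phi^{(2)}\in\vec{\mathcal{M}}_{n+r_2,n}$, $\psi^{(2)}\in\vec{\mathcal{M}}_{r_2,r_3}$, $h^{(2)}\in\mathcal{M}_{r_2}$ the second, so that $F_1(x,u)=F_2(\phi^{(1)}(x,u),\psi^{(1)}(u))+h^{(1)}(u)$ and $F_2(y,v)=F_3(\phi^{(2)}(y,v),\psi^{(2)}(v))+h^{(2)}(v)$. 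I would then set
\[
\Phi(x,u)=\phi^{(2)}\bigl(\phi^{(1)}(x,u),\psi^{(1)}(u)\bigr),\qquad \Psi(u)=\psi^{(2)}\bigl(\psi^{(1)}(u)\bigr),\qquad H(u)=h^{(2)}\bigl(\psi^{(1)}(u)\bigr)+h^{(1)}(u),
\]
and claim these triples witness that $F_1$ is induced from $F_3$.

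First I would verify the defining identity by direct substitution: inserting $(y,v)=(\phi^{(1)}(x,u),\psi^{(1)}(u))$ into the second relation and then using the first yields $F_1(x,u)=F_3(\Phi(x,u),\Psi(u))+H(u)$. A point worth recording is that this is legitimate at the germ level: since $\psi^{(1)}(0)=0$ and $\psi^{(1)}$ is continuous, for $u$ in a sufficiently small neighbourhood of $0$ the value $\psi^{(1)}(u)$ lies in the neighbourhood on which the second relation — and the $\Gamma$-equivariance of $y\mapsto\phi^{(2)}(y,v)$ — is valid, so all the compositions can be realised on a common representative.

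Next I would check that $\Phi,\Psi,H$ lie in the required germ spaces. Evaluating at $(x,u)=0$ (resp.\ $u=0$) and using that $\phi^{(1)},\psi^{(1)},h^{(1)},\phi^{(2)},\psi^{(2)},h^{(2)}$ all vanish at the origin gives $\Phi(0,0)=0$, $\Psi(0)=0$, $H(0)=0$, hence $\Phi\in\vec{\mathcal{M}}_{n+r_1,n}$, $\Psi\in\vec{\mathcal{M}}_{r_1,r_3}$, $H\in\mathcal{M}_{r_1}$. For the symmetry requirement, fix $u$ near $0$ and $\gamma\in\Gamma$; using the $\Gamma$-equivariance of $x\mapsto\phi^{(1)}(x,u)$ and then that of $y\mapsto\phi^{(2)}(y,\psi^{(1)}(u))$ one gets $\Phi(\gamma x,u)=\phi^{(2)}\bigl(\gamma\phi^{(1)}(x,u),\psi^{(1)}(u)\bigr)=\gamma\,\Phi(x,u)$, i.e.\ $x\mapsto\Phi(x,u)\in\vec{\mathcal{E}}_n(\Gamma)$ for $u$ in a neighbourhood of $0$. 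By Definition \ref{defeq2} this shows $F_1$ is induced from $F_3$.

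I do not expect a genuine obstacle here: the statement is essentially a composition/bookkeeping fact, and the only subtlety deserving explicit mention is the domain matching in the second paragraph, which is handled by the continuity of $\psi^{(1)}$ together with $\psi^{(1)}(0)=0$.
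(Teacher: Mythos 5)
Your proposal is correct and follows essentially the same route as the paper: you compose the inducing data exactly as in the paper's proof ($\phi_3=\phi_2(\phi_1,\psi_1)$, $\psi_3=\psi_2\circ\psi_1$, $h_3=h_1+h_2\circ\psi_1$), merely spelling out the germ-level domain matching and the $\Gamma$-equivariance check that the paper leaves implicit.
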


\begin{proof} Let $\phi_{i}$, $\psi_{i}$ and $h_{i}$, $i=1,2$,
be the corresponding germs of Definition \ref{defeq2}, satisfying,
for $u_{i}\in\mathbb{R}^{r_{i}}$,

\[
F_{i}(x,u_{i})=F_{i+1}(\phi_{i}(x,u_{i})),\psi_{i}(u_{i}))+h_{i}(u_{i}),\quad i=1,2.
\]
Defining

\begin{align*}
 & \phi_{3}(x,u_{1})=\phi_{2}(\phi_{1}(x,u_{1}),\psi_{1}(u_{1})),\quad\psi_{3}(u_{1})=\psi_{2}(\psi_{1}(u_{1})),\;\text{ and}\\
 & h_{3}(u_{1})=h_{1}(u_{1})+h_{2}(\psi(u_{1})),
\end{align*}
it follows that $\phi_{3},\psi_{3},h_{3}$ satisfy the conditions
(a), (b) and (c), of Definition \ref{defeq2}, with $d=r_{1}$
and $r=r_{3}$. Moreover, one has

\[
F_{1}(x,u_{1})=F_{3}(\phi_{3}(x,u_{1}),\psi_{3}(u_{1}))+h_{3}(u_{1}).
\]
Hence, $F_{1}$ is induced from $F_{3}$. \end{proof}

\begin{remark}
Note that, by the transitivity just stated above, it follows that the set of unfoldings that are $\Gamma$-equivalent to each other, according to Definition \ref{defeq2}, defines an equivalence class in $\mathcal{U}_{n,r}(\Gamma)$.
\end{remark}

\subsection{The tangent space generated by unfoldings}
We now look at the unfoldings of a germ $f\in\mathcal{M}_n(\Gamma)$ as a parametric family of germs in $\mathcal{M}_n(\Gamma)$, and analize its generated tangent space. Let $F\in \mathcal{U}_{n,r}(\Gamma)$ unfold $f\in\mathcal{M}_n(\Gamma)$, and Let $\mathbb{R}_\Gamma^n$ be the set of all points invariant under the action of $\Gamma$, i.e.,
\begin{equation*}
\mathbb{R}^n_\Gamma=\{x\in \mathbb{R}^n\;|\; \gamma\cdot x=x\text{ for all }\gamma\in \Gamma\}.
\end{equation*}
Consider the mapping $S_F:\mathbb{R}^n_\Gamma\times \mathbb{R}^r\rightarrow \mathcal{M}_n(\Gamma)$ defined by
\begin{equation*}
S_F(x,u)(x')\doteq F(x+x',u)-F(x,u). 	
\end{equation*}
Let $(v_1,v_2)$ be arbitrary vectors in $\mathbb{R}^n_\Gamma\times \mathbb{R}^r$. Then, the tangent vector of the differential of $S_F$ with respect to the curve $(\eta_1(t),\eta_2(t))=(tv_1,tv_2)$ is given by
\begin{equation*}
\frac{\partial S_F(\eta_1(t),\eta_2(t))(x')}{\partial t}|_{t=0} =v_1\cdot (\vec{\nabla}f(x')-\vec{\nabla}f(0)) +v_2\cdot( \vec{\nabla}_uF(x',0)- \vec{\nabla}_uF(0,0)).
\end{equation*}
Hence, the tangent space associated with the differential of $S_F$ is then given by
\begin{equation*}
\{v_1\cdot (\vec{\nabla}f(x)-\vec{\nabla}f(0)) +v_2\cdot( \vec{\nabla}_uF(x,0)- \vec{\nabla}_uF(0,0))\;|\; (v_1,v_2)\in \mathbb{R}^n_\Gamma\times \mathbb{R}^r\}\in \mathcal{M}_n(\Gamma).
\end{equation*}
\begin{defn}
Let $F\in\mathcal{U}_{n,r}(\Gamma)$ be a $\Gamma$-invariant unfolding of some $f\in\mathcal{M}_n(\Gamma)$.
\begin{equation}
T_{S_F}\doteq \{v_1\cdot (\vec{\nabla}f(x)-\vec{\nabla}f(0)) +v_2\cdot( \vec{\nabla}_uF(x,0)- \vec{\nabla}_uF(0,0))\;|\; (v_1,v_2)\in \mathbb{R}^n_\Gamma\times \mathbb{R}^r\},
\end{equation}
where
\begin{equation}
\mathbb{R}^n_\Gamma=\{x\in \mathbb{R}^n\;|\; \gamma\cdot x=x\text{ for all }\gamma\in \Gamma\}.
\end{equation}
Moreover, we say that $F$ is \textit{$\Gamma$-transversal} if $T_{S_F}$ is transversal to $T_{\mathcal{O}_f}$ in $\mathcal{M}_n(\Gamma)$, i.e., if
\begin{equation}\label{eqtransv1}
\mathcal{M}_n(\Gamma)=T_{\mathcal{O}_f}+T_{S_F}.
\end{equation}
\end{defn}
We now introduce the following notation:
\begin{defn}
Let $R$ be a ring and $M$ an $R$-module. Given $v_1,\dots,v_n\in M$, We denote by
\begin{equation}
\langle v_1,\dots ,v_n\rangle_R
\end{equation}
the $R$-sub-module of $M$ generated by  $v_1,\dots,v_n$.
\end{defn}

\begin{prop}\label{proptransversal}
Let $F\in\mathcal{U}_{n,r}(\Gamma)$ be a $\Gamma$-invariant unfolding of some $f\in\mathcal{M}_n(\Gamma)$. Then, $F$ is $\Gamma$-transversal if and only if 
\begin{equation}\label{eqtransv2}
\mathcal{E}_n(\Gamma) = \tgsp_{\mathcal{O}_f}+ \langle1,\alpha_1(F),\dots,\alpha_r(F)\rangle_\mathbb{R},
\end{equation}
where
\begin{equation}
\alpha_i(F)\doteq\frac{\partial F(x,0)}{\partial u_i}.
\end{equation}

\end{prop}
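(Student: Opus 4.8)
The plan is to pass between the $\mathcal{M}_n(\Gamma)$-level condition~(\ref{eqtransv1}) and the $\mathcal{E}_n(\Gamma)$-level condition~(\ref{eqtransv2}) by isolating constant terms. Throughout I would use the linear projection $\pi\colon\mathcal{E}_n(\Gamma)\to\mathcal{M}_n(\Gamma)$, $\pi(g)=g-g(0)$, which realizes the splitting $\mathcal{E}_n(\Gamma)=\mathbb{R}\cdot1\oplus\mathcal{M}_n(\Gamma)$ and restricts to the identity on $\mathcal{M}_n(\Gamma)$. Note also that $T_{\mathcal{O}_f}\subset\mathcal{M}_n(\Gamma)$, since $\phi(0)=0$ for every $\phi\in\vec{\mathcal{M}}_n(\Gamma)$, so $\pi$ fixes $T_{\mathcal{O}_f}$ pointwise.

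First I would record two bookkeeping identities. Since $\Gamma$ acts linearly, any $\phi\in\vec{\mathcal{E}}_n(\Gamma)$ satisfies $\phi(0)=\phi(\gamma\cdot0)=\gamma\,\phi(0)$, so $\phi(0)\in\mathbb{R}^n_\Gamma$, and $\phi-\phi(0)$ is still $\Gamma$-equivariant and vanishes at the origin, hence lies in $\vec{\mathcal{M}}_n(\Gamma)$; writing $\phi\cdot\vec{\nabla}f=(\phi-\phi(0))\cdot\vec{\nabla}f+\phi(0)\cdot\vec{\nabla}f$ and observing that constant maps valued in $\mathbb{R}^n_\Gamma$ are $\Gamma$-equivariant, this yields
\[
\tgsp_{\mathcal{O}_f}=T_{\mathcal{O}_f}+\{\,v\cdot\vec{\nabla}f\ :\ v\in\mathbb{R}^n_\Gamma\,\}.
\]
Second, differentiating $F(\gamma x,u)=F(x,u)$ in $u_i$ shows $\alpha_i(F)\in\mathcal{E}_n(\Gamma)$, and $\vec{\nabla}_uF(x,0)=(\alpha_1(F)(x),\dots,\alpha_r(F)(x))$. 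Setting
\[
U\doteq\{\,v\cdot\vec{\nabla}f\ :\ v\in\mathbb{R}^n_\Gamma\,\}+\langle\alpha_1(F),\dots,\alpha_r(F)\rangle_{\mathbb{R}},
\]
every $g\in U$ can be written $g=v_1\cdot\vec{\nabla}f+v_2\cdot\vec{\nabla}_uF(\cdot,0)$ with $(v_1,v_2)\in\mathbb{R}^n_\Gamma\times\mathbb{R}^r$, whence $\pi(g)=v_1\cdot(\vec{\nabla}f(x)-\vec{\nabla}f(0))+v_2\cdot(\vec{\nabla}_uF(x,0)-\vec{\nabla}_uF(0,0))$; comparing with the definition of $T_{S_F}$ and letting $(v_1,v_2)$ range over all of $\mathbb{R}^n_\Gamma\times\mathbb{R}^r$ gives the identification $T_{S_F}=\pi(U)$.

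With these identities in hand the equivalence becomes pure linear algebra. By the first identity, $\tgsp_{\mathcal{O}_f}+\langle1,\alpha_1(F),\dots,\alpha_r(F)\rangle_{\mathbb{R}}=T_{\mathcal{O}_f}+U+\mathbb{R}\cdot1$, so~(\ref{eqtransv2}) reads $\mathcal{E}_n(\Gamma)=T_{\mathcal{O}_f}+U+\mathbb{R}\cdot1$; by the second, (\ref{eqtransv1}) reads $\mathcal{M}_n(\Gamma)=T_{\mathcal{O}_f}+\pi(U)$. If the former holds, applying $\pi$ (which is onto $\mathcal{M}_n(\Gamma)$, fixes $T_{\mathcal{O}_f}$, and kills $\mathbb{R}\cdot1$) yields the latter. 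Conversely, if $\mathcal{M}_n(\Gamma)=T_{\mathcal{O}_f}+\pi(U)$, then, using $\pi(U)\subset U+\mathbb{R}\cdot1$, one gets $\mathcal{E}_n(\Gamma)=\mathbb{R}\cdot1+\mathcal{M}_n(\Gamma)=\mathbb{R}\cdot1+T_{\mathcal{O}_f}+\pi(U)\subset T_{\mathcal{O}_f}+U+\mathbb{R}\cdot1\subset\mathcal{E}_n(\Gamma)$, forcing equality.

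The only genuine work is the bookkeeping of constants — keeping track of $g(0)$, $\vec{\nabla}f(0)$, $\vec{\nabla}_uF(0,0)$ and the extraneous $\mathbb{R}\cdot1$ summand — together with the two tiny equivariance checks (that $\phi-\phi(0)$ remains $\Gamma$-equivariant and that each $\alpha_i(F)$ is $\Gamma$-invariant). No analytic input is needed beyond the splitting $\mathcal{E}_n(\Gamma)=\mathbb{R}\cdot1\oplus\mathcal{M}_n(\Gamma)$; the step requiring the most care is the identification $T_{S_F}=\pi(U)$, where the explicit parametrization by $(v_1,v_2)$ is what makes it an equality of sets rather than merely an inclusion.
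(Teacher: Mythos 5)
Your proposal is correct and follows essentially the same route as the paper's proof: the splitting $\mathcal{E}_n(\Gamma)=\mathbb{R}\cdot 1\oplus\mathcal{M}_n(\Gamma)$, the identity $\tgsp_{\mathcal{O}_f}=T_{\mathcal{O}_f}+\{v\cdot\vec{\nabla}f\;|\;v\in\mathbb{R}^n_\Gamma\}$, and the absorption of the evaluation-at-zero constants $\vec{\nabla}f(0)$, $\vec{\nabla}_uF(0,0)$ into the $\mathbb{R}$ summand. The only difference is presentational: you make explicit, via the projection $\pi$ and the identification $T_{S_F}=\pi(U)$, both implications that the paper compresses into ``adding $\mathbb{R}$ to both sides \dots it is not hard to see.''
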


\begin{proof}
Note that $\mathcal{E}_n(\Gamma)\cong\mathcal{M}_n(\Gamma)+\mathbb{R}$, and
\begin{equation}
\{v\cdot \vec{\nabla}f\;|\; v\in \mathbb{R}^n_\Gamma\}+T_{\mathcal{O}_f}=\{\phi\cdot\vec{\nabla}f\;|\; \phi\in \vec{\mathcal{E}}_{n}(\Gamma)\}=\tgsp_{\mathcal{O}_f}.
\end{equation}
Hence, adding $\mathbb{R}$ to both sides of Equation (\ref{eqtransv1}), since we have the inclusions $\{v\cdot \vec{\nabla}f(0)\;|\; v\in \mathbb{R}^n_\Gamma\}\subset \mathbb{R}$ and $\{v\cdot \vec{\nabla}_uF(0,0)\;|\; v\in \mathbb{R}^n\}\subset \mathbb{R}$, it is not hard to see that one may arrive at
\begin{equation}
\mathcal{E}_n(\Gamma) = \tgsp_{\mathcal{O}_f}+ \langle1,\alpha_1(F),\dots,\alpha_r(F)\rangle_\mathbb{R}.
\end{equation}

\end{proof}

\subsection{Transversality theorem for $\Gamma$-invariant unfoldings}

Here we present the main result of the section: namely the transversality theorem for symmetric unfoldings. Such theorem is of fundamental importance, since it connects the
property of an unfolding being versal, i.e., of inducing any other
unfolding of the same germ, to the algebraic condition related to the transversality (see Proposition \ref{proptransversal}), which in many cases can be
analytically checked.

\begin{defn}
Let $F,G\in\mathcal{U}_{n,r}(\Gamma)$ be two $\Gamma$-transversal
unfoldings of the same germ $f\in\mathcal{M}_n(\Gamma)$. $F$ and $G$ are said to be
\textit{elementary $\Gamma$-homotopic} when, for all $s\in[0,1]$, the unfolding

\[
H_{s}=sG+(1-s)F
\]
is $\Gamma$-transversal. Moreover, $F$ and $G$ are said to be \textit{$\Gamma$-homotopic}
when there exists a finite sequence of $\Gamma$-invariant unfoldings $F=F_{0},F_{1},\dots,F_{n-1},F_{n}=G$
such that $F_{i-1}$ and $F_{i}$ are $\Gamma$-elementary homotopic for all
$i=1,\dots,n$.
\end{defn}

\begin{theorem}\label{eqtransversal}
Let $F,G\in \mathcal{U}_{n,r}(\Gamma)$ be two $\Gamma$-transversal
$r$-unfoldings of the same germ $\eta\in\mathcal{M}_n(\Gamma)$. Then, $F$ and
$G$ are $\Gamma$-equivalent.
\end{theorem}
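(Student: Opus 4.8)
The plan is to run the homotopy (Thom--Levine) method already used in the proof of the Determinacy Theorem, after a short linear-algebra step that puts $F$ and $G$ into a position where a straight-line path between them stays $\Gamma$-transversal.

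\emph{First step: elementary $\Gamma$-homotopic transversal unfoldings are $\Gamma$-equivalent.} Assume $H_s=sG+(1-s)F$ is $\Gamma$-transversal for all $s\in[0,1]$; each $H_s$ still unfolds $\eta$, which by Proposition \ref{proptransversal} has finite $\Gamma$-codimension and hence, by Theorem \ref{theodet}, is $\Gamma$-finitely-determined. I would show that each $s_0\in[0,1]$ has a neighbourhood on which $H_s\sim H_{s_0}$; compactness of $[0,1]$ together with transitivity of $\Gamma$-equivalence (Proposition \ref{propinduced}) then produces a finite chain $F=H_0\sim H_{s_1}\sim\cdots\sim H_1=G$. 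For the local statement I fix $s_0$, put $\mathcal{H}(x,u,t)=H_{s_0+t}(x,u)$ so that $\partial_t\mathcal{H}=G-F$, and look for germs $\phi$ with $\phi(\cdot,u,t)\in L_n(\Gamma)$ for $(u,t)$ near $0$, $\psi(u,t)\in\vec{\mathcal{M}}_{r+1,r}$, $h(u,t)\in\mathcal{M}_{r+1}$, all trivial at $t=0$, such that $\mathcal{H}(\phi(x,u,t),\psi(u,t),t)+h(u,t)=\mathcal{H}(x,u,0)$. Differentiating in $t$ turns this into the requirement that, for each small $t$, one can solve
\begin{equation*}
\vec{\nabla}_x\mathcal{H}\cdot\xi+\vec{\nabla}_u\mathcal{H}\cdot\zeta+(G-F)-\rho=0,
\end{equation*}
for $\xi=\partial_t\phi$ with $\xi(\cdot,u,t)\in\vec{\mathcal{M}}_n(\Gamma)$, $\zeta=\partial_t\psi$ and $\rho=\partial_t h$. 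Once $\xi$ is known, $\phi$ is recovered as the unique solution of $\partial_t\phi=\xi(\phi,u,t)$, $\phi|_{t=0}=x$; uniqueness of this ODE together with the $\Gamma$-equivariance of $\xi(\cdot,u,t)$ forces $\phi(\cdot,u,t)\in L_n(\Gamma)$, exactly as in the proof of Theorem \ref{theodet}, while $\psi$ (a diffeomorphism for small $t$, since $\psi(\cdot,0)=\mathrm{id}$) and $h$ come from integrating $\zeta,\rho$, and then $\partial_t(\mathcal{H}(\phi,\psi,t)+h)=0$ as desired.

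The hard part is exactly the solvability of the displayed equation \emph{uniformly in the parameters} $(u,t)$: this is a parametrised version of the $\Gamma$-transversality of $H_{s_0}$, which by Proposition \ref{proptransversal} reads $\mathcal{E}_n(\Gamma)=\tgsp_{\mathcal{O}_\eta}+\langle 1,\alpha_1(H_{s_0}),\dots,\alpha_r(H_{s_0})\rangle_{\mathbb{R}}$ on the central fibre and must be promoted to a full neighbourhood of $0\in\mathbb{R}^{r+1}$. The tool I would use is the Generalized Mather's Lemma (Theorem \ref{mlemma}) over the DA-algebra $\mathcal{E}_{r+1}$ in the variables $(u,t)$: one forms the finitely generated $\mathcal{E}_{r+1}$-modules spanned, inside the germs that are $\Gamma$-invariant in $x$, by $\phi\mapsto\phi\cdot\vec{\nabla}_x\mathcal{H}$, by the directions $\partial/\partial u_i$, and by the constants, checks the finite-codimension hypothesis (this is where the finite $\Gamma$-determinacy of $\eta$ is used), and concludes that $G-F$ lies in their sum. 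This is the unfolded analogue of the implication $iii.)\Rightarrow ii.)$ in the proof of Theorem \ref{theodet}, and it carries the bulk of the technical work; the subsequent ODE integration and the equivariance bookkeeping are routine.

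\emph{Second step: reducing $F$ and $G$ to a common form.} Let $V=\mathcal{E}_n(\Gamma)/\tgsp_{\mathcal{O}_\eta}$, which is finite-dimensional by Proposition \ref{proptransversal}; write $[\,\cdot\,]$ for classes in $V$ and $[1]$ for the class of the constant germ $1$, and let $\mathbb{R}\cdot[1]\subset V$ be the line it spans. By Proposition \ref{proptransversal} both $\{[1],[\alpha_1(F)],\dots,[\alpha_r(F)]\}$ and $\{[1],[\alpha_1(G)],\dots,[\alpha_r(G)]\}$ span $V$, so the tuples $([\alpha_i(F)])_i$ and $([\alpha_i(G)])_i$ project onto spanning families of $V/\mathbb{R}\cdot[1]$; as two spanning families of equal cardinality in a finite-dimensional space are related by an element of $GL_r(\mathbb{R})$, there is $B\in GL_r(\mathbb{R})$ such that $G'(x,u):=G(x,Bu)$ satisfies $[\alpha_i(G')]-[\alpha_i(F)]\in\mathbb{R}\cdot[1]$ for all $i$. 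Here $G'$ is $\Gamma$-equivalent to $G$ directly, via Definition \ref{defeq2} with $\phi=\mathrm{id}$, $\psi(u)=Bu$, $h=0$. Writing $[\alpha_i(G')]-[\alpha_i(F)]=c_i[1]$, the segment $H_s=sG'+(1-s)F$ has $[\alpha_i(H_s)]=[\alpha_i(F)]+sc_i[1]$, so $\{[1],[\alpha_1(H_s)],\dots,[\alpha_r(H_s)]\}$ still spans $V$ for every $s\in[0,1]$; hence every $H_s$ is $\Gamma$-transversal by Proposition \ref{proptransversal}, i.e.\ $F$ and $G'$ are elementary $\Gamma$-homotopic. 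The first step then gives $F\sim G'$, and combining with $G'\sim G$ and transitivity (Proposition \ref{propinduced}) yields $F\sim G$.
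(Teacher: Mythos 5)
The architecture of your first step is the same as the paper's (compactness of $[0,1]$, the infinitesimal equation, integration of the resulting ODEs, equivariance of $\phi$ via uniqueness of solutions), but the tool you invoke for the ``hard part'' cannot deliver it. To apply Theorem \ref{mlemma} you would need the relevant modules to be finitely generated over the parameter algebra $\mathcal{E}_{r+1}$, and they are not: the module $\tilde{T}_{\mathcal{H}}=\{\Phi\cdot\vec{\nabla}_x\mathcal{H}\}$ with $\Phi$ equivariant in $x$ is finitely generated over $\mathcal{E}_{n+r+1}(\Gamma)$ (invariants in \emph{all} variables), not over $\mathcal{E}_{r+1}$, and the ambient module $\mathcal{U}_{n,r+1}(\Gamma)$ is certainly not a finitely generated $\mathcal{E}_{r+1}$-module. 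Mather's lemma is a Nakayama-type finiteness statement; it cannot promote the fibrewise transversality $\mathcal{E}_n(\Gamma)=\tgsp_{\mathcal{O}_\eta}+\langle 1,\alpha_1(H_{s_0}),\dots,\alpha_r(H_{s_0})\rangle_{\mathbb{R}}$ (a statement over $\mathbb{R}$ at $u=t=0$) to the mixed-module identity $\mathcal{U}_{n,r+1}(\Gamma)=\tilde{T}_H+\langle 1,\partial H/\partial u_1,\dots,\partial H/\partial u_r\rangle_{\mathcal{E}_{r}}$ that the solvability of your displayed equation for $(\xi,\zeta,\rho)$ requires. That promotion is exactly the content of the preparation theorem for $\Gamma$-invariant unfoldings (Theorem \ref{preptheorem}, the adequacy of the pair $(\mathcal{U}_{n,r}(\Gamma),\mathcal{M}_r)$), packaged as Corollary \ref{coroalg}, which is what the paper uses at this point; as written, your argument has a genuine gap here. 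A smaller omission: to get $\psi\in\vec{\mathcal{M}}_{r+1,r}$, $h\in\mathcal{M}_{r+1}$ and $\phi(\cdot,0,t)=\mathrm{id}$ (so that the constructed equivalence is an equivalence of unfoldings of the same germ $\eta$), you must note that $\partial_t\mathcal{H}=G-F$ vanishes at $u=0$, hence lies in $\mathcal{M}_r\cdot\mathcal{U}_{n,r+1}(\Gamma)$ by Lemma \ref{lemma1}, and multiply the module identity by $\mathcal{M}_r$ before solving, as the paper does before invoking the geometric lemma (Lemma \ref{difeq2}).

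Your second step, by contrast, is correct and genuinely different from the paper's reduction, and somewhat cleaner: the paper connects each transversal unfolding to a normal form $Y(x,u)=f(x)+u_1z_1+\dots+u_qz_q$ through an inductive chain of elementary homotopies interleaved with parameter permutations and rescalings, whereas you obtain elementary homotopy directly by one linear reparametrization $u\mapsto Bu$ chosen so that the classes $[\alpha_i(F)]$ and $[\alpha_i(G')]$ agree modulo $\mathbb{R}\cdot[1]$ in the finite-dimensional space $\mathcal{E}_n(\Gamma)/\tgsp_{\mathcal{O}_\eta}$, after which the straight segment visibly stays $\Gamma$-transversal. That part of your proposal can stand as is; only the preparation-theorem step above needs to be repaired.
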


\begin{proof}
Recall Definition \ref{defeq2}. First, it shall be shown that it is enough to prove
the theorem for the special case where $F$ and $G$ are $\Gamma$-elementary homotopic. Let $k\in \mathbb{N}$ be such that $\mathcal{M}^k_n(\Gamma)\subset \tgsp_{\mathcal{O}_f}$ (such $k$ always exists by Theorem \ref{theodet}). Define in $J_n^{k}(\Gamma)$ the subspace
\begin{equation}
A=j_{k}\left(\tgsp_{\mathcal{O}_f}\right).
\end{equation}
Note that $J_{n}^{k}(\Gamma)$ is isomorphic to an Euclidean space. Hence,
let $A^{\perp}$ be the orthogonal subspace of $A$, and let $\{z_{1},\dots,z_{q}\}$
be a basis of $A^{\perp}$ (later we discuss the case of $A^{\perp}=\{0\}$). Since $F$ is $\Gamma$-transversal by hypothesis,
there exists some $a\in A$ and some constants $a_{1},\dots,a_{r}\in\mathbb{R}$
such that
\begin{equation}
z_{1}=a+\sum_{j=1}^{r}a_{j}j^{k}(\alpha_{j}(F)).\label{eqp1}
\end{equation}
Let $j_{1}$ be such that $a_{j_{1}}\neq0$
in the above equation. Hence, define a new unfolding $F_{1}$ given
by
\[
F_{1}(x,u)=f(x)+u_{1}\alpha_{1}(F)+\dots+u_{j_{1}}z_{1}+\dots+u_{r}\alpha_{r}(F),
\]
if $a_{j_{1}}>0$, or
\[
F_{1}(x,u)=f(x)+u_{1}\alpha_{1}(F)+\dots-u_{j_{1}}z_{1}+\dots+u_{r}\alpha_{r}(F),
\]
if $a_{j_{1}}<0$. Note that $F_{1}$ is also a $\Gamma$-invariant $r$-unfolding
of $f$. Moreover, $F$ and $F_{1}$ are $\Gamma$-elementary homotopic. Indeed,
for an unfolding $H_{s}=sF_{1}+(1-s)F$ where $s\in[0,1]$, $j^{k}(\alpha_{j}(H_{s}))=j^{k}(\alpha_{j}(F))$
for all $j\neq j_{1}$, and
\[
j^k(\alpha_{j_{1}}(H_{s}))=sj^{k}(\alpha_{j_{1}}(F))+sgn(a_{j_{1}})(1-s)z_{1}.
\]
By Equation (\ref{eqp1}), $j^{k}(\alpha_{j_{1}}(F))$ can be expressed
as
\begin{align*}
j^{k}(\alpha_{j_{1}}(F))= & \frac{j^{k}(\alpha_{j_{1}}(H_{s}))-sgn(a_{j_{1}})\left(a+\sum_{\substack{j=1\\
j\neq j_{1}
}
}^{r}a_{j_{1}}j_{k}(\alpha_{j}(F))\right)}{s+(1-s)\vert a_{j_{1}}\vert}\\
= & \frac{j^{k}(\alpha_{j_{1}}(H_{s}))-sgn(a_{j_{1}})\left(a+\sum_{\substack{j=1\\
j\neq j_{1}
}
}^{r}a_{j_{1}}j^{k}(\alpha_{j}(H_{s}))\right)}{s+(1-s)\vert a_{j_{1}}\vert}.
\end{align*}
Hence, since $F$ is $\Gamma$-transversal, it is easy to see that that $H_{s}$
will also be $\Gamma$-transversal. Moreover, by a diffeomorphic change of
parameters $e\in\epsilon(r,r)$ given by

\[
e(u_{1},\dots,u_{j_{1}},\dots,u_{r})=(u_{j_{1}},u_{2},\dots,sgn(a_{j_{1}})u_{1},u_{j_{1}+1},\dots,u_{r}),
\]
one can obtain another $\Gamma$-transversal $r$-unfolding $\tilde{F}_{1}$
of $f$ given by
\begin{align*}
\tilde{F}_{1}(x,u)=F_{1}(x,e(u))= & f(x)+u_{1}z_{1}+u_{2}\alpha_{2}(F)+\dots+u_{j_{1}}\alpha_{1}(F)+u_{j_{1}+1}\alpha_{j_{1}+1}(F)\\
 & +\dots+u_{r}\alpha_{r}(F),
\end{align*}
where $\tilde{F}_{1}$ is clearly $\Gamma$-equivalent to $F_{1}$. Now, consider
a $\Gamma$-transversal $r$-unfolding $F_{m}$ of $f$, where $m<q$, such
that $\alpha_{j}(F_{m})=z_{j}$ for $j\leq m$, and $\alpha_{j}(F_{m})=\alpha_{j}(F)$
for $j>m$. Then, since $F_{m}$ is $\Gamma$-transversal, it follows that,
similar to the above case, there exists some $a\in A$ and some constants
$a_{1},\dots,a_{r}\in\mathbb{R}$ such that

\[
z_{m+1}=a+\sum_{j=1}^{m}a_{j}z_{m}+\sum_{j=m+1}^{r}a_{j}j^{k}(\alpha_{j}(F)).
\]
Again, clearly $a_{j'}\neq0$ for some $j'>m$. Defining a new unfolding
$F_{m+1}$ of $f$ by
\begin{align*}
F_{m+1}(x,u)= & f(x)+u_{1}z_{1}+\dots+u_{m}z_{m}+u_{m+1}\alpha_{m+1}(F_{m})+\dots+sgn(a_{j'})u_{j'}z_{m+1}\\
 & +u_{j'+1}\alpha_{j'+1}(F_{m})+\dots+u_{r}\alpha_{r}(F_{m}),
\end{align*}
and proceeding in a similar way to the above discussion, it is not
hard to see that $F_{m+1}$ is also $\Gamma$-transversal and elementary $\Gamma$-homotopic
to $F_{m}$. Moreover, as before, by the germ diffeomorphism $e\in\epsilon(r,r)$
\[
e(u_{1},\dots,u_{r})=(u_{1},\dots,u_{m},u_{j'},u_{m+2},\dots,u_{j'-1},sgn(a_{j'})u_{m+1},u_{j'+1},\dots,u_{r}),
\]
one can find another unfolding $\tilde{F}_{m+1}$ of $f$ given by
\begin{align*}
\tilde{F}_{m+1}(x,u)=F_{m+1}(x,e(u))= & f(x)+u_{1}z_{1}+\dots+u_{m+1}z_{m+1}+u_{m+2}\alpha_{m+2}(F_{m})\\
+\dots+ & u_{j'}\alpha_{m+1}(F_{m})+u_{j'+1}\alpha_{j'+1}(F_{m})+\dots+u_{r}\alpha_{r}(F_{m}),
\end{align*}
where $\tilde{F}_{m+1}$ is $\Gamma$-equivalent to $F_{m+1}$. By induction,
it follows that there exists a finite sequence of $\Gamma$-transversal unfoldings
of $f$, $F,F_{1},\tilde{F}_{q},\dots,F_{q},\tilde{F}_{n}$, where
every unfolding in the sequence is either $\Gamma$-equivalent or elementary
$\Gamma$-homotopic to the next
, and $\tilde{F}_{q}$ is given by
\[
\tilde{F}_{n}(x,u)=f(x)+u_{1}z_{1}+\dots+u_{q}z_{q}+h(x,u_{q+1},\dots,u_{r}).
\]
By a straightforward calculation it also follows that $\tilde{F}_{q}$
is elementary $\Gamma$-homotopic to the unfolding
\[
Y(x,u)=f(x)+u_{1}z_{1}+\dots+u_{q}z_{q}.
\]
Hence, for any $\Gamma$-transversal unfolding $F,G$ of $f$ with $r$ parameters,
there exists a finite sequence of unfoldings $F,F_{1},\tilde{F}_{1},\dots,F_{q},\tilde{F}_{q},Y,\tilde{G}_{q},G_{q},\dots,\tilde{G}_{1},G_{1},G$
such that every unfolding in the sequence is either $\Gamma$-equivalent or
elementary $\Gamma$-homotopic to the next
. Consequently, If one proves that elementary $\Gamma$-homotopic unfoldings are also
$\Gamma$-equivalent, one proves that $F$ is $\Gamma$-equivalent to $G$.

\begin{remark}
If $A^{\perp}=\{0\}$ i.e., $\cod_\Gamma(f)=0$, then we take $Y$ to be the trivial unfolding $Y(x,u)=f(x)$.
\end{remark}

The idea of the proof is to show that, when $F$ and $G$ are elementary
$\Gamma$-homotopic, then for every $s\in[0,1]$, there is an open interval
$I_{s}$ of $[0,1]$ with $s\in I_{s}$ such that for all $s'\in I_{s}$,
the unfolding
\[
H_{s'}=s'G+(1-s')F
\]
is isomorphic to $H_{s}$. Assuming that this holds, then the union
$\cup_{s\in[0,1]}I_{s}$ is an open cover of $[0,1]$. But since $[0,1]$
is compact, it follows that one may find a finite sequence $\{s_{0},s_{1},\dots,s_{M}\}$
with $s_{0}=0$ and $s_{M}=1$ such that $H_{0},H_{s_{1}},\dots,H_{s_{M-1}},H_{1}$
are all equivalent. But since $H_{0}=F$ and $H_{1}=G$, this implies
that $F$ and $G$ are equivalent. Hence, let $F,G$ be two elementary
homotopic even unfoldings of $f$, and $s\in(0,1)$. Define
\[
H(x,u,t)=(s+t)G+(1-s-t)F.
\]
The goal is to show that for some sufficiently small $t\in\mathbb{R}$,
$H(x,u,t)$ is equivalent to $H(x,u,0)$. Note that there exits a
neighborhood $V$ of $0\in\mathbb{R}$ such that if $t\in V$, $(x,u)\mapsto H(x,u,t)$
is $\Gamma$-transversal, since by hypothesis $F$ and $G$ are elementary
 $\Gamma$-homotopic. Therefore, Corollary \ref{coroalg} applies, and it follows that
\begin{equation}\label{eqtil}
\mathcal{U}_{n,r+1}(\Gamma)=\tilde{T}_H+\left\langle 1,\frac{\partial H}{\partial u_1},\dots,\frac{\partial H}{\partial u_r}\right\rangle_{\mathcal{E}_{r}}.
\end{equation}
Now, note that $H(x,u,t)\in\mathcal{U}_{n,r}(\Gamma)$ and 
\begin{equation}
\frac{\partial H(x,0,t)}{\partial t}=\frac{\partial(f(x))}{\partial t}=0.
\end{equation}
Hence, by lemma \ref{lemma1}, 
\begin{equation}
\frac{\partial H}{\partial t}\in \mathcal{M}_r\cdot\mathcal{U}_{n,r+1}(\Gamma).
\end{equation}
Multiplying Equation (\ref{eqtil}) by $\mathcal{M}_r$, it follows that there exists some germs:
\begin{align*}
&\xi(x,u,t)\in \mathcal{M}_r\cdot\vec{\mathcal{E}}_{n+r+1,n},\; \text{ with }x\mapsto \xi(x,u,t)\in \vec{\mathcal{E}}_n(\Gamma) \text{ for $(u,t)$ in some neighborhood of $0\in \mathbb{R}^{r+1}$},\\
&\chi(u,t)\in \mathcal{M}_r\cdot\vec{\mathcal{E}}_{r+1,r},\;\text{ and}\\
&\chi_0(u,t)\in \mathcal{M}_r\cdot\mathcal{E}_{r+1}.
\end{align*}
Such that
\begin{equation}\label{eq11}
\frac{\partial H}{\partial t}(x,u,t)=\xi(x,u,t)\cdot \vec{\nabla}_xH(x,u,t) + \chi(u,t)\cdot \vec{\nabla}_uH(x,u,t)+\chi_{0}(u,t).
\end{equation}
Applying the geometric lemma (Lemma \ref{difeq2}) for $H\in\mathcal{U}_{n,r+1}(\Gamma)$, it follows that there exists some germs
\begin{align*}
&\phi(x,u,t)\in \vec{\mathcal{E}}_{n+r+1,n},\; \text{ with }x\mapsto \xi(x,u,t)\in \vec{\mathcal{E}}_n(\Gamma) \text{ for $(u,t)$ in some neighborhood of $0\in \mathbb{R}^{r+1}$},\\
&\psi(u,t)\in \vec{\mathcal{E}}_{r+1,r},\;\text{ and}\\
& h(u,t)\in \mathcal{E}_{r+1}
\end{align*}
such that
\begin{equation}\label{eqequiv}
H(x,u,0)=H(\phi(x,u,t),\psi(u,t),t)+h(u,t)
\end{equation}
for small enough $t\in \mathbb{R}$. Note also that, by definition of the germs $\phi, \psi$ and $h$, they in particular satisfy the following system of O.D.Es. (for more details see the proof of Lemma \ref{difeq2})
\begin{equation}
\begin{cases}
\frac{\partial \phi (x,0,t)}{\partial t}=-\xi(\phi(x,0,t),0,t),\quad \phi(x,0,0)=x, \\
\frac{\partial\psi(0,t)}{\partial t}=-\chi(\psi(0,t),t),\quad\quad\quad\:\psi(0,0)=0,\\
\frac{\partial h(0,t)}{\partial t}=-\chi_0(0,t),\quad\quad\quad\quad\quad \,h(0,0)=0,
\end{cases}
\end{equation}
for small enough $t\in\mathbb{R}$. Since $\xi(x,0,t)=\chi(0,t)=\chi_0(0,t)=0$, the unique solution of the above system of O.D.Es. is
\begin{equation}
\phi(x,0,t)=x,\;\; \psi(0,t)=0,\;\; h(0,t)=0.
\end{equation}
In particular, this and <equation (\ref{eq11}) together imply that $H(x,u,t)$ is equivalent to $H(x,u,0) $ for small enough $t\in\mathbb{R}$.
\end{proof}

\begin{theorem}[Transversality theorem for $\Gamma$-invariant unfoldings]\label{trasnvertheo}
Let $F\in\mathcal{U}_{n,r}(\Gamma)$ be a $\Gamma$-equivariant $r$-unfolding
of $f\in\mathcal{M}_{x}(\Gamma)$. Then, the following are equivalent:
\begin{enumerate}[label=\roman*.)]
\item $F$ is versal,
\item $F$ is $\Gamma$-transversal, i.e., $\mathcal{E}_{x}(\Gamma)=\tgsp_{\mathcal{O}_f}+\left\langle 1,\alpha_{1}(F),\dots,\alpha_{r}(F)\right\rangle _{\mathbb{R}}$. 
\end{enumerate}
\end{theorem}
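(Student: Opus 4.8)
The plan is to prove the two implications separately. The substantive direction $(ii)\Rightarrow(i)$ I would reduce to the equivalence theorem for transversal unfoldings (Theorem~\ref{eqtransversal}) by a fibre-sum/suspension argument; the converse $(i)\Rightarrow(ii)$ I would obtain by differentiating, at the base point, the inducing relation of Definition~\ref{defeq2}. Throughout I use Proposition~\ref{proptransversal} to pass freely between the geometric form of $\Gamma$-transversality and the algebraic identity $\mathcal{E}_{n}(\Gamma)=\tgsp_{\mathcal{O}_f}+\langle 1,\alpha_{1}(F),\dots,\alpha_{r}(F)\rangle_{\mathbb{R}}$.

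For $(ii)\Rightarrow(i)$, assume $F$ is $\Gamma$-transversal. Then $\mathcal{E}_{n}(\Gamma)/\tgsp_{\mathcal{O}_f}$ is spanned by the classes of $1,\alpha_{1}(F),\dots,\alpha_{r}(F)$, hence is finite-dimensional, so $\cod_{\Gamma}(f)<\infty$ and Theorem~\ref{eqtransversal} is available. Let $G\in\mathcal{U}_{n,d}(\Gamma)$ be an arbitrary $\Gamma$-invariant $d$-unfolding of $f$; I need to show $G$ is induced from $F$. Form the fibre sum
\[
H(x,u,v)\doteq F(x,u)+G(x,v)-f(x)\in\mathcal{U}_{n,r+d}(\Gamma),
\]
which is $\Gamma$-invariant in $x$ (each of $F,G,f$ is) and has $H(x,0,0)=f(x)$. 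Since $\alpha_{i}(H)=\alpha_{i}(F)$ for $i\le r$ and $\alpha_{r+j}(H)=\alpha_{j}(G)$, the span in the transversality criterion only grows, so $H$ is again $\Gamma$-transversal; the same computation shows the trivial suspension $\widetilde{F}(x,u,v)\doteq F(x,u)$ is a $\Gamma$-transversal $(r+d)$-unfolding of $f$ too. By Theorem~\ref{eqtransversal}, $H$ and $\widetilde{F}$ are $\Gamma$-equivalent, so $H$ is induced from $\widetilde{F}$; moreover $\widetilde{F}$ is induced from $F$ through the projection $(x,u,v)\mapsto(x,u)$ (with $\phi=\mathrm{id}$, $h=0$), and $G$ is induced from $H$ through $(x,v)\mapsto(x,0,v)$ (note $H(x,0,v)=G(x,v)$); one checks these auxiliary maps lie in the required $\vec{\mathcal{M}}$-spaces and are $\Gamma$-equivariant in $x$. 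Two applications of Proposition~\ref{propinduced} then give that $G$ is induced from $F$, and since $G$ was arbitrary, $F$ is versal.

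For $(i)\Rightarrow(ii)$, assume $F$ is versal and fix an arbitrary $g\in\mathcal{E}_{n}(\Gamma)$. The germ $G(x,v)\doteq f(x)+v\,g(x)$ is a $\Gamma$-invariant $1$-unfolding of $f$, hence induced from $F$:
\[
f(x)+v\,g(x)=F(\phi(x,v),\psi(v))+h(v),\qquad\psi(0)=0,\ h(0)=0,
\]
with $\phi,\psi,h$ as in Definition~\ref{defeq2}. Evaluating at $v=0$ shows $\phi(\cdot,0)$ is an $L_{n}(\Gamma)$-symmetry of $f$; after the standard normalization of an inducing morphism we may take $\phi(x,0)=x$. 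Differentiating the displayed identity in $v$ at $v=0$ then gives
\[
g(x)=\xi(x)\cdot\vec{\nabla}f(x)+\sum_{j=1}^{r}c_{j}\,\alpha_{j}(F)(x)+b,
\]
where $\xi\doteq\partial_{v}\phi(\cdot,0)\in\vec{\mathcal{E}}_{n}(\Gamma)$ (a $v$-derivative of a $\Gamma$-equivariant family is again $\Gamma$-equivariant), $c_{j}\doteq\partial_{v}\psi_{j}(0)\in\mathbb{R}$, and $b\doteq h'(0)\in\mathbb{R}$. Hence $g\in\tgsp_{\mathcal{O}_f}+\langle 1,\alpha_{1}(F),\dots,\alpha_{r}(F)\rangle_{\mathbb{R}}$, and as $g$ was arbitrary this is precisely the transversality condition of Proposition~\ref{proptransversal}.

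The conceptual weight of the result sits in Theorem~\ref{eqtransversal}, which is already in hand, so what remains is essentially bookkeeping. The step I expect to demand the most care is the normalization $\phi(x,0)=x$ in the second implication: if the notion of induced unfolding in Definition~\ref{defeq2} is not taken with this built in, one must instead verify that $\Gamma$-transversality depends only on the $\Gamma$-equivalence class of an unfolding, which comes down to tracking how the $\alpha_{j}(F)$ and the extended tangent space $\tgsp_{\mathcal{O}_f}$ transform under an $L_{n}(\Gamma)$-symmetry of $f$ (using orthogonality of the $\Gamma$-action, as in the argument following Definition~\ref{deftg}). In the first implication there is no genuine obstacle beyond the routine checks that the inducing data meet the domain and $\Gamma$-equivariance requirements of Definition~\ref{defeq2}.
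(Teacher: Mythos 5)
Your proposal is correct and follows essentially the same route as the paper's own proof: for $(ii)\Rightarrow(i)$ the fibre sum $H(x,u,v)=F(x,u)+G(x,v)-f(x)$ together with the constant suspension of $F$, Theorem~\ref{eqtransversal}, and transitivity of induction (Proposition~\ref{propinduced}); for $(i)\Rightarrow(ii)$ the one-parameter unfolding $f+v\,g$ differentiated at $v=0$. Your explicit attention to the normalization $\phi(x,0)=x$ is a point the paper's proof passes over silently, but it does not change the argument's structure.
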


\begin{proof} $i)\implies ii)$. Let $p\in\mathcal{E}_{x}(\Gamma)$
be an arbitrary $\Gamma$-invariant germ, and consider the $1$-unfolding $G(x,t)=f(x)+tp(x)$.
Clearly $G$ is a $\Gamma$-invariant unfolding of $f$, and if
$F$ is versal, then by hypothesis there exists some germs $\phi,\psi$
and $h$ such that
\[
G(x,t)=f(x)+tp(x)=F(\phi(x,t),\psi(t))+h(t).
\]
Differentiating with respect to $t$ and taking $t=0$, we arrive
at
\[
p(x)=\frac{\partial \phi(x,0)}{\partial t}\cdot \vec{\nabla}_xf(x)+\sum_{i}\frac{\partial F(x,0)}{\partial u_{i}}\frac{\partial\psi_{i}(0)}{\partial t}+\frac{\partial h(0)}{\partial t}.
\]
By the assumed $\Gamma$-equivariance of $t\mapsto\phi(x,t)$ in a suitable
neighborhood of $0\in\mathbb{R}$, one has
\begin{equation}
\frac{\partial \phi(x,0)}{\partial t}\cdot\vec{\nabla}_xf(x)\in T_{\mathcal{O}_f},
\end{equation}
In particular,
\begin{equation}
p\in \tgsp_{\mathcal{O}_f}+\left\langle 1,\alpha_{1}(F),\dots,\alpha_{r}(F)\right\rangle _{\mathbb{R}},
\end{equation}
hence $ii)$ follows.

\smallskip

\noindent$ii)\implies i)$. Let $F$ be a $\Gamma$-transversal $r$-unfolding of $f$, and let $G$ be an arbitrary $\Gamma$-invariant $d$-unfolding of $f$. Define two $(r+d)$-unfoldings of $f$ by
\begin{equation*}
H(x,u,v)=F(x,u)+G(x,v)-f(x), \;\text{and }\; K(x,u,v)=F(x,u).
\end{equation*}
Clearly, $H$ and $K$ are $\Gamma$-transversal, as $F$ is $\Gamma$-transversal. Hence, by Proposition \ref{eqtransversal}, $H$ and $K$ are $\Gamma$-equivalent. Moreover, $G$ is induced from $H$, by the germs $\phi(x,u)=x$, $\psi(u,v)=u$, $\gamma(u,v)=0$, and $K$ is induced from $F$, by the germs $\phi(x,u)=x$, $\psi(v)=(0,v)$, $\gamma(v)=0$. Therefore, by Proposition \ref{propinduced}, $G$ is induced from $F$, i.e., $F$ is versal.

\end{proof}

\begin{remark}
In view of the definition of $\Gamma$-codimension (Definition \ref{defcod}) and the algebraic condition of transversality (Equation (\ref{eqtransv2})), it clearly follows that the minimum possible number of parameters that a $\Gamma$-invariant $r$-unfolding $F$ of a germ $f$ needs to have in order to be $\Gamma$-transversal is $r=\cod_\Gamma(f)$. By thetTransversality theorem, one may also state that this is also the minimum possible number of parameters for $F$ to be versal. Such unfoldings which are versal/transversal while having the minimum number of parameters $r=\cod_\Gamma(f)$ are usually referred to in the literature as \textit{universal unfoldings}.
\end{remark}

\subsection{Stability theorem for $\Gamma$-invariant unfoldings}

We now move on to the proof of the stability theorem. In short, the theorem states that any perturbation added to a transversal unfolding, if small enough (in a suitable topology), does not break its transversality, and in fact the perturbed unfolding is equivalent to the original unfolding. For the case of $\Gamma$-invariant unfoldings, we must be cautious with the kind of perturbations that we allow. In general, transversal $\Gamma$-invariant unfoldings are \textbf{not} stable when the perturbation breaks the assumed symmetry of the unfolding. Therefore, here we analize the situation for symmetry-preserving perturbations. First, we start by defining a suitable topology for the perturbations.

\begin{defn} \label{deftop}
Let $U$ be a neighborhood of $0\in \mathbb{R}^n\times \mathbb{R}^r$.
Define $C_{\Gamma}^{\infty}(U,\mathbb{R})$ as the set of all infinitely
differentiable functions $F:U\rightarrow\mathbb{R}$ such that $x\mapsto F(x,u)$
is $\Gamma$-invariant for all $u$ such that $(0,u)\in U$. The topology considered
in $C_{\Gamma}^{\infty}(U,\mathbb{R})$ is the so-called \textit{$C^\infty$-topology}, whose local basis at $0$ is the union of all sets
\[
V(L,k,\epsilon)=\left\{ F\in C_{\Gamma}^{\infty}(U,\mathbb{R})\;|\;\sum_{|\alpha|\leq k}\frac{1}{\alpha!}\sup_{(x,u)\in L}\left\vert \frac{\partial^{|\alpha|}(F)(x,u)}{\partial x_{1}^{\alpha_{1}}\cdots\partial x_{n}^{\alpha_{n}}}\right\vert <\epsilon\right\} ,
\]
where $L\subset U$ is compact, $k\in\mathbb{N}$
and $\epsilon>0$.
\end{defn}

\begin{remark}
Note that in such topology, by definition, for any $k\in\mathbb{N}$, there exists a neighborhood $V$ of $0$ such that the derivatives of all of the elements in $V$ up to order $k$ are as small as one wants. This property will be very important in the proof of the stability theorem, as we will need to control all the derivatives of the perturbations up to some finite order $k$. 
\end{remark}

An arbitrary perturbation added to an unfolding may also have the effect of dislocating the point of singularity of the initial germ. Therefore, We need to modify our equivalence relation, to include also diffeomorphisms that does not preserve the origin.

\begin{defn} Let $U,V$ be open subsets of $\mathbb{R}^{n+r}$,
$G:U\rightarrow\mathbb{R}$, $F:V\rightarrow\mathbb{R}$ two smooth
functions, and $x,y\in\mathbb{R}^n$ such that
\begin{equation}
x=\gamma x\text{ for all }\gamma\in \Gamma,\quad\text{and}\quad y=\gamma y\text{ for all }\gamma\in \Gamma.
\end{equation}
Then, \textit{$G$ at $(x,u)\in U$ is equivalent to $F$
at $(y,v)\in V$} if, for some neighborhood $U_{1}$ of $\mathbb{R}^{n}$
and some neighborhood $U_{2}$ of $\mathbb{R}^{r}$, such that $U_{1}\times U_{2}\subset U$,
there exist smooth functions

\begin{enumerate}[label=\roman*.)]
\item $\phi:U_1\times U_2\rightarrow\mathbb{R}^n$, such that $\phi(0,0)=x'$ and $x\mapsto\phi(x,u)$ is a diffeomorphism for all $u\in U_2$,
\item $\psi:U_2\rightarrow \mathbb{R}^r$, such that $\psi(0)=u'$ and $\psi$ is a diffeomprhism,
\item $h:U_2\rightarrow\mathbb{R}$,
\end{enumerate}
such that, for all $(x',u')\in U_{1}\times U_{2}$, one has
\[
G(x',u')=F(\phi(x',u'),\psi(u'))+\gamma(u').
\]
\end{defn}

\begin{defn}\label{defstable}
Let $F\in \mathcal{U}_{n,r}(\Gamma)$, with representative $\tilde{F}$. $F$ is said to be \textit{stable} if for every open neighborhood $U$ of $0\in \mathbb{R}^{n+r}$, there exists a neighborhood $W$ of $\tilde{F}$ in the $C^\infty$-topology of $C^\infty_\Gamma(\mathbb{R}^{n+r},\mathbb{R})$ such that for every $\tilde{F}'\in W$, there exists some $(x',u')\in U^\Gamma$, where $U^\Gamma=\{(x,u)\in U \;|\; x=\gamma x\text{ for all }\gamma\in \Gamma\}$, such that $F$ is equivalent to $\tilde{F}'$ at $(x',u')$; i.e., there exists some neighbhood $U_1$ of $0\in \mathbb{R}^n$, some neighborhood $U_2$ of $0\in \mathbb{R}^r$, and some smooth functions

\begin{enumerate}[label=\roman*.)]
\item $\phi:U_1\times U_2\rightarrow\mathbb{R}^n$, such that $\phi(0,0)=x'$ and $x\mapsto\phi(x,u)$ is a diffeomorphism for all $u\in U_2$,
\item $\psi:U_2\rightarrow \mathbb{R}^r$, such that $\psi(0)=u'$ and $\psi$ is a diffeomprhism,
\item $h:U_2\rightarrow\mathbb{R}$,
\end{enumerate}
such that

\begin{equation*}
G(x,u)=\tilde{G}'(\phi(x,u),\psi(u))+h(u).
\end{equation*}
\end{defn}

\begin{defn} Let $U=V\times W\subset\mathbb{R}^{n+r}$ be an open
neighborhood of $0$ and let $F\in C^\infty_\Gamma(U,\mathbb{R})$. For $k\in\mathbb{N}$, define $j_{1}^{k}F:\mathbb{R}^n_\Gamma\times \mathbb{R}^r\rightarrow J_{n}^{k}(\Gamma)$
by
\[
j_{1}^{k}F(x,u)(z)=\sum_{|\alpha|\leq k,\alpha\neq0}\frac{1}{\alpha!}\frac{\partial^{|\alpha|}F(x,u)}{\partial x_{1}^{\alpha_{1}}\cdots\partial x_{n}^{\alpha_{n}}}z_{1}^{\alpha_{1}}\dots z_{n}^{\alpha_{n}},
\]
i.e., $j_{1}^{k}F(x,u)(z)=j^{k}(F(x+z,u))-F(x,u)$. \end{defn}
\begin{remark}
We recall that
\begin{equation}
\mathbb{R}_\Gamma^n=\{x\in \mathbb{R}^n\;|\; x=\gamma x\text{ for all }\gamma\in \Gamma\}.
\end{equation}
\end{remark}

\begin{theorem}[Stability theorem for $\Gamma$-invariant unfoldings]\label{theostable} Let $f\in \mathcal{M}_{n}(\Gamma)$, and let $F\in\mathcal{U}_{n,r}(\Gamma)$
be a $\Gamma$-transversal  unfolding of $f$. Then, $F$ is $\Gamma$-stable.
\end{theorem}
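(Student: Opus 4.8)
The plan is to reduce the assertion to the uniqueness of $\Gamma$-transversal $r$-unfoldings already established in Theorem \ref{eqtransversal}, by passing to a finite-dimensional jet space and invoking the stability of transversal intersections under small perturbations. First observe that $\Gamma$-transversality of $F$ forces, via Equation (\ref{eqtransv2}), $\cod_\Gamma(f)\le r<\infty$; hence by the Determinacy Theorem \ref{theodet} there is a $k\in\mathbb{N}$ with $\mathcal{M}_n^k(\Gamma)\subset\tgsp_{\mathcal{O}_f}$, and $f$ is $\Gamma$-$k$-determined. One checks that $\Gamma$-$k$-determinacy is invariant under $\Gamma$-equivalence, so every germ $\Gamma$-equivalent to $f$ is again $\Gamma$-$k$-determined. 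Fix a representative $\tilde F\in C^\infty_\Gamma(U,\mathbb{R})$ of $F$; given a neighborhood $U$ of $0$ we must produce a $C^\infty_\Gamma$-neighborhood $W$ of $\tilde F$ such that every $\tilde F'\in W$ is equivalent, at some point $(x',u')\in U^\Gamma$ close to $0$, to $F$ in the sense of Definition \ref{defstable}.

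\textbf{Locating the base point.} Consider the map $\Phi=j_1^k\tilde F:\mathbb{R}^n_\Gamma\times\mathbb{R}^r\to J_n^k(\Gamma)$, which satisfies $\Phi(0,0)=j^kf$ and, crucially, is only ever evaluated at points with $x\in\mathbb{R}^n_\Gamma$; this is exactly what will keep $\Gamma$-invariance intact after the translation below. Inside $J_n^k(\Gamma)$ let $\Sigma_f$ be the orbit of $j^kf$ under the $k$-jet group of $L_n(\Gamma)$ (allowing in addition a translation of the source point by a vector of $\mathbb{R}^n_\Gamma$ and a constant shift); this is a submanifold whose tangent space at $j^kf$ is $j^k(\tgsp_{\mathcal{O}_f})+\mathbb{R}$, by the same computation used in the proof of Theorem \ref{eqtransversal}. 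Reading the algebraic transversality condition (\ref{eqtransv2}) of Proposition \ref{proptransversal} inside $J_n^k(\Gamma)$ shows that it is precisely geometric transversality of $\Phi$ to $\Sigma_f$ at $(0,0)$. Since transversal intersections are stable, there are a compact neighborhood $L$ of $0$ and $\delta>0$ such that every map $C^1(L)$-close to $\Phi$ within $\delta$ still meets $\Sigma_f$ transversally at a point of $L$. The $C^\infty$-topology of Definition \ref{deftop} is designed precisely so that $\tilde F'\mapsto j_1^k\tilde F'$ is continuous into $C^1(L)$; hence there is a $C^\infty_\Gamma$-neighborhood $W$ of $\tilde F$ so that for $\tilde F'\in W$ one finds $(x',u')\in L\cap(\mathbb{R}^n_\Gamma\times\mathbb{R}^r)$ with $j_1^k\tilde F'(x',u')\in\Sigma_f$; after shrinking $W$ we may also assume that the translated unfolding $\hat F'(x,u):=\tilde F'(x+x',u+u')-\tilde F'(x',u')$ is still $\Gamma$-transversal, again because the condition (\ref{eqtransv2}), once read in $J_n^k(\Gamma)$, is open.

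\textbf{Conclusion.} By construction $\hat F'\in\mathcal{U}_{n,r}(\Gamma)$ is an $r$-unfolding of the germ $g(x):=\tilde F'(x+x',u')-\tilde F'(x',u')\in\mathcal{M}_n(\Gamma)$ (it lies in $\mathcal{M}_n(\Gamma)$ since $g(0)=0$ and $x'\in\mathbb{R}^n_\Gamma$), and $j^kg\in\Sigma_f$ together with the $\Gamma$-$k$-determinacy of $f$ gives $g=f\circ\phi_0$ for some $\phi_0\in L_n(\Gamma)$. Put $\tilde G(x,u):=\hat F'(\phi_0^{-1}(x),u)$; then $\tilde G(x,0)=f(x)$, so $\tilde G$ is a $\Gamma$-invariant $r$-unfolding of $f$, and the coordinate change by $\phi_0^{-1}\in L_n(\Gamma)$ carries $\tgsp_{\mathcal{O}_g}$ and the germs $\alpha_i(\hat F')$ onto $\tgsp_{\mathcal{O}_f}$ and $\alpha_i(\tilde G)$, so $\tilde G$ is $\Gamma$-transversal as an unfolding of $f$. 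By Theorem \ref{eqtransversal}, $\tilde G$ and $F$ are $\Gamma$-equivalent. On the other hand $\tilde G$ is $\Gamma$-equivalent to $\hat F'$ as an unfolding (via $\phi_0^{-1}$ and the identity on parameters), and $\hat F'$ is nothing but $\tilde F'$ viewed at the base point $(x',u')$ in the sense of Definition \ref{defstable}: the two translations supply the maps $\phi,\psi$ with $\phi(0,0)=x'$ and $\psi(0)=u'$, and the subtracted constant supplies $h$. Composing these equivalences shows $F$ is equivalent to $\tilde F'$ at $(x',u')\in U^\Gamma$, which is exactly what $\Gamma$-stability demands.

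\textbf{Main obstacle.} The heart of the argument is the middle step: one must set up the finite-dimensional model honestly — realize $\Sigma_f$ as a genuine submanifold of $J_n^k(\Gamma)$, verify that the purely algebraic transversality (\ref{eqtransv2}) translates into geometric transversality of $\Phi$ to $\Sigma_f$, and then apply the quantitative stability of transversal intersections while confirming that the $C^\infty$-topology of Definition \ref{deftop} does control $j_1^k\tilde F'$ in the $C^1$-norm on the relevant compact set. The remaining ingredients (invariance of $\Gamma$-$k$-determinacy under $\Gamma$-equivalence, the fact that the base point found automatically lies in $\mathbb{R}^n_\Gamma$, and preservation of $\Gamma$-transversality under a coordinate change by an element of $L_n(\Gamma)$) are routine.
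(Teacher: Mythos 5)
Your overall route differs genuinely from the paper's: you reduce stability to the uniqueness theorem for $\Gamma$-transversal unfoldings (Theorem \ref{eqtransversal}) via openness of transversality in a finite-dimensional jet space, whereas the paper never invokes that theorem and instead reruns the homotopy/ODE method directly on the segment $F'+th$, with uniform bounds on the jets $j^k\beta_i(\Xi(w,F'))$ over a compact set of parameter base points $(0,w)$, Corollary \ref{coroalg}, and Lemma \ref{difeq2}. Your strategy is a recognized alternative, but as written the central object $\Sigma_f$ is set up incorrectly, and this breaks the decisive step. By folding translations of the source point by $\mathbb{R}^n_\Gamma$ into the ``group'', your $\Sigma_f$ becomes the union of the orbit of $j^kf$ with the orbits of the $k$-jets of $f$ recentred at nearby fixed points. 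Those recentred germs are generically regular (or Morse) points, whose orbits have different and typically much larger dimension and accumulate on $j^kf$, so $\Sigma_f$ is in general \emph{not} a submanifold near $j^kf$: already for trivial $\Gamma$, $n=1$, $f=x^3$, $k=3$, it is the union of the open set $\{a_1\neq 0\}$ with the punctured line $\{a_1=a_2=0,\ a_3\neq 0\}$ inside $J^3_{0,1}$. Worse, the deduction ``$j^kg\in\Sigma_f$ together with $\Gamma$-$k$-determinacy gives $g=f\circ\phi_0$'' fails for this $\Sigma_f$: membership only says that $g$ agrees to order $k$ with a copy of $f$ recentred at some nearby fixed point $v$, i.e.\ $g$ is equivalent to the germ of $f$ at $v$, which in general is not $\Gamma$-equivalent to $f$ at $0$; then $\tilde G$ need not unfold $f$ and Theorem \ref{eqtransversal} cannot be applied. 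Indeed, with translations built into $\Sigma_f$, the jet extension $j_1^k\tilde F$ already meets $\Sigma_f$ at many points carrying no information, so ``locating the base point'' loses its force.

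The repair is standard and keeps the rest of your argument intact: take $\Sigma_f$ to be the orbit of $j^kf$ under the $k$-jet group of $L_n(\Gamma)$ together with additive constants only. This is a genuine (locally closed) submanifold, being the orbit of a finite-dimensional group acting algebraically on $J_n^k(\Gamma)$, with tangent space $j^k(T_{\mathcal{O}_f})+\mathbb{R}$ at $j^kf$. The translation directions $v\cdot\vec{\nabla}f$, $v\in\mathbb{R}^n_\Gamma$, that you wanted inside $T\Sigma_f$ are instead supplied by the $x$-derivatives of $\Phi=j_1^k\tilde F$ (this is exactly the $v_1\cdot(\vec{\nabla}f(x)-\vec{\nabla}f(0))$ part of $T_{S_F}$), so geometric transversality of $\Phi$ to this smaller $\Sigma_f$ at $(0,0)$ is still equivalent to the algebraic condition (\ref{eqtransv2}), using $\mathcal{M}_n^k(\Gamma)\subset\tgsp_{\mathcal{O}_f}$ to pass between germ and jet level. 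Openness of this transversality on a compact neighbourhood, together with the continuity of $\tilde F'\mapsto j_1^k\tilde F'$ in the topology of Definition \ref{deftop}, then produces the base point $(x',u')\in\mathbb{R}^n_\Gamma\times\mathbb{R}^r$; and now $j^kg$ lying in the honest orbit (the constant is forced to vanish since $g(0)=0$) plus $k$-determinacy of $f\circ\phi$ genuinely yields $g=f\circ\phi_0$ with $\phi_0\in L_n(\Gamma)$, after which your concluding chain of equivalences through $\tilde G$ and Theorem \ref{eqtransversal} is correct.
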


\begin{proof} Let $V,W$ be an open neighborhood of $0\in\mathbb{R}^{n}$,
$\mathbb{R}^{r}$, respectively, and $F'$ a representative of $F$
defined on $U=V\times W$. For $g\in C^{\infty}(U,\mathbb{R})$
and $w\in W$, define $\Xi:W\times C^{\infty}(U,\mathbb{R})\rightarrow C^{\infty}(U,\mathbb{R})$ by
\[
\Xi(w,g)(x,u)=g(x,w+u)-g(0,w).
\]
Define also a function $\zeta:W\times C^{\infty}(U,\mathbb{R})\rightarrow C^{\infty}(U,\mathbb{R})$
by
\[
\zeta(w,g)(x)=g(x,w)-g(0,w)=\Xi(w,g)(x,0).
\]
Now, note that if $G\in\mathcal{U}_{n,r}(\Gamma)$ is a $\Gamma$-invariant unfolding of a germ $\mu\in \mathcal{M}_{n}(\Gamma)$ that is $\Gamma$-finitely-determinate, then it follows that
$G$ is $\Gamma$-transversal if and only if $J_{n}^{k}(\Gamma)$ is generated
by the elements: $1,j^{k}(\alpha_{1}(G)),\dots,j^{k}(\alpha_{r}(G))$
and by a basis of $j^{k}(\tgsp_{\mathcal{O}_\eta})$, for some $k\in\mathbb{N}$ coming from item $ii.)$ of Theorem \ref{theodet}. Denote every element in the above list, when taking the unfolding $F$, by $\beta_{1}(F),\dots,\beta_{p}(F)$,
in a way such that $j^{k}(\beta_{1}(F)),\dots,j^{k}(\beta_{q}(F))$
$(q\leq p)$ form a basis of $J_{n}^{k}(\Gamma)$. Then, choose a compact
neighborhood $K$ of $0\in\mathbb{R}^{r}$ such that for all $w\in K$:
\begin{enumerate}[label=\roman*.)]
\item $j^{k}\zeta(w,F')$ $\Gamma$-$k$-determinate, and 
\item $j^{k}\beta_{1}(\Xi(w,F')),\dots,j^{k}\beta_{q}(\Xi(w,F'))$ form
a basis of $J_{n}^{k}(\Gamma)$. 
\end{enumerate}
Let $c>0$ be such that $[-c,c]^{r}\subset K$, and define $S\subset\mathbb{R}^{q}$
as the $q-1$ sphere of radius $c$ centered at $0$. Then, by the
continuity of the mapping $\mathfrak{i}:K\times S\rightarrow J_{n}^{k}(\Gamma)$
given by
\[
\mathfrak{i}(w,s_{1},\dots,s_{q})=\sum_{i=1}^{q}s_{i}j^{k}\beta_{i}(\Xi(w,F')),
\]
and the compactness of $K\times S$, it follows that there exists some $\delta>0$ such that
\[
\left\Vert \sum_{i=1}^{q}s_{i}j^{k}\beta_{i}(\Xi(w,F'))\right\Vert \geq\delta
\]
for all $w\in K$ and $(s_{1},\dots,s_{q})\in S$. Moreover, for any
$h\in C_{\Gamma}^{\infty}(U,\mathbb{R})$ with derivatives up to order
$k+1$ small enough in $\{0\}\times K$, it also follows that
\begin{align}\label{eqskdet}
 & \left\Vert \sum_{i=1}^{q}s_{i}j^{k}\beta_{i}(\Xi(w,F'+th))\right\Vert \geq\frac{\delta}{2}
\end{align}
for all $w\in W$, $(s_{1},\dots,s_{q})\in S$ and $t\in[0,1]$.
For $h\in C_{\Gamma}^{\infty}(U,\mathbb{R})$ with derivatives up to order
$k+1$ small enough in $\{0\}\times K$, one also has
\begin{equation}
\Vert j^{k}h(\cdot,w)\Vert<\frac{\delta}{2}\;\text{ for all }w\in K.\label{eqdelta2}
\end{equation}
Hence, suppose that $h$ is in a convenient neighborhood of $0$ in
$C_{\Gamma}^{\infty}(U,\mathbb{R})$ such that Equations (\ref{eqskdet}) and
(\ref{eqdelta2}) are satisfied. For $w\in K$ and $a\in[0,1]$, define
$H\in\mathcal{U}_{n,r+1}(\Gamma)$ by

\[
H(x,u,t)=(F'+(a+t)h)(x,w+u)-(F'+ah)(0,w).
\]
The strategy of the proof is very similar to the one of Theorem \ref{eqtransversal}, i.e., we will prove that $H(x,u,0)$ is $\Gamma$-equivalent to $H(x,u,t)$ for $t$ in some neighborhood of $0\in \mathbb{R}$, and this concludes the theorem. Note that $H(x,0,0)=f=\zeta(w,F'+ah)$ and $H(x,u,0)=\Xi(w,F'+ah)$.
Therefore, since Equation (\ref{eqskdet}) implies that the set $\{j^{k}\beta_{1}(\Xi(w,F'+ah)),\dots,j^{k}\beta_{q}(\Xi(w,F'+ah))\}$
is linearly iindependent (and hence forms a basis of $J_{n}^{k}(\Gamma)$), it follows that
$H$ is a $\Gamma$-transversal unfolding of $f$. in particular, one has, by Corollary \ref{coroalg}
\begin{equation}\label{eq59}
\mathcal{U}_{n,r+1}(\Gamma)=\tilde{T}_H+\left\langle 1,\frac{\partial H}{\partial u_1},\dots,\frac{\partial H}{\partial u_r}\right\rangle_{\mathcal{E}_{r}}.
\end{equation}
Now note that $\frac{\partial H(x,u,t)}{\partial t}=h(x,w+u)$, and
hence $\frac{\partial H(x,0,0)}{\partial t}=h(x,w)$. Since $\{j^{k}\beta_{1}(\Xi(w,F'+ah)),\dots,j^k{k}\beta_{q}(\Xi(w,F'+ah))\}$
is a basis of $J_{n}^{k}(\Gamma)$, one has
\begin{equation}
\frac{\partial H(x,0,0)}{\partial t}=\sum_{i=1}^{q}c_{i}\beta_{i}(\Xi(w,F'+ah))+g,\label{eqder}
\end{equation}
where $g\in \mathcal{M}_{n}^{k+1}(\Gamma)$. Assume that $\vert c_{j}\vert\geq c$
for some $j$. Then, define $z=\frac{c^{2}}{c_{1}^{2}+\dots+c_{q}^{2}}<1$.
Note that $(zc_{1})^{2}+\dots+(zc_{q})^{2}=c^{2}$, i.e., $(zc_{1},\dots,zc_{q})\in S$.
But by Equation (\ref{eqskdet}) it follows that
\[
\left\Vert \sum_{i=1}^{q}c_{i}j^{k}(\beta_{i}(\Xi(w,F'+ah)))\right\Vert =\frac{1}{z}\left\Vert \sum_{i=1}^{q}zc_{i}j^{k}(\beta_{i}(\Xi(w,F'+ah)))\right\Vert \geq\frac{\delta}{2},
\]
which contradicts Equation (\ref{eqdelta2}). Hence, $\vert c_{i}\vert<c$
for all $i$. Therefore, by Equation (\ref{eqder}),
one has (since $\alpha_{i}(\Xi(w,F'+ah))=\alpha_{i}(H)$)
\[
\frac{\partial H(x,0,0)}{\partial t}=\sum_{i=1}^{n}\frac{\partial f}{\partial x_{i}}\xi_{i}+\sum_{i=1}^{r}b_{i}\alpha_{i}(H)+b_{0}+g,
\]
for some $b_{0},\dots,b_{r}\in\mathbb{R}$ and some germ $\xi=(\xi_{1},\dots,\xi_{n})\in\vec{\mathcal{E}}_n(\Gamma)$, where $\xi_{i}$ is odd in $x_{i}$ and even in the other variables.
Moreover, they also satisfy $\vert b_{0}\vert,\dots,\vert b_{r}\vert<c$
and $\vert\xi_{i}(0)\vert<c$. Since $\mathcal{M}_n^k(\Gamma)\subset \tgsp_{\mathcal{O}_f}$, we have
\[
g=\sum_{i=1}^{n}\frac{\partial f}{\partial x_{i}}\eta_{i},
\]
where $\eta\in\vec{\mathcal{E}}_n(\Gamma)$. Therefore,
\[
\frac{\partial H(x,0,0)}{\partial t}=\sum_{i=1}^{n}\frac{\partial f}{\partial x_{i}}\tilde{\xi}_{i}+\sum_{i=1}^{r}b_{i}\alpha_{i}(H)+b_{0},
\]
where $\tilde{\xi}_{i}=\xi_{i}+g_{i}$. Now, define
\begin{equation}\label{eq111}
\mu=\frac{\partial H}{\partial t}-\sum_{i=1}^{n}\tilde{\xi}_{i}\frac{\partial H}{\partial x_{i}}-\sum_{i=1}^{r}b_{i}\frac{\partial H}{\partial u_{i}}-b_{0}.
\end{equation}
Clearly $\mu(x,0,0)=0$ and hence $\mu\in \mathcal{M}_{r+1}\cdot \mathcal{U}_{n,r+1}(\Gamma)$.
Therefore, by Equation (\ref{eq59}), it follows that
\begin{equation}\label{eq222}
\mu=\sum_{i=1}^{n}\xi'_{i}\frac{\partial H}{\partial x_{i}}+\sum_{i=1}^{r}\chi_{i}\frac{\partial H}{\partial u_{i}}+\chi_{r+1}\frac{\partial H}{\partial t}+\chi_{0}.
\end{equation}
From Equations (\ref{eq111}) and (\ref{eq222}) one arrives
at
\[
\frac{\partial H}{\partial t}=\sum_{i=1}^{n}\rho'_{i}\frac{\partial H}{\partial x_{i}}+\sum_{i=1}^{r}\chi_{i}'\frac{\partial H}{\partial u_{i}}+\chi_{0}',
\]
where $\rho'_{i}=(\tilde{\xi}_{i}+\xi'_{i})/(1-\chi_{r+1})$ and $\chi_{i}'=(b_{i}+\chi_{i})/(1-\chi_{r+1})$.
Moreover, it follows that $\vert\gamma_{i}'(0)\vert<c$, $\vert\chi_{i}'(0)\vert<c$.
By Lemma \ref{difeq2}, it follows that there are some germs
$\phi$, $\psi$ and $\lambda$
such that
\begin{equation}
\begin{cases}
\frac{\partial \phi (x,0,t)}{\partial t}=-\rho'(\phi(x,0,t),0,t),\quad \phi(x,u,0)=x, \\
\frac{\partial\psi(0,t)}{\partial t}=-\chi'(\psi(0,t),t),\quad\quad\quad\:\psi(u,0)=u,\\
\frac{\partial \lambda(0,t)}{\partial t}=-\chi'_0(0,t),\quad\quad\quad\quad\quad \,\lambda(u,0)=0,
\end{cases}
\end{equation}
Satisfying
\begin{equation}
H(\phi(x,u,t),\psi(x,u,t),t)+\lambda(u,t)=H(x,u,0).
\end{equation}
I.e., $f'+ah$ at $(0,w)$ is equivalent to $f'+(a+t)h$ at $(0,w+\psi_{t}(0))$.
 \end{proof}

\section{Classification theorem for $\mathbb{Z}_{2}$-invariant germs}

In this section we prove a classification theorem for $\mathbb{Z}_2$-invariant unfoldings, similar
to Thom's classification theorem for the usual catastrophes. We have
already shown that $\Gamma$-transversal unfoldings \textit{of the same germ}
and with the same number of parameters are $\Gamma$-equivalent. The classification
theorem, in turn, states that $\Gamma$-transversal unfoldings are equivalent
to certain polynomial unfoldings, which can be more easily analyzed.
One of the most important results to prove the classification theorem
is the following version of Morse lemma, stating the existence of a symmetry-preserving
diffeomorphism that reduces the $\Gamma$-invariant functions near a non-degenarate critical point to a quadratic polynomial.

\begin{theorem}[Morse Lemma for $\Gamma$-invariant unfoldings]\label{morseinv}
Let $f\in \mathcal{E}_n(\Gamma)$ have a nondegenerate critical point at the origin, with $f(0)=0$. Then, $f$ is equivalent to its quadratic part.
\end{theorem}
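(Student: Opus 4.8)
Since the whole setting is the $\Gamma$-invariant theory, ``equivalent'' here means $\Gamma$-equivalent in the sense of Definition~\ref{defeq1}, so the diffeomorphism to be produced must lie in $L_n(\Gamma)$. The strategy is to prove that $f$ is $\Gamma$-$2$-determined: because $f(0)=0$ and $\vec{\nabla}f(0)=0$ at the critical point, the $2$-jet $Q\doteq j^{2}f$ is precisely the quadratic part $\tfrac12\,x^{T}\operatorname{Hess}(f)(0)\,x$ of $f$, and $2$-determinacy then yields $f\sim Q$. Recall that $j^{2}f\in\mathcal{E}_n(\Gamma)$ (checked just after the definition of $\Gamma$-determinacy), i.e. $\operatorname{Hess}(f)(0)$ commutes with the orthogonal $\Gamma$-action; by hypothesis it is also invertible.

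\textbf{Step 1 (the algebraic input).} I first claim $\mathcal{M}_n^2(\Gamma)\subset T_{\mathcal{O}_{\tilde f}}$ for \emph{every} $\tilde f\in\mathcal{E}_n(\Gamma)$ with $j^{2}\tilde f=Q$; this is exactly the instance $r=2$ of Lemma~\ref{lemmadet} for $g=Q$. Ignoring the symmetry, Hadamard's lemma gives $\partial_i\tilde f(x)=\sum_j a_{ij}(x)\,x_j$ with $(a_{ij}(0))=\operatorname{Hess}(f)(0)$ invertible, so $A(x)=(a_{ij}(x))$ is invertible near $0$; and for $g\in\mathcal{M}_n^2$ Hadamard gives $g(x)=\sum_j c_j(x)\,x_j$ with $c_j(0)=\partial_j g(0)=0$, hence $c_j\in\mathcal{M}_n$. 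Then $\phi_i\doteq\sum_j c_j\,(A^{-1})_{ji}\in\mathcal{M}_n$ satisfies $\phi\cdot\vec{\nabla}\tilde f=g$. If moreover $g\in\mathcal{M}_n^2(\Gamma)$, average $\phi$ with the Reynolds operator, $\tilde\phi(x)=\int_\Gamma\gamma^{-1}\phi(\gamma x)\,d\gamma$ (normalized Haar measure): invariance of the Haar measure gives $\tilde\phi(\gamma'x)=\gamma'\tilde\phi(x)$, and $\tilde\phi(0)=0$, so $\tilde\phi\in\vec{\mathcal{M}}_n(\Gamma)$; using the identity $\vec{\nabla}\tilde f(\gamma x)=\gamma\vec{\nabla}\tilde f(x)$ (orthogonality of the action and $\Gamma$-invariance of $\tilde f$) and the $\Gamma$-invariance of $g$,
\[
\tilde\phi(x)\cdot\vec{\nabla}\tilde f(x)=\int_\Gamma \phi(\gamma x)\cdot\vec{\nabla}\tilde f(\gamma x)\,d\gamma=\int_\Gamma g(\gamma x)\,d\gamma=g(x).
\]
Hence $\mathcal{M}_n^2(\Gamma)\subset T_{\mathcal{O}_{\tilde f}}$.

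\textbf{Step 2 (homotopy and integration).} Put $F_s=sQ+(1-s)f$ for $s\in[0,1]$. Each $F_s\in\mathcal{E}_n(\Gamma)$, has $j^{2}F_s=Q$, and $Q-f=j^{2}f-f\in\mathcal{M}_n^3(\Gamma)\subset\mathcal{M}_n^2(\Gamma)$. By Step~1, $Q-f\in T_{\mathcal{O}_{F_s}}$ for every $s$, uniformly; hence, exactly as in the proof of $ii.)\implies i.)$ of Theorem~\ref{theodet} (with the homotopy parameter $t$ carried along as in Lemma~\ref{lemmadet}), there is $\vec{\xi}\in\vec{\mathcal{M}}_{n+1}$ with $\vec{\xi}(\cdot,t)\in\vec{\mathcal{M}}_n(\Gamma)$ solving $\vec{\nabla}_xF_s(x,t)\cdot\vec{\xi}(x,t)+(Q-f)(x)=0$ for $t$ near $0$. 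Solving $\partial_t h(x,t)=\vec{\xi}(h(x,t),t)$, $h(x,0)=x$, yields a germ of diffeomorphism which is $\Gamma$-equivariant, by uniqueness of the ODE together with the $\Gamma$-equivariance of $\vec{\xi}(\cdot,t)$, and which satisfies $F_s(h(x,t),t)=F_s(x,0)$ for small $t$. Thus each $F_s$ is $\Gamma$-equivalent to $F_{s'}$ for $s'$ in an interval about $s$; covering $[0,1]$ and using compactness as in Theorem~\ref{theodet} gives $f=F_0\sim F_1=Q$, i.e. $f$ is $\Gamma$-equivalent to its quadratic part.

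\textbf{The main obstacle.} The classical Morse-lemma content (Hadamard reduction to a variable-coefficient quadratic form) is standard; the genuine difficulty is equivariance. The decisive point is Step~1: showing that the Reynolds average of an arbitrary solution of $\phi\cdot\vec{\nabla}\tilde f=g$ remains a solution — this is precisely where orthogonality of the $\Gamma$-action and the identity $\vec{\nabla}\tilde f(\gamma x)=\gamma\vec{\nabla}\tilde f(x)$ are used — and then checking the smooth dependence of $\vec{\xi}$ on the homotopy parameter so the vector field integrates to an honest germ of diffeomorphism in $L_n(\Gamma)$. Once $\mathcal{M}_n^2(\Gamma)\subset T_{\mathcal{O}_{F_s}}$ is available uniformly in $s$, the remainder is a verbatim repetition of the determinacy argument.
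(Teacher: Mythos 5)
Your proof is correct and follows essentially the same route as the paper's: a straight-line homotopy between $f$ and its quadratic part $Q$, the infinitesimal identity $\vec{\xi}\cdot\vec{\nabla}F_s+(Q-f)=0$, a Reynolds average to make $\vec{\xi}$ $\Gamma$-equivariant, and integration of the resulting ODE to produce the conjugating germ diffeomorphism in $L_n(\Gamma)$. The one genuine difference is how the infinitesimal step is established: the paper invokes the classical (non-equivariant) Morse Lemma as a black box to conclude $\mathcal{M}_n^2=\{\phi\cdot\vec{\nabla}_x(h+tg)\mid\phi\in\mathcal{M}_n\}$ before averaging, whereas you run Hadamard's lemma twice — once on $\partial_i\tilde f$, once on $g$ — to solve $\phi\cdot\vec{\nabla}\tilde f=g$ explicitly via the inverse of the variable-coefficient Hessian $A(x)$, and only then average. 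The two are equivalent in content (Hadamard's lemma is what powers the classical Morse lemma anyway), but your version is more self-contained and packages the conclusion cleanly as $\Gamma$-$2$-determinacy, i.e. the $r=2$ instance of Lemma~\ref{lemmadet}, so that Step~2 becomes a literal re-run of $ii.)\implies i.)$ in Theorem~\ref{theodet}. You also make the compactness/covering step explicit, which the paper's proof compresses by integrating the ODE directly over $[0,1]$; both are fine since $\eta_t(0)=0$ lets the flow be defined on $[0,1]$ for $x$ near the origin.
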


\begin{proof}
Let $f(x)=h(x)+g(x)$, where $h$ is its quadratic part and $g$ is of third order. Our strategy is to prove the existence of an homotopy between $f$ and $h$; we will prove the existence of a mapping $t\mapsto \phi_t\in L_n(\Gamma)$ with $\phi_0(x)=x$ such that
\begin{equation*}
(h+tg)(\phi_t(x))=h(x), \; t\in [0,1].
\end{equation*}
Applying the classical Morse lemma to $f+tg$, it follows that $f+tg$ is equivalent (via a possibly non-$\Gamma$-equivariant family of diffeomorphism smooth in $t$) to $\pm x_1^2 \dots \pm x_n^2$, and consequently it follows that
\begin{equation}
\mathcal{M}_n^2=\{\phi\cdot\vec{\nabla}_x(h+tg)\;|\; \phi\in\mathcal{M}_n\}.
\end{equation}
In particular, since $g$ is of third order, there exists some $\xi_t\in \mathcal{M}_n$ satisfying
\begin{equation}
    \xi_t(x)\cdot\Vec{\nabla}\left(h(x)+tg(x)\right)=-g(x),
\end{equation}
Now, define 
\begin{equation}
    \eta_t(x)\doteq\int_\Gamma\gamma\xi_t(\gamma^{-1}x)\mathrm{d}\gamma,
\end{equation}
where the above integral is with respect to the Haar measure of $\Gamma$. Then, it follows that $\eta_t$ is $\Gamma$-equivariant and $\eta_t(0)=0$. Hence, taking $\phi_t$ as the solution of
\begin{equation}
 \begin{cases}
\partial_t\phi_t(x)=\eta_t(\phi_t(x)),\\
\phi_0(x)=x,
 \end{cases}
\end{equation}
the lemma follows.

\end{proof}
\begin{prop}\label{propeq} Let $F\in\mathcal{U}_{n,r}(\Gamma)$ be a $\Gamma$-transversal unfolding of a germ $f\in \mathcal{M}_{n}(\Gamma)$, and $\phi\in L_{n}(\Gamma)$.
Then, the unfolding $G(x,u)=F(\phi(x),u)$ is a $\Gamma$-transversal unfolding
of the germ $g=f\circ\phi$. \end{prop}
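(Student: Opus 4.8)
The plan is to reduce everything to the algebraic transversality criterion of Proposition \ref{proptransversal} and to transport the criterion for $F$ to one for $G$ via the algebra automorphism ``composition with $\phi$''. Write $C_\phi(p)\doteq p\circ\phi$. Since $\phi\in L_{n}(\Gamma)$ is a $\Gamma$-equivariant germ diffeomorphism with $\phi(0)=0$, the map $C_\phi$ is an $\mathbb{R}$-algebra automorphism of $\mathcal{E}_{n}(\Gamma)$: it is a homomorphism, its inverse is $C_{\phi^{-1}}$, and it preserves $\Gamma$-invariance because $p(\phi(\gamma x))=p(\gamma\phi(x))=p(\phi(x))$. In particular $C_\phi(1)=1$. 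One checks immediately that $G$ is a genuine $\Gamma$-invariant $r$-unfolding of $g=f\circ\phi\in\mathcal{M}_{n}(\Gamma)$: indeed $G(x,0)=f(\phi(x))=g(x)$, the germ $x\mapsto F(\phi(x),u)$ is $\Gamma$-invariant for small $u$ because $x\mapsto F(x,u)$ is and $\phi$ is equivariant, and $g(0)=0$ with $g$ $\Gamma$-invariant. Finally, the chain rule gives $\alpha_i(G)(x)=\partial_{u_i}F(\phi(x),0)=\alpha_i(F)(\phi(x))=C_\phi(\alpha_i(F))(x)$.

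The main step is to show that $C_\phi$ carries the extended tangent space of $f$ onto that of $g$, i.e.\ $C_\phi(\tgsp_{\mathcal{O}_f})=\tgsp_{\mathcal{O}_g}$. Let $D\phi$ denote the Jacobian germ of $\phi$. The chain rule gives $\vec{\nabla}g(x)=D\phi(x)^{T}(\vec{\nabla}f)(\phi(x))$, hence $(\vec{\nabla}f)(\phi(x))=(D\phi(x)^{-1})^{T}\vec{\nabla}g(x)$. Therefore, for any $\psi\in\vec{\mathcal{E}}_{n}(\Gamma)$,
\[
(\psi\cdot\vec{\nabla}f)\circ\phi \;=\; \psi(\phi(x))\cdot(\vec{\nabla}f)(\phi(x)) \;=\; \big(D\phi(x)^{-1}\psi(\phi(x))\big)\cdot\vec{\nabla}g(x)\;=\;\tilde\psi\cdot\vec{\nabla}g ,
\]
where $\tilde\psi(x)\doteq D\phi(x)^{-1}\psi(\phi(x))$. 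To see that $\tilde\psi\in\vec{\mathcal{E}}_{n}(\Gamma)$, differentiate the identity $\phi(\gamma x)=\gamma\phi(x)$ in $x$ to obtain $D\phi(\gamma x)=\gamma D\phi(x)\gamma^{-1}$, so $D\phi(\gamma x)^{-1}=\gamma D\phi(x)^{-1}\gamma^{-1}$; combined with $\psi(\phi(\gamma x))=\gamma\psi(\phi(x))$ this yields $\tilde\psi(\gamma x)=\gamma\tilde\psi(x)$. Hence $C_\phi(\tgsp_{\mathcal{O}_f})\subseteq\tgsp_{\mathcal{O}_g}$. Running the same computation with $\phi^{-1}$ in place of $\phi$ (and using $f=g\circ\phi^{-1}$) gives $C_{\phi^{-1}}(\tgsp_{\mathcal{O}_g})\subseteq\tgsp_{\mathcal{O}_f}$, so in fact $C_\phi(\tgsp_{\mathcal{O}_f})=\tgsp_{\mathcal{O}_g}$.

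To conclude, apply $C_\phi$ to the transversality identity for $F$ supplied by Proposition \ref{proptransversal},
\[
\mathcal{E}_{n}(\Gamma)=\tgsp_{\mathcal{O}_f}+\langle 1,\alpha_1(F),\dots,\alpha_r(F)\rangle_{\mathbb{R}} .
\]
Since $C_\phi$ is a surjective $\mathbb{R}$-linear map, the left-hand side maps onto $\mathcal{E}_{n}(\Gamma)$ again; by the previous paragraph $C_\phi(\tgsp_{\mathcal{O}_f})=\tgsp_{\mathcal{O}_g}$; and $C_\phi$ sends the $\mathbb{R}$-span $\langle 1,\alpha_1(F),\dots,\alpha_r(F)\rangle_{\mathbb{R}}$ to $\langle 1,\alpha_1(G),\dots,\alpha_r(G)\rangle_{\mathbb{R}}$. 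Hence
\[
\mathcal{E}_{n}(\Gamma)=\tgsp_{\mathcal{O}_g}+\langle 1,\alpha_1(G),\dots,\alpha_r(G)\rangle_{\mathbb{R}} ,
\]
which by Proposition \ref{proptransversal} (applied now to $G$ and $g$) says precisely that $G$ is $\Gamma$-transversal. The only nonroutine point is the chain-rule/equivariance bookkeeping showing that $C_\phi$ maps $\tgsp_{\mathcal{O}_f}$ onto $\tgsp_{\mathcal{O}_g}$; everything else is formal.
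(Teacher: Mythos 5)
Your proof is correct and takes essentially the same route as the paper: both transport the algebraic transversality criterion for $F$ through the pullback automorphism $\phi^*=C_\phi$ of $\mathcal{E}_n(\Gamma)$. The only difference is that you spell out the chain-rule/equivariance verification that $C_\phi(\tgsp_{\mathcal{O}_f})=\tgsp_{\mathcal{O}_g}$ (via $D\phi(\gamma x)=\gamma D\phi(x)\gamma^{-1}$), a step the paper's one-line proof leaves implicit.
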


\begin{proof}
Let $\phi^*:\mathcal{E}_n(\Gamma)\rightarrow\mathcal{E}_n(\Gamma)$ be given by

\begin{equation}
\phi^*h=h\circ \phi.
\end{equation}
Note that, since $\phi$ is a diffeomorphism, $\phi^*$ is an isomorphism; in particular, $\phi^*(\mathcal{E}_n(\Gamma))=\mathcal{E}_n(\Gamma)$. Hence, one has

\[
\mathcal{E}_n(\Gamma)=\phi^*\left(\tgsp_{\mathcal{O}_f}+\langle1,\alpha_{1}(F),\dots,\alpha_{r}(F)\rangle_{\mathbb{R}}\right)=\tgsp_{\mathcal{O}_{f\circ \phi}}+\langle1,\alpha_{1}(F\circ\phi),\dots,\alpha_{r}(F\circ\phi)\rangle_{\mathbb{R}},
\]
i.e., $G(x,u)=F(\phi(x),u)$ is $\Gamma$-transversal.
\end{proof}

\begin{remark}
From now on, we consider $\Gamma=\mathbb{Z}_2$, where  $\mathbb{Z}_2$ acts on $\mathbb{R}^n$ in the following way:
\begin{equation*}
\gamma .x = (\gamma_1 x_1,\dots,\gamma_nx_n), \quad \gamma_i=\pm 1.
\end{equation*}
We note that the germs which are invariant under the action of $\mathbb{Z}_2$ are the ones possessing parity symmetry.
\end{remark}
\begin{lemma}[Splitting Lemma for even functions] Let $f\in \mathcal{M}_{n}^{2}(\mathbb{Z}_2)$,
with Hessian of corank $r$ at $0$. Then, there exists some $\phi\in L_{n}(\mathbb{Z}_2)$
such that

\[
f\circ\phi(x)=\hat{f}(x_{1},\dots,x_{r})\pm x_{r+1}^{2}\pm\dots\pm x_{n}^{2},
\]
where $\hat{f}(x_{1},\dots,x_{r})=f(x_{1},\dots,x_{r},0)\in \mathcal{M}_{r}^{4}(\mathbb{Z}_2)$.
\end{lemma}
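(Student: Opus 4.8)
The plan is to argue by induction on $n-r$, the number of non-degenerate directions of $\operatorname{Hess}(f)(0)$, peeling off one square at a time by an explicit and manifestly $\mathbb{Z}_{2}$-equivariant change of coordinates; the point of working with $\mathbb{Z}_{2}$ rather than a general $\Gamma$ is that the relevant coefficient functions turn out to be automatically invariant, so that — unlike in the proof of Theorem \ref{morseinv} — no averaging over the group is needed. First I would normalize: since a $\mathbb{Z}_{2}$-invariant quadratic form is diagonal, $j^{2}f(x)=\sum_{i}a_{i}x_{i}^{2}$ with exactly $r$ of the $a_{i}$ vanishing, and because $\ker\operatorname{Hess}(f)(0)$ is then a coordinate subspace I may relabel the coordinates (a relabelling, not a coordinate change, so compatible with the action) so that $a_{1}=\cdots=a_{r}=0$ and then apply the diagonal, hence equivariant, rescaling $x_{i}\mapsto x_{i}/\sqrt{|a_{i}|}$ for $i>r$, reducing to the case $j^{2}f=\sum_{i=r+1}^{n}\epsilon_{i}x_{i}^{2}$, $\epsilon_{i}=\pm1$. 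The base case $n=r$ is immediate: then $\operatorname{Hess}(f)(0)=0$, so $f\in\mathcal{M}_{n}^{3}(\mathbb{Z}_{2})$, and a $\mathbb{Z}_{2}$-invariant germ has only even-degree terms, so in fact $f\in\mathcal{M}_{n}^{4}(\mathbb{Z}_{2})$; take $\phi=\mathbf{1}$.

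For the inductive step I would split off $x_{n}$. Since $f$ is even in $x_{n}$ one has $\partial f/\partial x_{n}(x',0)\equiv0$, so Taylor's formula with integral remainder produces a smooth, again $\mathbb{Z}_{2}$-invariant germ $d(x',x_{n})$ with $f(x',x_{n})=f(x',0)+x_{n}^{2}\,d(x',x_{n})$ and $d(0)=\tfrac12\,\partial^{2}f/\partial x_{n}^{2}(0)=\epsilon_{n}$. As $\epsilon_{n}d>0$ near the origin, I would set $\Phi(x)=\bigl(x_{1},\dots,x_{n-1},\,x_{n}\sqrt{\epsilon_{n}d(x',x_{n})}\bigr)$; one checks $D\Phi(0)=\mathbf{1}$, so $\Phi$ is a germ diffeomorphism fixing $0$, and invariance of $d$ gives $\Phi(\gamma\cdot x)=\gamma\cdot\Phi(x)$, hence $\phi:=\Phi^{-1}\in L_{n}(\mathbb{Z}_{2})$. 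Since $\epsilon_{n}y_{n}^{2}=x_{n}^{2}d$ when $y=\Phi(x)$, this yields $(f\circ\phi)(x)=g(x_{1},\dots,x_{n-1})+\epsilon_{n}x_{n}^{2}$ with $g(x'):=f(x',0)\in\mathcal{M}_{n-1}^{2}(\mathbb{Z}_{2})$ and $\operatorname{Hess}(g)(0)=\operatorname{diag}(0_{r},\epsilon_{r+1},\dots,\epsilon_{n-1})$ of corank $r$. Applying the induction hypothesis to $g$, padding the resulting diffeomorphism of $\mathbb{R}^{n-1}$ by the identity in $x_{n}$ (which stays in $L_{n}(\mathbb{Z}_{2})$), and composing with $\phi$, I obtain $\phi_{\mathrm{tot}}\in L_{n}(\mathbb{Z}_{2})$ with $f\circ\phi_{\mathrm{tot}}=\hat{f}(x_{1},\dots,x_{r})+\sum_{i=r+1}^{n}\epsilon_{i}x_{i}^{2}$; and since every coordinate change used fixes the $r$-plane $\{x_{r+1}=\cdots=x_{n}=0\}$ pointwise, one gets $\hat{f}(x_{1},\dots,x_{r})=f(x_{1},\dots,x_{r},0)$, which is even of order $\geq4$, i.e.\ lies in $\mathcal{M}_{r}^{4}(\mathbb{Z}_{2})$.

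The main obstacle — really the only step with genuine content — is ensuring each coordinate change is $\mathbb{Z}_{2}$-equivariant; everything else is the classical splitting lemma. Here it resolves itself precisely because the coefficient $d$ of $x_{n}^{2}$ in the Hadamard decomposition of an even germ is again even, so $\Phi$, being built out of $\sqrt{\epsilon_{n}d}$, commutes with the group action on the nose. The remaining verifications are routine: that $D\Phi(0)=\mathbf{1}$, so $\Phi$ is an honest local diffeomorphism; that $\epsilon_{n}d>0$ in a neighbourhood of $0$; and that the Hessian block inherited by $g$ is non-degenerate of corank exactly $r$. If one preferred to avoid the explicit substitution, the same conclusion follows from a Moser-type homotopy connecting $f$ to $\hat{f}+\sum\epsilon_{i}x_{i}^{2}$, with the infinitesimal generator rendered equivariant by averaging over $\mathbb{Z}_{2}$ exactly as in the proof of Theorem \ref{morseinv}; the substitution above is the shorter route.
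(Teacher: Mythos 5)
Your argument is correct and is essentially the paper's own proof: both rest on a Hadamard-type decomposition of the even germ, whose coefficient functions are automatically even, so that the explicit square-root substitution is $\mathbb{Z}_2$-equivariant with no averaging over the group needed. The only difference is bookkeeping: you peel off one nondegenerate direction at a time by induction, whereas the paper splits off all $n-r$ squares in a single substitution $(x,y)\mapsto(x,y_1\sqrt{a_1+g_1},\dots)$ and then finishes by invoking the equivariant Morse lemma (Theorem \ref{morseinv}), a step your inductive, self-contained version renders unnecessary.
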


\begin{proof} Since $f\in \mathcal{M}_{n}(\mathbb{Z}_2)$ and its Hessian at $0$ has
corank $r$, then without loss of generality, we can assume that in
a suitable neighborhood $V$ of $0\in\mathbb{R}^{r}$, for every $x\in V$
the Hessian of $f$ at $(x,0)$ is a matrix given by
\[
Hf(x,0)=\begin{pmatrix}\begin{matrix}\\
 & \mbox{\normalfont\Large\bfseries\ensuremath{H_{x,x}f(x,0)}}
\end{matrix} & \rvline & \mbox{\normalfont\Large\bfseries\ensuremath{0}}\\
\hline \mbox{\normalfont\Large\bfseries\ensuremath{0}} & \rvline & \begin{matrix}\pm a_{1}(x) & \cdots & 0\\
\vdots & \ddots & \vdots\\
0 & \cdots & \pm a_{n-r}(x)
\end{matrix}
\end{pmatrix},
\]
where $H_{x,x}f(0,0)$ is denegerate, $a_{i}$ is even with respect
to $x_{1},\dots,x_{r}$ for all $i$ and $a_{i}(x)>0$ for all $i$
and all $x\in V$. In particular, by Taylor's theorem, $f$ can be
written locally as
\[
f(x,y)=f(x,0)\pm y_{1}^{2}(a_{1}(x)+g_{1}(x,y))\pm\dots\pm y_{n-r}^{2}(a_{n-r}(x)+g_{n-r}(x,y)),\;\;x\in\mathbb{R}^{r},y\in\mathbb{R}^{n-r},
\]
where $g_{i}$ is even in all variables for all $i$ and $g_{i}(0,0)=0$
for all $i$. Therefore, defining
\[
\tilde{\phi}(x,y)=(x,y_{1}\sqrt{a_{1}(x)+g_{1}(x,y)},\dots,y_{n-r}\sqrt{a_{n-r}(x)+g_{n-r}(x,y)})\in L_{n}(\mathbb{Z}_2),
\]
one has

\[
f\circ\phi^{-1}(x,y)=f(x,0)\pm y_{1}^{2}\pm\dots\pm y_{n-r}^{2},
\]
and hence, by applying Theorem \ref{morseinv}, the result follows.

\end{proof}
Now we are in position to prove the classification theorems for the case of symmetric germs of funtions. We start with the classification theorem for the initial germs, associated with the equivalence relation that Definition \ref{defeq1} gives rise to. In order to prove it, we take advantage of the classification theorem exposed in \cite{siersma1974classification} for germs without symmetry.

\begin{theorem}\label{classtheo1} Let $f\in \mathcal{M}_{n}^{2}(\mathbb{Z}_2)$, with
$\cod(f)<\infty$ and Hessian of corank $r$ at $0$. Then, $\cod(f)>8$
or there exists some $\phi\in L_{n}(\mathbb{Z}_2)$ such that

\[
f\circ\phi(x)=\hat{f}(x_{1},\dots,x_{r})\pm x_{r+1}^{2}\pm\dots\pm x_{n}^{2},
\]
where $\hat{f}\in \mathcal{M}_{r}^{4}(\mathbb{Z}_2)$ is one of the following germs:

\begin{table}[H]
\centering %
\begin{tabular}{|c|c|c|c|}
\hline 
corank  & $\hat{f}$  & $\cod(f)$  & $\sigma(f)$ \tabularnewline
\hline 
$0$  & $0$  & $0$  & $2$ \tabularnewline
\hline 
$1$  & $\pm x^{2k},\;k\in\mathbb{N}$  & $2k-2$  & $2k$ \tabularnewline
\hline 
 & $\pm(x_{1}^{2}\pm x_{2}^{2})(x_{1}^{2}+\alpha x_{2}^{2}),\;\alpha\neq0,\pm1$  & $8$  & $4$ \tabularnewline
$2$  & $\pm(x^{4}+\beta x^{2}y^{2}+y^{4}),\;\beta\neq\pm2$  & $8$  & $4$ \tabularnewline
 & $x^{4}-y^{4}$  & $8$  & $4$ \tabularnewline
\hline 
\end{tabular}\caption{Classification of even germs of codimension $\protect\leq8$.}
\label{table1} 
\end{table}

\noindent Moreover, $\alpha,\beta$ are local invariants i.e., for any $\alpha\neq0,\pm1$
there exists some neighborhood $V$ of $\alpha$ such that the germs
$\{\pm(x_{1}^{2}\pm x_{2}^{2})(x_{1}^{2}+\alpha'x_{2}^{2})\;|\;\alpha'\in V\}$
are all non-equivalent to each other, and for any $\beta\neq\pm2$
there exists some neighborhood $V$ of $\beta$ such that the germs
$\{\pm(x_{1}^{4}+\beta'x_{1}^{2}x_{2}^{2}+x_{2}^{4})\;|\;\beta'\in V\}$
are all non-equivalent to each other. \end{theorem}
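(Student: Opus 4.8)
The plan is to run the classical reduction behind Thom's classification, now in the $\mathbb Z_2$-equivariant (i.e.\ parity-equivariant) setting, in three stages: pass to the residual germ, eliminate large corank, and classify the residual germ in corank $\le 2$. First I would apply the preceding Splitting Lemma for even functions to reduce to the case $f=\hat f\in\mathcal M_r^4(\mathbb Z_2)$, where $r=\operatorname{corank}\operatorname{Hess}f(0)$; splitting off a nondegenerate quadratic form in the remaining variables changes neither the $\mathbb Z_2$-equivalence class nor $\cod$, since the Jacobian ideal of $\pm x_{r+1}^2\pm\dots\pm x_n^2$ is $(x_{r+1},\dots,x_n)$, so $\cod(f)=\dim_{\mathbb R}\mathcal M_r/J_{\hat f}$.

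For $r=0$ the residual germ is $0$ and there is nothing to prove. For $r\ge 3$, $\hat f$ has order $\ge 4$, so $J_{\hat f}\subset\mathcal M_r^3$ and hence $\cod(f)\ge\dim_{\mathbb R}\mathcal M_r/\mathcal M_r^3=r+\binom{r+1}{2}\ge 9>8$, which is the first alternative. For $r=1$, finiteness of $\cod(f)$ forces $\hat f(x)=x^{2k}u(x^2)$ with $u(0)\ne 0$ and $k\ge 2$; the odd germ $\chi(x)=x\,|u(x^2)|^{1/(2k)}$ lies in $L_1(\mathbb Z_2)$ because $\chi'(0)\ne 0$, and $\hat f\circ\chi^{-1}(x)=\operatorname{sgn}(u(0))\,x^{2k}$, which is the $\mathbb Z_2$-version of the $A_{2k-1}$ normal form and uses only the implicit function theorem; a direct Jacobian-ideal computation then gives $\cod(\pm x^{2k})=2k-2$ together with the tabulated value of $\sigma$, so $\cod(f)\le 8$ exactly when $k\le 5$, reproducing the corank-$1$ rows.

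The substantive case is $r=2$. Let $q$ be the quartic part of $\hat f\in\mathcal M_2^4(\mathbb Z_2)$ (every quartic monomial is parity-invariant, so $q$ is an arbitrary binary quartic form). I claim that if $q$ is \emph{nondegenerate}, i.e.\ has four distinct roots in $\mathbb P^1(\mathbb C)$, then $\hat f$ is $\mathbb Z_2$-$5$-determined. The point is a finite linear-algebra check: the degree-$6$ part of $T_{\mathcal O_q}$ is the span of $\{x^iy^j\partial_xq,\ x^iy^j\partial_yq:i+j=3\}$ inside the $7$-dimensional space of binary sextics, and this span is everything precisely when $q$ has no repeated linear factor; Nakayama's lemma then upgrades this to $\mathcal M_2^6(\mathbb Z_2)=\mathcal M_2^5(\mathbb Z_2)\subset T_{\mathcal O_q}$, and the same inclusion holds along the segment from $q$ to $\hat f$ (via the Generalized Mather's Lemma, Theorem \ref{mlemma}). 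Hence, by the Determinacy Theorem \ref{theodet} (more precisely its proof applied to the homotopy $sq+(1-s)\hat f$), $\hat f$ is $\mathbb Z_2$-equivalent to $q$. It then remains to classify nondegenerate binary quartic forms over $\mathbb R$ up to the action of $L_2(\mathbb Z_2)$; since equivariant diffeomorphisms are odd their linear parts range over all of $GL_2(\mathbb R)$, and higher-order terms of the diffeomorphism do not affect the quartic part of a $4$-determined germ, so this is exactly the classical classification of real binary quartics up to $GL_2(\mathbb R)$ (with $q$ and $-q$ in general inequivalent, which is why both signs appear). The nondegenerate ones are distinguished by the $\mathbb R$-conjugation type of their four roots (four real; two real and a conjugate pair; two conjugate pairs) and, within each type, by the cross-ratio; normalising yields exactly the three rows $\pm(x_1^2\pm x_2^2)(x_1^2+\alpha x_2^2)$, $\pm(x^4+\beta x^2y^2+y^4)$ and $x^4-y^4$, each with $\cod=8$ and $\sigma=4$, the excluded parameter values being those where two roots collide (degenerate) or where the configuration is harmonic (producing $x^4-y^4$, a separate stratum because of its extra symmetry). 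The local-invariant assertion follows because $\alpha$ (resp.\ $\beta$) is a locally injective real-analytic function of the $S_4$-symmetrised cross-ratio of the four roots, which is an invariant of the $\mathbb Z_2$-equivalence class.

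I expect genuine work in two places. First, showing that a \emph{degenerate} quartic part forces $\cod(f)>8$: the cleanest route is to note that a repeated factor makes $\hat f$ fail to be $4$-determined, hence not right-equivalent to a real form of $X_9$, and then to invoke the classification of \cite{siersma1974classification}, which lists $X_9$ as the only corank-$\le 2$ singularity of order $\ge 4$ with $\cod\le 8$; carrying this out carefully (running through the finitely many degenerate quartic shapes $x^2$, $x^2y^2$, $x^2(x^2\pm y^2)$, $(x^2\pm y^2)^2,\dots$ and checking that each leads to a higher-codimension series) is the main bookkeeping. Second, the passage from ordinary right-equivalence to $\mathbb Z_2$-equivalence in the nondegenerate $r=2$ case rests entirely on the $\mathbb Z_2$-$5$-determinacy above, so the delicate step there is the tangent-space/Mather argument rather than the normal forms themselves; the remaining, merely tedious, point is the sign bookkeeping in the real classification of binary quartics and the verification that no two of the three listed families coincide under $L_2(\mathbb Z_2)$.
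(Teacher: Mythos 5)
Your skeleton is essentially the paper's (splitting lemma, the corank $\ge 3$ codimension count, a direct treatment of corank $1$, reduction of corank $2$ to the quartic part, with Siersma invoked to exclude degenerate quartic parts), and replacing the paper's citation of Siersma's Theorems 4.6/4.8 by a self-contained equivariant determinacy computation is a reasonable variant. The genuine gap is in your corank-$2$ step, and it comes from misidentifying the symmetry. In this paper ``$\mathbb{Z}_2$-invariant'' means invariant under the coordinate-wise sign changes $\gamma\cdot x=(\gamma_1x_1,\dots,\gamma_nx_n)$, i.e.\ even in each variable \emph{separately}: this is why the paper's own proof can write $j^4\hat f=ax^4+bx^2y^2+cy^4$ with no $x^3y,\,xy^3$ terms, why Table \ref{table2} unfolds only by $x^2,y^2,x^2y^2$ (no $xy$ term, which a versal unfolding under the diagonal parity $x\mapsto -x$ would require), and why in the stability proof the equivariant fields $\xi_i$ are ``odd in $x_i$ and even in the other variables''. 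Under this symmetry your two key assertions fail: the quartic part $q$ of $\hat f$ is \emph{not} an arbitrary binary quartic, and the linear parts of elements of $L_2(\mathbb{Z}_2)$ are diagonal, so they do not range over $GL_2(\mathbb{R})$. Consequently the step ``classify nondegenerate binary quartics up to $GL_2(\mathbb{R})$'' does not produce the required $\phi\in L_n(\mathbb{Z}_2)$, and your degree-$6$ span computation (eight cubic multiples of $\partial_xq,\partial_yq$ inside the seven-dimensional space of all sextics) is carried out in the wrong module: the degree-$6$ part of $T_{\mathcal{O}_q}$ is spanned by the four equivariant elements $x^3\partial_xq$, $xy^2\partial_xq$, $x^2y\partial_yq$, $y^3\partial_yq$ inside the four-dimensional space of invariant sextics spanned by $x^6,x^4y^2,x^2y^4,y^6$.

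The gap is repairable along your own lines. For $q=ax^4+bx^2y^2+cy^4$ the determinant of those four equivariant generators with respect to the invariant monomial basis is $64\,ac\,(4ac-b^2)$, so the equivariant spanning condition is $a\neq0$, $c\neq0$, $b^2\neq 4ac$, which for quartics of this special shape coincides with your ``four distinct roots in $\mathbb{P}^1(\mathbb{C})$'' condition; Nakayama and the Mather/homotopy argument then give the $\mathbb{Z}_2$-determinacy you want, and diagonal rescalings $\phi(x_1,x_2)=(x_1,\beta x_2)$ alone already bring $ax^4+bx^2y^2+cy^4$ to the tabulated normal forms --- which is precisely what the paper does, relying on Siersma only to dispose of the cases $a=0$, $c=0$ or $b=\pm2$ (where $\cod(f)\ge 9$) and, implicitly, for the moduli statement. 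Your cross-ratio argument for the invariance of $\alpha,\beta$ is unaffected, since non-equivalence under ordinary right-equivalence implies non-equivalence under the smaller group $L_n(\mathbb{Z}_2)$. But as written, your argument establishes the normal form only up to a group strictly larger than $L_n(\mathbb{Z}_2)$ and therefore does not prove the theorem as stated.
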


\begin{proof} First, note that if $f\in \mathcal{M}_{n}^{2}(\mathbb{Z}_2)$ then

\[
f(x)=a_{1}x_{1}^{2}+\dots+a_{n}x_{n}^{2}+h(x),
\]
where $O(h)\geq4$. From this, it is not hard to see that if the corank
$r$ of the Hessian of $f$ is $\geq3$, eg., $a_{1}=a_{2}=a_{3}=0$,
then $\tgsp_{\mathcal{O}_f}$ does not generate $x_{1},x_{2},x_{3},x_{1}^{2},x_{1}x_{2},x_{1}x_{3},x_{2}^{2},x_{2}x_{3},x_{3}^{2}$,
i.e., $\cod(f)\geq9$. Hence, $r\leq2$. If $r=2$, then by the splitting
lemma there exists some $\phi\in L_{n}(\mathbb{Z}_2)$ such that

\begin{equation}
f(x)=\hat{f}(x_{1},x_{2})\pm x_{3}^{2}\pm\dots\pm x_{n}^{2},
\end{equation}
where $\hat{f}(x_{1},x_{2})=f(x_{1},x_{2},0)\in \mathcal{M}_{r}^{4}(\mathbb{Z}_2)$. Moreover,
since $j^{3}\hat{f}=0$, we are under the hypothesis
of \cite[Theorem 4.6]{siersma1974classification}. note that $j^{k}\hat{f}(x)$ can be always written
as

\[
j^{k}\hat{f}(x)=ax^{4}+bx^{2}y^{2}+cy^{4},
\]
for some $a,b,c\in\mathbb{R}$. However, note that, if $a=0$ or $b=0$,
we would be in case 2, 3, 5 or 6 of \cite[Theorem 4.6]{siersma1974classification}.
And hence, by \cite[Theorems 4.6, 4.8]{siersma1974classification}, it would follow
that $\cod{\hat{f}}\geq9$. Therefore, we may restrict the analysis
to the case where $a,c\neq0$. Then, without loss of generality, $j^{4}\hat{f}$
can be assume to be

\begin{equation}
j^{4}\hat{f}(x)=\pm(x^{4}+bx^{2}y^{2}\pm y^{4}).
\end{equation}
\smallskip
\noindent Case 1: $j^{4}\hat{f}(x)=\pm(x^{4}+bx^{2}y^{2}+y^{4})$ and $b>2$. By a change of variables of the type
\[
\phi(x_{1},x_{2})=(x_{1},\beta x_{2}),\beta>0,
\]
it can be shown that $j^{4}\tilde{f}$ is equivalent to $\pm(x^{2}+y^{2})(x^{2}+\alpha y^{2})$
for some $\alpha>0$, $\alpha\neq1$.

\smallskip

\noindent Case 2: $j^{4}\hat{f}(x)=\pm(x^{4}+bx^{2}y^{2}+y^{4})$ and $b<2$. Again by a change of variables of the type
\[
\phi(x_{1},x_{2})=(x_{1},\beta x_{2}),\beta>0,
\]
it can be shown that $j^{4}\tilde{f}$ is equivalent to $\pm(x^{2}-y^{2})(x^{2}+\alpha y^{2})$
for some $\alpha<0$, $\alpha\neq-1$.

\smallskip

\noindent Note also that, if $b=\pm2$, we are in case 3 of \cite[Theorem 4.6]{siersma1974classification},
and hence $\cod(\tilde{f})\geq10$. Finally, if $r=1$, the result
is straightforward to check. \end{proof}

\begin{prop}\label{propuni} Let $F$ be a $\mathbb{Z}_2$-invariant $r$-unfolding of
$f\in \mathcal{M}_{n}(\mathbb{Z}_2)$, let $\phi\in L_{n}(\mathbb{Z}_2)$ and define the unfolding
$G(x,u)=F(\phi(x),u)$. Then, $F$ is universal if and only if $G$
is universal. \end{prop}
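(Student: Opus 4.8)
The plan is to deduce the statement from the Transversality theorem (Theorem~\ref{trasnvertheo}), Proposition~\ref{propeq}, and the invariance of the $\mathbb{Z}_2$-codimension under precomposition by a germ diffeomorphism. Recall from the remark following Theorem~\ref{trasnvertheo} that, for an $r$-unfolding of a germ $h$, being \emph{universal} means being versal together with the minimality condition $r=\cod_{\mathbb{Z}_2}(h)$. Since $F$ is an $r$-unfolding of $f$ while $G$ is an $r$-unfolding of $g:=f\circ\phi$, it therefore suffices to establish two things: first, that $F$ is versal if and only if $G$ is versal; and second, that $\cod_{\mathbb{Z}_2}(f)=\cod_{\mathbb{Z}_2}(f\circ\phi)$. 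Granting both, the condition $r=\cod_{\mathbb{Z}_2}(f)$ holds exactly when $r=\cod_{\mathbb{Z}_2}(g)$ holds, and the proposition follows immediately.

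For the first point, I would first note that $G$ genuinely is a $\mathbb{Z}_2$-invariant $r$-unfolding of $g$: one has $G(x,0)=f(\phi(x))=g(x)$, and $x\mapsto F(\phi(x),u)$ is $\mathbb{Z}_2$-invariant because $\phi\in L_n(\mathbb{Z}_2)$ is equivariant and $x\mapsto F(x,u)$ is invariant. By Theorem~\ref{trasnvertheo}, $F$ is versal iff $F$ is $\mathbb{Z}_2$-transversal, and $G$ is versal iff $G$ is $\mathbb{Z}_2$-transversal. Proposition~\ref{propeq} states precisely that $\mathbb{Z}_2$-transversality of $F$ implies $\mathbb{Z}_2$-transversality of $G=F(\phi(\cdot),\cdot)$; applying the same proposition to the unfolding $G$ and the diffeomorphism $\phi^{-1}\in L_n(\mathbb{Z}_2)$, together with the identity $F(x,u)=G(\phi^{-1}(x),u)$, yields the converse implication. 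Hence $F$ is versal iff $G$ is versal.

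For the second point, I would reuse the pullback $\phi^{*}\colon\mathcal{E}_n(\mathbb{Z}_2)\to\mathcal{E}_n(\mathbb{Z}_2)$, $\phi^{*}h=h\circ\phi$, exactly as in the proof of Proposition~\ref{propeq}. Because $\phi$ is a germ diffeomorphism fixing the origin, $\phi^{*}$ is an $\mathbb{R}$-algebra automorphism of $\mathcal{E}_n(\mathbb{Z}_2)$ which restricts to an automorphism of $\mathcal{M}_n(\mathbb{Z}_2)$, and by the chain rule it carries $\tgsp_{\mathcal{O}_f}$ onto $\tgsp_{\mathcal{O}_{f\circ\phi}}$ — this is the identity already exploited in that proof. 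Consequently $\phi^{*}$ induces an $\mathbb{R}$-linear isomorphism between $\mathcal{M}_n(\mathbb{Z}_2)/\tgsp_{\mathcal{O}_f}$ and $\mathcal{M}_n(\mathbb{Z}_2)/\tgsp_{\mathcal{O}_{f\circ\phi}}$, so these spaces have the same dimension, i.e.\ $\cod_{\mathbb{Z}_2}(f)=\cod_{\mathbb{Z}_2}(f\circ\phi)$.

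Putting the two points together gives: $F$ is universal $\iff$ $F$ is versal and $r=\cod_{\mathbb{Z}_2}(f)$ $\iff$ $G$ is versal and $r=\cod_{\mathbb{Z}_2}(g)$ $\iff$ $G$ is universal. I do not expect a genuine obstacle in this argument; the only step that needs an explicit line is the assertion that $\phi^{*}$ maps $\tgsp_{\mathcal{O}_f}$ onto $\tgsp_{\mathcal{O}_{f\circ\phi}}$, but this follows at once from the chain rule and the invertibility of $D\phi$, and is in fact already contained in the proof of Proposition~\ref{propeq}.
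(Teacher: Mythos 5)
Your proof is correct, but it handles the minimality part of universality differently from the paper. Both arguments share the same first half: $G$ is versal because versality is equivalent to $\mathbb{Z}_2$-transversality (Theorem \ref{trasnvertheo}) and transversality is preserved under precomposition by $\phi$ and by $\phi^{-1}$ (Proposition \ref{propeq}). Where you diverge is in certifying that $G$ has the minimal number of parameters: you invoke the numerical characterization $r=\cod_{\mathbb{Z}_2}$ from the remark after the transversality theorem and prove the invariance $\cod_{\mathbb{Z}_2}(f)=\cod_{\mathbb{Z}_2}(f\circ\phi)$ by observing that the pullback $\phi^{*}$ is an automorphism of $\mathcal{M}_n(\mathbb{Z}_2)$ carrying $\tgsp_{\mathcal{O}_f}$ onto $\tgsp_{\mathcal{O}_{f\circ\phi}}$ (the equivariance of $x\mapsto (D\phi(x))^{-1}\psi(\phi(x))$ needed here does follow from that of $\phi$, so this step is sound). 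The paper instead never compares codimensions: it takes an arbitrary versal $d$-unfolding $H$ of $f\circ\phi$, pulls it back through $\phi^{-1}$ to a versal $d$-unfolding of $f$, and concludes $d\geq r$ from the universality of $F$, so that the versal $r$-unfolding $G$ is automatically minimal; the converse is then obtained by the symmetric argument. Your route buys an explicit, reusable fact (invariance of the $\Gamma$-codimension under $L_n(\Gamma)$-equivalence) and avoids quantifying over all versal unfoldings of $f\circ\phi$, at the cost of a short chain-rule computation; the paper's route is shorter on computation but leans on the minimality formulation of universality and on repeating the argument for the converse.
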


\begin{proof} Let $F$ be versal, and let $H$ be some $\mathbb{Z}_2$-invariant versal $d$-unfolding $H$ of $\phi(f)$. By Theorem \ref{eqtransversal}
and Proposition \ref{propeq}, $H(\phi^{-1}(x),v)$ is a versal $d$-unfolding
of $f$, and since $F$ is universal, one must have $r\leq d$.
Moreover, note also that by the same propostions, $G$ is a versal
$r$-unfolding of $\phi(f)$. Hence, it follows that $G$ is universal.
By a similar argument, the converse also holds. \end{proof}

We conclude with the classification theorem for even unfoldings.

\begin{theorem}[Classification theorem for even unfoldings]\label{theoclas2} Let $F$
be an universal unfolding of a germ $f\in \mathcal{M}_{n}(\mathbb{Z}_2)$
and Hessian of corank $r$ at $0$. Then, $\cod_{\mathbb{Z}_2}(f)\geq 4$ or $F$ is equivalent to an
unfolding $G$ of the form

\[
G(x,u)=g(x_{1},\dots,x_{r},u)\pm x_{r+1}^{2}\pm\dots\pm x_{n}^{2},
\]
where $g(x,u)$ is one of the following catastrophes:

\begin{table}[H]
\centering %
\begin{tabular}{|c|c|c|c|}
\hline 
corank  & $g(x_{1},\dots,x_{r},u)$  & $\cod_{\mathbb{Z}_2}(f)$  & $\sigma(f)$ \tabularnewline
\hline 
$0$  & $0$  & $0$  & $2$ \tabularnewline
\hline 
$1$  & $\pm x^{2k}+u_{1}x^{2}+\dots+u_{k-1}x^{2k-2},\;k\in\mathbb{N}$  & $k-1$  & $2k$ \tabularnewline
\hline 
 & $\pm(x_{1}^{2}\pm x_{2}^{2})(x_{1}^{2}+\alpha x_{2}^{2})+u_{1}x^{2}+u_{2}y^{2}+u_{3}x^{2}y^{2},\;\alpha\neq0,\pm1$  & $3$  & $4$ \tabularnewline
$2$  & $\pm(x^{4}+\beta x^{2}y^{2}+y^{4})+u_{1}x^{2}+u_{2}y^{2}+u_{3}x^{2}y^{2},\;\beta\neq\pm2$  & $3$  & $4$ \tabularnewline
 & $x^{4}-y^{4}+u_{1}x^{2}+u_{2}y^{2}+u_{3}x^{2}y^{2}$  & $3$  & $4$ \tabularnewline
\hline 
\end{tabular}\caption{Classification of $\mathbb{Z}_2$-invariant unfoldings.}
\label{table2} 
\end{table}

\end{theorem}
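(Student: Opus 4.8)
The plan is to reduce, via the splitting of the Morse part and the classification of even \emph{germs} in Theorem~\ref{classtheo1}, to a short list of normal-form germs; to write down for each of them an explicit universal unfolding, which will be exactly the corresponding entry of Table~\ref{table2}; and then to conclude with the uniqueness of universal unfoldings (Theorem~\ref{eqtransversal}).

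Since $F$ is universal, the remark after Theorem~\ref{trasnvertheo} gives that it is $\mathbb{Z}_2$-transversal with exactly $r_0:=\cod_{\mathbb{Z}_2}(f)<\infty$ parameters; in particular $f$ is $\mathbb{Z}_2$-finitely-determined, and Theorem~\ref{classtheo1} applies to it. Assume we are not in the conclusion $\cod_{\mathbb{Z}_2}(f)\ge4$. First I would observe that the excluded possibilities of Theorem~\ref{classtheo1} all fall under $\cod_{\mathbb{Z}_2}(f)\ge4$: if the corank is $r\ge3$ then, after the Splitting Lemma for even functions, $f\sim_{\mathbb{Z}_2}\hat{f}+Q$ with $\hat{f}\in\mathcal{M}_r^4(\mathbb{Z}_2)$ and $Q=\pm x_{r+1}^2\pm\cdots\pm x_n^2$, so $\vec{\nabla}\hat{f}$ has order $\ge3$ and, equivariant fields having order $\ge1$, $\tgsp_{\mathcal{O}_{\hat{f}}}\subset\mathcal{M}_r^4(\mathbb{Z}_2)$; then $x_1^2,\dots,x_r^2$, together with a piece of the degree-$4$ part (the degree-$4$ part of $\tgsp_{\mathcal{O}_{\hat{f}}}$ is at most $r$-dimensional inside the $\binom{r}{2}+r$ even quartics), already yield $\cod_{\mathbb{Z}_2}(f)\ge6$; the corank-$1$ case with $\cod(f)>8$ gives $\cod_{\mathbb{Z}_2}=k-1\ge5$, and the degenerate corank-$2$ quartic cases (a vanishing coefficient, or $b=\pm2$ in $ax^4+bx^2y^2+cy^4$) are handled by a similar direct computation. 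Hence there is $\theta\in L_n(\mathbb{Z}_2)$ with $f\circ\theta(x)=\hat{g}(x_1,\dots,x_r)+Q$, where $\hat{g}$ is one of $0$ ($r=0$); $\pm x^{2k}$ with $2\le k\le4$ ($r=1$); or $\pm(x_1^2\pm x_2^2)(x_1^2+\alpha x_2^2)$, $\pm(x^4+\beta x^2y^2+y^4)$ with $\beta\ne\pm2$, $x^4-y^4$ ($r=2$), the local-invariance of $\alpha,\beta$ being inherited from Theorem~\ref{classtheo1}. By Propositions~\ref{propeq} and \ref{propuni}, $(x,u)\mapsto F(\theta(x),u)$ is a universal unfolding of $\hat{g}+Q$ which, through the diffeomorphism $\theta$ in the $x$-variable, is equivalent as an unfolding (Definition~\ref{defeq2}) to $F$; so I may assume $f=\hat{g}+Q$.

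Next, for each $\hat{g}$ I would write down the candidate $g$ of Table~\ref{table2} and check $\mathbb{Z}_2$-transversality by computing $\tgsp_{\mathcal{O}_{\hat{g}}}$. For $\hat{g}=\pm x^{2k}$, $\tgsp_{\mathcal{O}_{\hat{g}}}$ is the ideal of even germs generated by $x^{2k}$, so $\mathcal{E}_1(\mathbb{Z}_2)=\tgsp_{\mathcal{O}_{\hat{g}}}+\langle 1,x^2,\dots,x^{2k-2}\rangle_{\mathbb{R}}$ and $g=\pm x^{2k}+u_1x^2+\cdots+u_{k-1}x^{2k-2}$ is transversal with $\cod_{\mathbb{Z}_2}=k-1$. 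For the three quartics, passing to the Hilbert generators $X=x^2$, $Y=y^2$ (so $\mathcal{E}_2(\mathbb{Z}_2)\cong\mathbb{R}[[X,Y]]$), $\tgsp_{\mathcal{O}_{\hat{g}}}$ is the ideal generated by $x\,\partial_x\hat{g}$ and $y\,\partial_y\hat{g}$, and a short computation using $\alpha\ne0,\pm1$ (resp.\ $\beta\ne\pm2$) gives $\mathcal{E}_2(\mathbb{Z}_2)=\tgsp_{\mathcal{O}_{\hat{g}}}+\langle 1,x^2,y^2,x^2y^2\rangle_{\mathbb{R}}$, so the listed $3$-parameter unfolding is transversal with $\cod_{\mathbb{Z}_2}=3$. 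A suspension identity — the directions $x_{r+1},\dots,x_n$ contributing the whole submodule $\langle x_{r+1}^2,\dots,x_n^2\rangle_{\mathcal{E}_n(\mathbb{Z}_2)}$ to $\tgsp_{\mathcal{O}_{\hat{g}+Q}}$ up to higher-order corrections — then gives $\cod_{\mathbb{Z}_2}(\hat{g}+Q)=\cod_{\mathbb{Z}_2}(\hat{g})=r_0$ and shows that $G(x,u):=g(x_1,\dots,x_r,u)+Q$ is a $\mathbb{Z}_2$-transversal $r_0$-unfolding of $\hat{g}+Q$, hence universal.

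To finish, $G$ and the transported $F$ are both $\mathbb{Z}_2$-transversal $r_0$-unfoldings of the same germ $\hat{g}+Q$, so Theorem~\ref{eqtransversal} makes them $\mathbb{Z}_2$-equivalent, which is the assertion. I expect the main obstacle to be the codimension bookkeeping of the second paragraph — reconciling Siersma's non-equivariant classification, on which Theorem~\ref{classtheo1} rests, with the symmetric codimension that governs universal unfoldings, and in particular verifying that $\cod(f)>8$, the degenerate corank-$2$ quartics, and corank $\ge3$ all force $\cod_{\mathbb{Z}_2}(f)\ge4$ — together with setting up the suspension isomorphism precisely enough to match both the codimensions and the parameter counts; the transversality computations of the third paragraph are routine.
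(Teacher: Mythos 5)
Your proposal follows essentially the same route as the paper's proof: reduce $f$ to a normal form from Theorem \ref{classtheo1} via $L_n(\mathbb{Z}_2)$, transport $F$ by Propositions \ref{propeq} and \ref{propuni}, verify directly that each entry of Table \ref{table2} is a universal unfolding of the corresponding germ, and conclude by the uniqueness of transversal unfoldings of the same germ (Theorem \ref{eqtransversal}). The extra codimension bookkeeping and the explicit transversality/suspension computations you outline are exactly the details the paper compresses into ``by direct computations,'' so the argument is correct and matches the paper's.
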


\begin{proof} First note that, by direct computations, it can be
shown that all of the possible unfoldings $G(x,u)$ of the theorem
are universal. By Theorem \ref{classtheo1}, it follows that there
exists some $\phi\in L_{e}(n)$ such that $f\circ\phi=\hat{f}(x_{1},\dots,x_{r})\pm x_{r+1}^{2}\pm x_{n}^{2}$,
where $\hat{f}$ is one of the germs in table \ref{table1}. Therefore,
by proposition \ref{propuni}, $F(\phi(f),u)$ is an universal unfolding
of $\hat{f}(x_{1},\dots,x_{r})\pm x_{r+1}^{2}\pm x_{n}^{2}$, and
hence, it will be equivalent to the universal unfolding of table
\ref{table2} associated with the germ $\hat{f}(x_{1},\dots,x_{r})$.

\end{proof}

\bigskip{}

\section{Applications to Quantum Statistical Mechanics}

In the field of statistical physics, be it classical or quantum, one
of the relevant goals is to find treatable analytical models that
are able to exhibit and mimic the behavior of phase transitions encountered
in nature. However, it follows that for most realistic physical models,
which generally possess interactions that are short-ranged, it is
extremely difficult to prove the existence of phase transitions. A
mathematical approach to overcome such difficulty is to find an approximation
of such interactions by what is called \textit{mean-field models}.
A mean-field model is a model where the forces of particle-particle
interaction are constant across distance. Although they seem physically
unreasonable, such forces, however, may be seen as a limit of short-range
interactions whose interaction range diverges, while its strength
goes to $0$ \cite{jbkac}. Its greatest advantage is that the mathematical
analysis of the thermodynamics of mean-field models can be carried
out analytically in a series of physically relevant scenarios: e.g.,
the Curie-Weiss model of ferromagnetism, the BCS model of superconductivity,
etc. Moreover, the results of catastrophe theory just exposed
above can be a powerful mathematical tool to carry out the analysis
of its phase transitions. Below, we provide an example of such application
in the context of Quantum Statistical Mechanics. We note that in the
following, we will only present results, mostly ommiting the proofs. The full
power of the catastrophe theory exposed in this paper to the study
of phase transitions in Q.S.M with its detailed analysis will be the subject
of a subsequent paper.

\bigskip

\noindent \textbf{Fermionic systems over a lattice:} We let $\mathbb{Z}^d$ be the $d$-dimensional cubic lattice, and $\mathfrak{A}$ be
the CAR algebra generated by the creation and annihilation elements
$\{a_{x,s}\;|\;x\in\mathbb{Z}^{d},\;s\in\{\uparrow,\downarrow\}\}$,
i.e., the unique (up to $*$-isomorphism) $C^{*}$-algebra that satisfies
the relations
\begin{align}
\{a_{x,s},a_{y,t}\}= & 0,\nonumber \\
\{a_{x,s},a_{y,t}^{*}\}= & \delta_{x,y}\delta_{s,t}\mathds{1},\quad x,y\in\mathbb{Z}_{d},\;s,t\in\{\uparrow,\downarrow\},
\end{align}
where $\{A,B\}\doteq AB+BA$ denotes the anti-commutator. From now on, $\Lambda_l\subset \mathbb{Z}^d$ will denote the cubic box of size $2l+1$ centered at the origin of $\mathbb{Z}^d$; i.e.,

\begin{equation}
\Lambda_l=\{(x_1,\dots,x_d)\in\mathbb{Z}^d\;|\; \vert x_i\vert \leq l, \; i=1,\dots,d\}.
\end{equation}

\bigskip

\noindent \textbf{Mean-field model:} We consider now a finite-dimensional
Hamiltonian on the box $\Lambda_l$ given by
\begin{align}
H_{l}= & \sum_{x\in\Lambda_{l}}(-\mu(n_{x,\uparrow}+n_{x,\downarrow}-h(n_{x,\uparrow}-n_{x,\downarrow})+2\lambda n_{x,\uparrow}n_{x,\downarrow})-\frac{\gamma}{\vert\Lambda_{l}\vert}\sum_{x,y\in\Lambda_{l}}a_{x,\uparrow}^{*}a_{x,\downarrow}^{*}a_{y,\downarrow}a_{y,\uparrow}\nonumber \\
 & -\frac{\delta}{\vert\Lambda_{l}\vert}\sum_{x,y\in\Lambda_{l};s,t\in\{\uparrow,\downarrow\}}a_{x,s}^{*}a_{x,s}a_{y,t}^{*}a_{y,t},
\end{align}
where $n_{x,s}=a^*_{x,s}a_{x,s}$. The second and third terms of the above Hamiltonian are what is called \textit{mean-field interactions}. Note that the interaction strength is constant across distances. Altough they seem at first non-physical, these kind of interactions are in fact physically relevant, specially because they can model with good precision the behavior of macroscopic materials, for example superconductors and ferromagnets. It is well known that the thermodynamics such systems with mean-field interactions may be analyzed by a simpler Hamiltonian, with no mean-field terms, usucally called in the literature as the approximating Hamiltonian (see \cite{bru2013non} for more details).

\begin{theorem}\label{resume}
The limiting thermodynamical pressure $\mathrm{P}$ of $H_l$, defined as
\begin{equation}
\mathrm{P}\doteq \lim_{l\rightarrow\infty}p_l,
\end{equation} 
where $p_l$ is the pressure of the system restricted to the finite box $\Lambda_l$, is equal to
\begin{equation}\label{eqpressure}
\mathrm{P}= \inf_{c_1,c_2 \in \mathbb{R}}\{\gamma c_1^2 +\delta c_2^2 -\mathrm{P}_0(c_1,c_2)\},
\end{equation}
and $\mathrm{P}_0(c_1,c_2)$ is the thermodynamical pressure of the system with Hamiltonian given by 
\begin{align}
H_{l}(c_{1},c_{2}) & =\sum_{x\in\Lambda_{l}}(-\mu(n_{x,\uparrow}+n_{x,\downarrow})-h(n_{x,\uparrow}-n_{x,\downarrow})+2\lambda n_{x,\uparrow}n_{x,\downarrow})-\gamma\overline{c_{1}}\sum_{x\in\Lambda_{l}}a_{y,\downarrow}a_{y,\uparrow}\nonumber \\
 & -\gamma c_{1}\sum_{x\in\Lambda_{l}}a_{x,\uparrow}^{*}a_{x,\downarrow}^{*}-\delta\overline{c_{2}}\sum_{x\in\Lambda_{l};s\in\{\uparrow,\downarrow\}}a_{x,s}^{*}a_{x,s}-\delta c_{2}\sum_{x\in\Lambda_{l};s\in\{\uparrow,\downarrow\}}a_{x,s}^{*}a_{x,s}.
\end{align}
Morover, the absolute value of $c_1$ where the infimum is achieved corresponds to the cooper pair density of the system, and the absolute value of $c_2$ where the infimum is achieved corresponds to the electron density of the system.  
\end{theorem}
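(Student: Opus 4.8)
The plan is to establish \eqref{eqpressure} by the \emph{approximating Hamiltonian method}, which is adapted to mean-field (site-permutation invariant) models such as $H_l$. First, pass to the intensive observables $\mathrm{N}_l\doteq|\Lambda_l|^{-1}\sum_{x\in\Lambda_l,\,s}n_{x,s}$ and $\mathrm{C}_l\doteq|\Lambda_l|^{-1}\sum_{x\in\Lambda_l}a_{x,\downarrow}a_{x,\uparrow}$, and record the identities $H_l=H_l^{(0)}-\gamma|\Lambda_l|\,\mathrm{C}_l^{*}\mathrm{C}_l-\delta|\Lambda_l|\,\mathrm{N}_l^{2}$ and $H_l(c_1,c_2)=H_l^{(0)}-\gamma c_1|\Lambda_l|(\mathrm{C}_l+\mathrm{C}_l^{*})-2\delta c_2|\Lambda_l|\,\mathrm{N}_l$, where $H_l^{(0)}\doteq\sum_{x\in\Lambda_l}(-\mu(n_{x,\uparrow}+n_{x,\downarrow})-h(n_{x,\uparrow}-n_{x,\downarrow})+2\lambda n_{x,\uparrow}n_{x,\downarrow})$ is a sum of identical, mutually commuting one-site terms. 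Since $H_l(c_1,c_2)$ has the same structure, its finite-volume pressure is computed at a single site and equals the limit $\mathrm{P}_0(c_1,c_2)$ of the statement.

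The argument then reduces to two matching inequalities. For the first, complete the square at the operator level: from $(\mathrm{C}_l-c_1)^{*}(\mathrm{C}_l-c_1)\ge 0$ and $(\mathrm{N}_l-c_2)^{2}\ge 0$ one gets $-\mathrm{C}_l^{*}\mathrm{C}_l\le-c_1(\mathrm{C}_l+\mathrm{C}_l^{*})+c_1^{2}$ and $-\mathrm{N}_l^{2}\le-2c_2\mathrm{N}_l+c_2^{2}$, hence $H_l\le H_l(c_1,c_2)+|\Lambda_l|(\gamma c_1^{2}+\delta c_2^{2})$; monotonicity of $g\mapsto\frac{1}{\beta|\Lambda_l|}\ln\mathrm{Tr}\,e^{-\beta g}$ then bounds $p_l$ in terms of $p_l(c_1,c_2)$ and the penalty $\gamma c_1^{2}+\delta c_2^{2}$ for every $(c_1,c_2)$, which upon $l\to\infty$ and optimization over $(c_1,c_2)$ produces one of the two inequalities in \eqref{eqpressure}. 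The reverse inequality is the substantive part. For the self-adjoint term $\mathrm{N}_l^{2}$, which commutes with $H_l^{(0)}$, it follows from a large-deviation/box-counting estimate: one groups the joint eigenstates of $H_l^{(0)}$ and $\mathrm{N}_l$ according to the value of $\mathrm{N}_l$, which ranges over a set of cardinality $O(|\Lambda_l|)$; on each block $\mathrm{N}_l^{2}$ is replaced by that value, and the block sum is re-expressed by a Lagrange-multiplier shift, so that $\mathrm{Tr}\,e^{-\beta H_l}$ is bounded above by $e^{o(|\Lambda_l|)}$ times a supremum over $c_2$ of approximating partition functions.

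The main obstacle, as anticipated, is the Cooper-pair term $\mathrm{C}_l^{*}\mathrm{C}_l$: the operator $\mathrm{C}_l$ is neither self-adjoint nor commuting with $H_l^{(0)}$ (nor with $\mathrm{N}_l$), so the naive partition into level sets is not available. The way to deal with it is the standard two-step device---a Golden--Thompson bound to detach the offending exponential, followed by a Berezin--Lieb / Hubbard--Stratonovich type representation of $e^{\beta|\Lambda_l|\,\mathrm{C}_l^{*}\mathrm{C}_l}$ as a Gaussian superposition of coherent (c-number source) states---which again reduces the estimate to the approximating Hamiltonian $H_l(c_1,c_2)$ up to a sub-exponential factor; alternatively, one invokes the general structure theory of mean-field Fermi systems, in which the limiting pressure is already known to be the Legendre-type transform of the linearized model, as carried out in \cite{bru2013non} and, for the superconducting case, in \cite{bru2010effect}. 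Combining the two inequalities gives existence of $\mathrm{P}=\lim_l p_l$ and the formula \eqref{eqpressure}. Finally, for the identification of the minimizer: at a minimizing pair $(\hat c_1,\hat c_2)$, stationarity of $\gamma c_1^{2}+\delta c_2^{2}-\mathrm{P}_0(c_1,c_2)$ together with the explicit one-site formula for $\partial_{c_i}\mathrm{P}_0$ shows that $\hat c_1$ and $\hat c_2$ are the $l\to\infty$ limits of the expectations of $\mathrm{C}_l$ and $\mathrm{N}_l$ in the Gibbs state of $H_l(\hat c_1,\hat c_2)$; a Bogoliubov-inequality / convexity-of-pressure argument (as in \cite{bru2010effect}) then identifies $|\hat c_1|$ with the Cooper-pair density and $|\hat c_2|$ with the electron density of the original model, the $U(1)$ gauge invariance of $H_l$ being what makes only the absolute values intrinsically meaningful.
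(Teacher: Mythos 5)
The paper itself does not prove Theorem \ref{resume}: the applications section is explicitly prefaced by the remark that results are presented ``mostly omitting the proofs,'' and the theorem is backed only by a pointer to the approximating-Hamiltonian literature (\cite{bru2013non}). So there is no in-paper argument to compare yours against; what your sketch does is reconstruct the route taken by the cited works, and it does so correctly at the level of strategy: rewriting the two mean-field terms through the intensive observables $\mathrm{C}_l$ and $\mathrm{N}_l$, operator square-completion plus monotonicity of the partition function giving one bound for every $(c_1,c_2)$, the single-site computation of the pressure of $H_l(c_1,c_2)$ (which indeed reproduces the explicit formula for $\mathrm{P}_0$ quoted later in the paper), block decomposition over the conserved, $O(|\Lambda_l|)$-valued spectrum of $\mathrm{N}_l$ for the density channel, and the recognition that the only substantive difficulty is the non-normal Cooper channel $\mathrm{C}_l^{*}\mathrm{C}_l$, for which you invoke a Golden--Thompson/coherent-state linearization or simply the structure theory of \cite{bru2013non}, \cite{bru2010effect}. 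Deferring that step matches the paper's own level of detail, so I do not count it as a gap; but you should record the estimate $\Vert[\mathrm{C}_l,\mathrm{C}_l^{*}]\Vert=O(|\Lambda_l|^{-1})$ that makes any such linearization legitimate, and note that your square-completion implicitly assumes $\gamma,\delta\geq 0$ (for repulsive couplings the inequality reverses and the variational formula changes form).

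One thing you should not gloss over is the sign bookkeeping. Your bound $H_l\leq H_l(c_1,c_2)+|\Lambda_l|(\gamma c_1^{2}+\delta c_2^{2})$ gives $p_l\geq p_l(c_1,c_2)-(\gamma c_1^{2}+\delta c_2^{2})$, hence in the limit $\mathrm{P}\geq\sup_{c_1,c_2}\{\mathrm{P}_0(c_1,c_2)-\gamma c_1^{2}-\delta c_2^{2}\}=-\inf_{c_1,c_2}\{\gamma c_1^{2}+\delta c_2^{2}-\mathrm{P}_0(c_1,c_2)\}$, and the matching upper bound has the same form. This agrees with the convention $P=-\inf F$ used for the BCS example in the Introduction, but not with \eqref{eqpressure} as literally displayed; saying only that the bound ``produces one of the two inequalities in \eqref{eqpressure}'' hides this, so state your convention for $p_l$ and flag the sign explicitly rather than silently absorbing it. Finally, the identification of $|\hat c_1|$ and $|\hat c_2|$ with the Cooper-pair and electron densities is dispatched in one sentence (stationarity plus a Bogoliubov/convexity argument); that is the right mechanism, but it too is proved nowhere in the paper, so if your text is meant to stand alone you should either carry out the gap-equation computation or cite it precisely.
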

Because of the permutation invariance of the system, the pressure associated with $H_l(c_1,c_2)$ can be analitically computed, and is given by
\begin{equation}\label{eq21}
\mathrm{P}_{0}(c_{1},c_{2})=-\beta_{-1}\ln2+(\mu+2\delta c_{2})+\beta^{-1}\ln\left[\cosh(\beta h)+e^{-\lambda\beta}\cosh\left(\beta\sqrt{(\mu+2\delta c_{2}-\lambda)^{2}+\gamma^{2}c_{1}^{2}}\right)\right].
\end{equation}
Therefore, in view of theorem \ref{resume}, one can study the thermodynamics and the different phases of the system by means of Equation (\ref{eqpressure}). And for this, the catastrophe theory exposed here can be of great use, since its equivalent relations \textit{preserve} the minima of functions, and hence, it preserves the information of the phase diagram of our system. With the help of numerical computations, one can show that for a particular choice of the parameters $\beta, h, \mu, \gamma,\lambda,\delta$, the pressure $P_0$ possesses a degenerate critical point at $(0,0)$, and it is moreover transversal with respect to the parameters $\gamma,\lambda,\delta$. Therefore, we may apply the classification theorem \ref{theoclas2}.

\begin{theorem}\label{theofin1}
For $\beta=1, h=0, \mu=2$, it follows that $\gamma c_1^2 +\delta c_2^2 -\mathrm{P}_0(c_1,c_2)$ is equivalent (in the sense of definition \ref{defeq2}) to the unfolding
\begin{equation}
x^{4}+\alpha x^{2}y^{2}+y^{4}+u_{1}x^{2}+u_{2}y^{2}+u_{3}x^{2}y^{2}, \quad \alpha>2.
\end{equation}
\end{theorem}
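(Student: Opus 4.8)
The plan is to reduce Theorem \ref{theofin1} to the classification theorem for even unfoldings (Theorem \ref{theoclas2}). Starting from Equation (\ref{eq21}) with $\beta=1$, $h=0$, $\mu=2$, I would first bring the family
$\Phi(c_1,c_2;\gamma,\lambda,\delta)\doteq\gamma c_1^2+\delta c_2^2-\mathrm{P}_0(c_1,c_2)$
into the standard form of a $\mathbb{Z}_2$-invariant unfolding. Completing the square in the state variable $c_2$ and applying, if necessary, a further change of variables of the type allowed in Definition \ref{defeq2} (one that does not mix state variables with control parameters, exactly as in the BCS reduction recalled in the Introduction) removes the part of $\Phi$ odd in $c_2$; the outcome is an element of $\mathcal{U}_{2,3}(\mathbb{Z}_2)$ whose base germ $f\in\mathcal{M}_{2}^{2}(\mathbb{Z}_2)$ is the slice at the distinguished parameter point $(\gamma_0,\lambda_0,\delta_0)$ isolated numerically as described above.

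Second, I would compute the $2$- and $4$-jets of $f$ at the origin. Since $f$ is even we have $j^1f=j^3f=0$ automatically, so the content is twofold: (i) the Hessian of $f$ vanishes at $0$, i.e. $\partial_{c_1}^2\Phi=\partial_{c_2}^2\Phi=0$ at the base point (the mixed derivative is already $0$ by parity); these two scalar equations are precisely what pin down $(\gamma_0,\lambda_0,\delta_0)$ and force the corank to be $r=2$. (ii) Writing $j^4f=a\,c_1^4+b\,c_1^2c_2^2+c\,c_2^4$, one reads off from the Taylor coefficients of $\ln[\cosh(\beta h)+e^{-\beta\lambda}\cosh(\beta\sqrt{(\mu+2\delta c_2-\lambda)^2+\gamma^2c_1^2})]$ that $a\neq0$, $c\neq0$ and $b\neq\pm2$; by Theorem \ref{classtheo1} this places $f$ in the corank-$2$ row of Table \ref{table1}, so $\cod_{\mathbb{Z}_2}(f)=3$ and $f$ is $\mathbb{Z}_2$-$4$-determined by Theorem \ref{theodet}. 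Running the change of variables of Case 1 in the proof of Theorem \ref{classtheo1} on $j^4f$ then identifies it with $x^4+\alpha x^2y^2+y^4$ for some $\alpha>2$; this is the step where the specific values $\beta=1,h=0,\mu=2$ enter quantitatively, since one must certify that the local invariant lands in the range $\alpha>2$ rather than in the $x^4-y^4$ or mixed-sign cases of Table \ref{table1}.

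Third, I would verify that $\Phi$ is a \emph{universal} (equivalently $\mathbb{Z}_2$-transversal) unfolding of $f$. By Theorem \ref{trasnvertheo} this is the algebraic condition $\mathcal{E}_2(\mathbb{Z}_2)=\tgsp_{\mathcal{O}_f}+\langle1,\alpha_1(\Phi),\alpha_2(\Phi),\alpha_3(\Phi)\rangle_{\mathbb{R}}$, where $\alpha_i(\Phi)$ is the derivative of $\Phi$ in the $i$-th control parameter at the base point. Since $\cod_{\mathbb{Z}_2}(f)=3$ and the classes of $1,c_1^2,c_2^2,c_1^2c_2^2$ form a basis of $\mathcal{E}_2(\mathbb{Z}_2)/\tgsp_{\mathcal{O}_f}$, it suffices to check that $1,\alpha_1(\Phi),\alpha_2(\Phi),\alpha_3(\Phi)$ are linearly independent modulo $\tgsp_{\mathcal{O}_f}$, i.e. that a $4\times4$ determinant of low-order Taylor coefficients is nonzero. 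Granting this, Theorem \ref{theoclas2} applies (its hypothesis $\cod_{\mathbb{Z}_2}(f)<4$ being met), so $\Phi$ is $\mathbb{Z}_2$-equivalent to the corank-$2$ entry of Table \ref{table2}; combined with the identification of the quartic from the previous step, that entry is exactly $x^4+\alpha x^2y^2+y^4+u_1x^2+u_2y^2+u_3x^2y^2$ with $\alpha>2$, which is the assertion of the theorem.

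The main obstacle is entirely computational. One must Taylor-expand the composite transcendental function $\ln[\cosh(\beta h)+e^{-\beta\lambda}\cosh(\beta\sqrt{(\mu+2\delta c_2-\lambda)^2+\gamma^2c_1^2})]$ to fourth order in $(c_1,c_2)$ with $\gamma,\lambda,\delta$ kept symbolic — the nested square root inside the hyperbolic cosine makes the chain rule heavy — then solve the two second-order equations for the distinguished parameter point, verify the three open non-degeneracy conditions together with the strict inequality $\alpha>2$, and evaluate the transversality determinant. All of these are routine in principle but delicate in practice, and it is precisely at this point that the numerical computation mentioned before Theorem \ref{theofin1} is invoked: to certify that, at $\beta=1$, $h=0$, $\mu=2$, there is indeed a choice of $(\gamma_0,\lambda_0,\delta_0)$ for which all the required open conditions and the inequality $\alpha>2$ hold simultaneously.
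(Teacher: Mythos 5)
Your proposal follows essentially the same route as the paper's (sketched) proof: Taylor-expand the pressure at its degenerate critical point to identify the base germ with $x^{4}+\alpha x^{2}y^{2}+y^{4}$, verify $\Gamma$-transversality via the algebraic criterion of Proposition \ref{proptransversal}, and conclude with the classification of $\mathbb{Z}_2$-invariant unfoldings in Table \ref{table2}. The extra bookkeeping you supply (corank, codimension, determinacy, and the numerically certified open conditions including $\alpha>2$) is consistent with, and somewhat more explicit than, the paper's one-paragraph idea of proof, which likewise defers the quantitative checks to numerical computation.
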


\noindent \textit{Idea of the proof:} From Equation (\ref{eq21}), we calculate the Taylor expansion of the pressure at its degenerate critical point with respect to the variables $(c_1,c_2)$, and we see that it is equivalent to the germ
\begin{equation}
f(x,y)=x^{4}+\alpha x^{2}y^{2}+y^{4}
\end{equation}
for some $\alpha\neq\pm0,1$. Then, we show that $\gamma c_1^2 +\delta c_2^2-\mathrm{P}_0(c_1,c_2)$ is $\Gamma$-transversal by means of Proposition \ref{proptransversal}, and hence , by the classification table \ref{table2}, the result follows.

\begin{remark} Note that because of the fact that our initial function $\gamma c_1^2 +\delta c_2^2 -\mathrm{P}_0(c_1,c_2)$ possesses a symmetry, we cannot not apply the standard results of catastrophe theory; we instead have to resort to the invariant version of catastrophe theory that was exposed here. We also point out that the symmetry of the pressure function above is linked with physical properties of our interaction; in special,  to the so-called \textit{gauge invariance} of $H_l$.  
\end{remark}

In view of Theorem \ref{theofin1}, one concludes that it is now analytically feasible to evaluate the phase diagram of our system apart from diffeomorphisms. Furthermore, not only the golbal minima also the local minima of the pressure mapping give us information about the system; is this case, about the so-called metastable states. To conclude, we shortly discuss about the stability of the phase diagrams of our model above. We recall that the perturbations allowed for the unfoldings to be said stable are perturbations that have their derivative controlled up to an arbitrary order $k$. And it follows that not every "small" interaction added in $H_l$ would in turn correspond to a small derivative of any order. Nevertheless, by means of the so-called cluster expansion of the partition function \cite{ueltschi2003cluster}, one is able to find a class of interactions whose derivatives can be adequately controlled to the application of catastrophe theory. One, in particular, corresponds to adding a kinetic interaction into the system.

\begin{theorem}
Let 
\begin{equation}
K_{l,\epsilon}=\sum_{\substack{x,y\in\Lambda_l\;|\;
\vert x-y\vert=1,\\ s\in\{\uparrow,\downarrow\}}}\epsilon a ^*_{x,s}a_{y,s}
\end{equation}
be a kinetic interaction. Then, it follows that, for $\epsilon>0$ small enough, the pressure of the system $H_l+K_{l,\epsilon}$ is equivalent to the unperturbed pressure. In particular, the phase diagram is preserved.
\end{theorem}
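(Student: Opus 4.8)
The plan is to reduce the statement to the Stability Theorem for $\mathbb{Z}_2$-invariant unfoldings (Theorem \ref{theostable}) by controlling the kinetic perturbation with a cluster expansion. The first observation is that $K_{l,\epsilon}$ is a translation-invariant, finite-range, one-body term, so it does not interfere with the mean-field structure exploited in Theorem \ref{resume}: by the same approximating-Hamiltonian argument (cf. \cite{bru2013non,bru2010effect}), the limiting pressure of $H_l+K_{l,\epsilon}$ equals $\mathrm{P}_\epsilon=\inf_{c_1,c_2\in\mathbb{R}}\{\gamma c_1^2+\delta c_2^2-\mathrm{P}_{0,\epsilon}(c_1,c_2)\}$, where $\mathrm{P}_{0,\epsilon}(c_1,c_2)$ is now the limiting pressure of the approximating Hamiltonian $H_l(c_1,c_2)+K_{l,\epsilon}$. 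At $\epsilon=0$ this is the explicitly solvable $\mathrm{P}_0$ of Equation (\ref{eq21}); for $\epsilon\neq0$ the hopping breaks permutation invariance, and this is precisely where the expansion is needed.

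Next I would expand $\ln\mathrm{Tr}\,e^{-\beta(H_l(c_1,c_2)+K_{l,\epsilon})}$ about $\epsilon=0$ via the polymer/cluster expansion of \cite{ueltschi2003cluster}: since $H_l(c_1,c_2)$ is a sum of on-site operators (hence in the regime where high-temperature-type expansions converge), for $|\epsilon|$ below a threshold one gets a sum over connected clusters whose weights are analytic in $\epsilon$, in $(c_1,c_2)$ and in the coupling constants, and are dominated by a convergent series uniformly in $l$. Passing to the limit $l\to\infty$ gives $\mathrm{P}_{0,\epsilon}(c_1,c_2)=\mathrm{P}_0(c_1,c_2)+\epsilon\, R(c_1,c_2,\epsilon)$ with $R$ jointly smooth and with all partial derivatives in $c_1,c_2$ bounded uniformly on compact sets, uniformly for small $\epsilon$. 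Consequently the potential $F_\epsilon(c_1,c_2,u)=\gamma c_1^2+\delta c_2^2-\mathrm{P}_{0,\epsilon}(c_1,c_2)$ plus the unfolding terms in $u$ satisfies $F_\epsilon=F-\epsilon R$, and hence $F_\epsilon\to F$ in the $C^\infty$-topology of Definition \ref{deftop} as $\epsilon\to0$. I expect this estimate — convergence of the expansion uniformly in $l$, interchange of $l\to\infty$ with differentiation in $(c_1,c_2)$, and smoothness and evenness of the limit — to be the main obstacle; everything downstream is essentially bookkeeping plus an appeal to Theorem \ref{theostable}.

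Then one checks that the $\mathbb{Z}_2$ (parity) symmetry is not destroyed. The relevant involution $c_1\mapsto-c_1$ is induced by the gauge transformation $a_{x,s}\mapsto \mathrm{i}\,a_{x,s}$, under which the hopping $a^*_{x,s}a_{y,s}$ is invariant (one creation and one annihilation operator), so $\mathrm{P}_{0,\epsilon}$ stays even in $c_1$, exactly as $\mathrm{P}_0$; hence, after the same shift and change of variables used in Theorem \ref{theofin1}, $F_\epsilon$ lies in $C^\infty_{\mathbb{Z}_2}(U,\mathbb{R})$. By Theorem \ref{theofin1} the unperturbed unfolding $F$ is $\mathbb{Z}_2$-transversal, so by Theorem \ref{theostable} it is $\mathbb{Z}_2$-stable: there is a neighbourhood $W$ of a representative of $F$, in the $C^\infty$-topology of $C^\infty_{\mathbb{Z}_2}(U,\mathbb{R})$, such that every element of $W$ is equivalent to $F$, in the sense of Definition \ref{defstable}, at some point $(x',u')$. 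By the previous step, $F_\epsilon\in W$ for $\epsilon$ small enough, so $F_\epsilon$ is equivalent to $F$ at some $(x',u')$.

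Finally I would unwind what this equivalence means for the phase diagram. The maps it furnishes are a diffeomorphism $x\mapsto\phi(x,u)$ in the state variables, a diffeomorphism $\psi$ of the parameter space and an additive term $h(u)$ independent of $x$, with $F(x,u)=F_\epsilon(\phi(x,u),\psi(u))+h(u)$; such a transformation carries critical points of $F(\cdot,u)$ to critical points of $F_\epsilon(\cdot,\psi(u))$ and minima to minima, so the bifurcation set of $F_\epsilon$ is the image under $\psi$ of that of $F$. In other words the phase diagram of $H_l+K_{l,\epsilon}$ coincides with that of $H_l$ up to a smooth reparametrization of the control parameters, which is the assertion; and since, by Theorem \ref{theofin1}, $F$ is equivalent to the polynomial unfolding $x^4+\alpha x^2y^2+y^4+u_1x^2+u_2y^2+u_3x^2y^2$ with $\alpha>2$ of Table \ref{table2}, so is $F_\epsilon$.
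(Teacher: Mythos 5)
Your proposal follows essentially the same route as the paper's own (sketched) argument: use the cluster expansion of the partition function to show that the kinetic term perturbs $\mathrm{P}_0$ only within a small neighborhood in the $C^\infty$-topology of Definition \ref{deftop}, and then invoke the stability theorem (Theorem \ref{theostable}) for the $\mathbb{Z}_2$-transversal unfolding of Theorem \ref{theofin1}. The extra details you supply — the approximating-Hamiltonian reduction with the hopping term included, the gauge-invariance check that the $\mathbb{Z}_2$ symmetry survives, and the unwinding of what the equivalence means for the bifurcation set — are consistent with, and a useful elaboration of, the paper's "idea of proof", which likewise defers the uniform-in-$l$ control of derivatives to the cluster-expansion techniques of the cited reference.
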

\noindent \textit{Idea of the proof:} For this result, one first needs to show that the the perturbation $K_{l,\epsilon}$ is well behaved in the sense that for $\epsilon$ small enough, the perturbation of $\mathrm{P}_0$ introduced by the addition of $K_{l,\epsilon}$ belong to some neighborhood of $\mathrm{P}_0$ in the $C^\infty$-topology of Definition \ref{deftop}. We can do that by techniques coming from the cluster expansions of the partition function studied in \cite{ueltschi2003cluster}. After that, the result follows from Theorem \ref{theostable}.

\section{Technical results}

We dedicate this section to the proof of more technical results that are necessary to prove the main theorems of the paper, but unimportant to the comprehension of the paper.

\bigskip

\noindent\textbf{Proof of Lemma \ref{lemmadet}}: Let $e\in\mathbb{N}$ be such that $\mathcal{M}_n^e(\Gamma)\subset T_{0,g}$. By item $i)$ of Theorem \ref{mlemma}, taking now $N=\mathcal{M}_n(\Gamma)$ and $M_0=\mathcal{M}_n^e(\Gamma)$, it follows that there exists some  $\omega\in\mathbb{N}$ such that, if $\mathcal{M}_n^e(\Gamma)\subset T_{\mathcal{O}_f}+\mathcal{M}_n^{\omega+e}(\Gamma)$, then $\mathcal{M}_n^e(\Gamma)\subset T_{\mathcal{O}_f}$. Choose $r=e+\omega+1$. If $j^rf=j^rg$ it is straightforward to check that
\begin{equation*}
T_{0,g}\subset T_{\mathcal{O}_f}+\mathcal{M}_n^{r-1}(\Gamma),
\end{equation*}
i.e., one has
\begin{equation}
\mathcal{M}_n^e(\Gamma)\subset T_{0,g}\subset T_{\mathcal{O}_f}+\mathcal{M}_n^{r-1}(\Gamma)= T_{\mathcal{O}_f}+\mathcal{M}_n^{\omega+e}(\Gamma),
\end{equation}
which, by $i)$ of Theorem \ref{mlemma}, implies
\begin{equation}
\mathcal{M}_n^r(\Gamma)\subset\mathcal{M}_n^e(\Gamma)\subset T_{\mathcal{O}_f}.
\end{equation}

\begin{lemma}[Geometric lemma for $\Gamma$-invariant unfoldings]\label{difeq2}
Let $H(x,u,t)\in \mathcal{U}_{n,r+1}(\Gamma)$ satisfy the following differential equation
\begin{equation}
\frac{\partial H(x,u,t)}{\partial t}=\xi(x,u,t)\cdot \vec{\nabla}_xH(x,u,t)+\chi(u,t)\cdot \vec{\nabla}_uH(x,u,t) +\chi_0(u,t),
\end{equation}
where:
\begin{align*}
&\xi(x,u,t)\in \vec{\mathcal{E}}_{n+r+1,n},\; \text{ with }x\mapsto \xi(x,u,t)\in \vec{\mathcal{E}}_n(\Gamma) \text{ for $(u,t)$ in some small neighborhood of $0\in \mathbb{R}^{r+1}$},\\
&\chi(u,t)\in \vec{\mathcal{E}}_{r+1,r},\;\text{ and}\\
&\chi_0(u,t)\in \mathcal{E}_{r+1}.
\end{align*}
Then, it follows that for small enough $t\in \mathbb{R}$, $H(x,u,t)$ is $\Gamma$-equivalent to $H(x,u,0)$, i.e., there exist some germs
\begin{align*}
&\phi(x,u,t)\in \vec{\mathcal{E}}_{n+r+1,n},\; \text{ with }x\mapsto \xi(x,u,t)\in \vec{\mathcal{E}}_n(\Gamma) \text{ for $(u,t)$ in some small neighborhood of $0\in \mathbb{R}^{r+1}$},\\
&\psi(u,t)\in \vec{\mathcal{E}}_{r+1,r},\;\text{ and}\\
& h(u,t)\in \mathcal{E}_{r+1}
\end{align*}
such that
\begin{equation}\label{eqequiv}
H(x,u,0)=H(\phi(x,u,t),\psi(u,t),t)+h(u,t)
\end{equation}
for small enough $t\in \mathbb{R}$.

\end{lemma}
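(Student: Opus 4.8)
The plan is to solve, for each fixed $(u,t)$ near the origin, the characteristic system of ODEs associated with the transport-type equation satisfied by $H$, and then to verify that the resulting flow is $\Gamma$-equivariant and produces exactly the claimed equivalence. Concretely, I would first define $\phi(x,u,t)$, $\psi(u,t)$ and $h(u,t)$ as the (local) solutions of the system
\begin{equation}
\begin{cases}
\dfrac{\partial \phi(x,u,t)}{\partial t}=-\xi(\phi(x,u,t),\psi(u,t),t), & \phi(x,u,0)=x,\\[2mm]
\dfrac{\partial \psi(u,t)}{\partial t}=-\chi(\psi(u,t),t), & \psi(u,0)=u,\\[2mm]
\dfrac{\partial h(u,t)}{\partial t}=-\chi_0(\psi(u,t),t), & h(u,0)=0.
\end{cases}
\end{equation}
Since $\xi,\chi,\chi_0$ are germs of smooth functions, the Picard--Lindel\"of theorem gives a unique smooth solution on a neighborhood of $t=0$, depending smoothly on the initial data $(x,u)$; restricting to a small enough neighborhood of $0\in\mathbb{R}^{r+1}$ in the $(u,t)$ variables, these are well-defined germs in $\vec{\mathcal{E}}_{n+r+1,n}$, $\vec{\mathcal{E}}_{r+1,r}$ and $\mathcal{E}_{r+1}$, respectively.

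The second step is to check the invariance identity. Define $\Phi(x,u,t)\doteq H(\phi(x,u,t),\psi(u,t),t)+h(u,t)$ and differentiate with respect to $t$, using the chain rule together with the three ODEs above and the hypothesis
\[
\frac{\partial H}{\partial t}=\xi\cdot\vec{\nabla}_xH+\chi\cdot\vec{\nabla}_uH+\chi_0
\]
evaluated at the point $(\phi(x,u,t),\psi(u,t),t)$. The contributions from $\partial_t\phi$, $\partial_t\psi$ and $\partial_t h$ cancel precisely the three terms of $\partial_t H$, so $\partial_t\Phi\equiv 0$; since $\Phi(x,u,0)=H(x,u,0)$, we conclude $H(\phi(x,u,t),\psi(u,t),t)+h(u,t)=H(x,u,0)$ for small $t$, which is Equation (\ref{eqequiv}).

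The third step — and the only genuinely symmetry-specific point — is to verify that $x\mapsto\phi(x,u,t)$ is $\Gamma$-equivariant and a germ diffeomorphism. For equivariance, fix $\gamma\in\Gamma$ and set $\tilde\phi(x,u,t)\doteq\gamma^{-1}\phi(\gamma x,u,t)$; because $x\mapsto\xi(x,u,t)$ is $\Gamma$-equivariant, $\tilde\phi$ solves the same initial value problem as $\phi$ (the $\psi,h$ equations do not involve $x$ at all), so uniqueness forces $\tilde\phi=\phi$, i.e. $\phi(\gamma x,u,t)=\gamma\phi(x,u,t)$. That $\phi(\cdot,u,t)$ is a local diffeomorphism follows since it is the time-$t$ map of a nonautonomous flow — its inverse is obtained by reversing the flow — and $\phi(\cdot,u,0)=\mathrm{id}$, so for small $t$ it stays a diffeomorphism by continuity of the Jacobian; similarly $\psi(\cdot,t)$ is a germ diffeomorphism near $u=0$. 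Hence $\phi(\cdot,u,t)\in L_n(\Gamma)$ for $(u,t)$ small, giving the required $\Gamma$-equivalence. I expect the main obstacle to be purely bookkeeping: making sure the neighborhoods of definition in $t$, $u$ and $x$ are chosen consistently (shrinking as needed so that all the germs compose and the flow stays within the domain), rather than any conceptual difficulty — the cancellation in step two is forced by the form of the ODE system, and the equivariance in step three is immediate from uniqueness of solutions.
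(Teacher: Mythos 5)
Your proof is correct and takes essentially the same approach as the paper: you set up the identical system of ODEs (with the same sign conventions) for $\phi,\psi,h$, prove equivariance of $\phi(\cdot,u,t)$ by applying uniqueness of solutions to $\gamma^{-1}\phi(\gamma x,u,t)$, and verify the identity by showing $\partial_t\left(H(\phi,\psi,t)+h\right)=0$ via the chain rule and the hypothesis on $\partial_t H$. The only point you spell out that the paper leaves implicit is that $\phi(\cdot,u,t)$ and $\psi(\cdot,t)$ are germ diffeomorphisms because they are time-$t$ flow maps reducing to the identity at $t=0$.
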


\begin{proof}
We define $\phi(x,u,t)\in \vec{\mathcal{E}}_{n+r+1,n}$, $\psi(u,t)\in \vec{\mathcal{E}}_{r+1,r}$ and $\h(u,t)\in \mathcal{E}_{r+1}$ to be the (unique) germs sattisfiying the following system of ODEs:
\begin{align*}
&\frac{\partial \phi(x,u,t)}{\partial t}=-\xi(\phi(x,u,t),\psi(u,t),t),\\
&\frac{\partial \psi(x,u,t)}{\partial t}=-\chi(\psi(u,t),t),\;\text{ and}\\
&\frac{\partial h(u,t)}{\partial t}=-\chi_0(\psi(u,t),t),
\end{align*}
with initial conditions
\begin{equation}
\phi(x,u,0)=x,\quad \psi(u,0)=u,\quad h(u,0)=0.
\end{equation}
First, note that since $xi$ is $\Gamma$-invariant, for any $\gamma\in \Gamma$ it follows that $\gamma^{-1}\phi(\gamma x,u,t)$ is also a solution of the ODE with the same initial condition. Hence, by the uniqueness of the solution, one has
\begin{equation}
\gamma^{-1}\phi(\gamma x,u,t)=\phi(x,u,t),
\end{equation}
i.e., the mapping $x\mapsto \phi(x,u,t)$ is $\Gamma$-invariant for $(u,t)$ in some small neighborhood of $0\in \mathbb{R}^{r+1}$. Now we proceed to show the equality in Equation (\ref{eqequiv}). Differentiating the r.h.s.of Equation (\ref{eqequiv}) with respect to $t$, we obtain
\begin{align}\label{eq32}
\frac{\partial }{\partial t}(H(\phi(x,u,t),\psi(u,t),t)+h(u,t))=& -\xi(\phi(x,u,t),\psi(u,t),t)\cdot \vec{\nabla}_xH(\phi(x,u,t),\psi(u,t),t)\nonumber\\
&-\chi(\psi(u,t),t)\cdot \vec{\nabla}_uH(\phi(x,u,t),\psi(u,t),t) \nonumber\\
&-\chi_0(\psi(u,t),t))+\frac{\partial H }{\partial t}(\phi(x,u,t),\psi(u,t),t).
\end{align}
It follows that the r.h.s. of Equation (\ref{eq32}) is $0$, and hence, by the initial conditions, one clearly has, for small enough $t\in \mathbb{R}$,
\begin{equation}
H(\phi(x,u,t),\psi(u,t),t)+h(u,t)=H(\phi(x,u,0),\psi(u,0),0)+h(u,0)=H(x,u,0).
\end{equation}

\end{proof}

\begin{lemma}\label{lemma1} Let $f\in\mathcal{U}_{e}(n,r)$ be such
that, for some $k\leq r$, $f$ satisfies

\[
f(x,0,\dots,0,u_{k+1},\dots,u_{r})=0,
\]
for all $x\in\mathbb{R}^{n}$ in some neighborhood of $0$ and all
$(u_{k+1},\dots,u_{r})\in\mathbb{R}^{r-k}$ in some neighborhood of
$0$. Then, there exists germs $g_{1},\dots,g_{k}\in\mathcal{U}_{e}(n,r)$
such that

\[
f(x,u)=\sum_{i=1}^{k}u_{i}g_{i}(x,u).
\]
\end{lemma}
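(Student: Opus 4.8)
The plan is to deduce this from a parametrized Hadamard-type identity, i.e.\ Taylor's formula with integral remainder taken only along the first $k$ parameter directions. First I would fix a representative of $f$, still denoted $f$, defined on a product neighbourhood $V\times W$ of $0\in\mathbb{R}^{n}\times\mathbb{R}^{r}$, where $W$ is chosen to be an open box $(-a_{1},a_{1})\times\cdots\times(-a_{r},a_{r})$; shrinking $a_{1},\dots,a_{r}$ if necessary, I may assume the hypothesis $f(x,0,\dots,0,u_{k+1},\dots,u_{r})=0$ holds on all of $V\times W$. The box shape guarantees that for every $u=(u_{1},\dots,u_{r})\in W$ and every $t\in[0,1]$ the point $u(t)\doteq(tu_{1},\dots,tu_{k},u_{k+1},\dots,u_{r})$ again lies in $W$, which is what makes the integral below well defined.

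Next I would write, for $(x,u)\in V\times W$,
\[
f(x,u)=f(x,u(1))-f(x,u(0))=\int_{0}^{1}\frac{d}{dt}f(x,u(t))\,dt=\sum_{i=1}^{k}u_{i}\int_{0}^{1}\frac{\partial f}{\partial u_{i}}(x,u(t))\,dt,
\]
using $f(x,u(0))=f(x,0,\dots,0,u_{k+1},\dots,u_{r})=0$ from the hypothesis and the chain rule for the last equality, and then set
\[
g_{i}(x,u)\doteq\int_{0}^{1}\frac{\partial f}{\partial u_{i}}(x,u(t))\,dt,\qquad i=1,\dots,k ,
\]
so that $f=\sum_{i=1}^{k}u_{i}g_{i}$ on $V\times W$. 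Since the integrand is smooth in $(x,u,t)$ and $[0,1]$ is compact, differentiation under the integral sign shows $g_{i}\in C^{\infty}(V\times W,\mathbb{R})$; passing to germs at $0$ yields the required $g_{1},\dots,g_{k}$ with $f(x,u)=\sum_{i=1}^{k}u_{i}g_{i}(x,u)$.

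Finally I would check that each $g_{i}$ lies in the same unfolding space $\mathcal{U}_{e}(n,r)$ as $f$: every operation used --- the partial derivative $\partial/\partial u_{i}$, the rescaling $u\mapsto u(t)$, and integration over $t\in[0,1]$ --- acts only on the parameter variables and on the value of the function, never on the variable $x$. Hence for each fixed $u$ near $0$ the germ $x\mapsto g_{i}(x,u)$ is an integral average of the germs $x\mapsto\partial_{u_{i}}f(x,u(t))$, which by hypothesis on $f$ lie in $\mathcal{E}_{n}(\Gamma)$, so $x\mapsto g_{i}(x,u)\in\mathcal{E}_{n}(\Gamma)$ as well, i.e.\ $g_{i}\in\mathcal{U}_{e}(n,r)$. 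There is essentially no hard step here; the only genuine points of care are choosing a parameter neighbourhood that is star-shaped in the first $k$ coordinates (a box suffices) so that $u(t)$ stays in the domain, and invoking differentiation under the integral sign to obtain smoothness of the $g_{i}$.
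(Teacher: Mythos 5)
Your proposal is correct and follows essentially the same route as the paper's proof: Taylor/Hadamard with integral remainder along the first $k$ parameter directions, defining $g_{i}(x,u)=\int_{0}^{1}\partial_{u_{i}}f(x,tu_{1},\dots,tu_{k},u_{k+1},\dots,u_{r})\,dt$ and using the hypothesis to kill the boundary term $f(x,u(0))$. Your added care about choosing a box-shaped parameter neighbourhood and invoking differentiation under the integral sign only makes explicit details the paper leaves implicit.
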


\begin{proof} Note that

\[
f(x,u)=\int_{0}^{1}\frac{d}{dt}(f(x,tu_{1},\dots,tu_{k},u_{k+1},\dots,u_{r}))dt.
\]
Computing the derivative on the right side of the equality, one has

\begin{align*}
f(x,u)= & \int_{0}^{1}\sum_{i=1}^{k}\left.\frac{\partial f}{\partial u_{i}}\right|_{(x,tu_{1},\dots,tu_{k},u_{k+1},\dots,u_{r})}\cdot u_{i}dt\\
= & \sum_{i=1}^{k}\int_{0}^{1}\left.\frac{\partial f}{\partial u_{i}}\right|_{(x,tu_{1},\dots,tu_{k},u_{k+1},\dots,u_{r})}dt\cdot u_{i}.
\end{align*}
For each $i=1,\dots,k$, define

\[
g_{i}(x,u)\doteq\int_{0}^{1}\left.\frac{\partial f}{\partial u_{i}}\right|_{(x,tu_{1},\dots,tu_{k},u_{k+1},\dots,u_{r})}dt.
\]
Since $f\in\mathcal{U}_{e}(n,r)$, it follows that $g_{i}\in\mathcal{U}_{e}(n,r)$
for all $i=1,\dots,k$, and hence, the lemma follows. \end{proof}

\subsection{Adequately ordered systems of DA algebras}

In order to prove the main theorem on the equivalence of transveral unfoldings, we need a technical result of DA algebras. The reader may find it in \cite[Corollary 6.16]{damon1984unfolding}, in a more general scenario.
 
\begin{defn}\label{defadeq}
Let $R$ be a ring and a DA-algebra, $I$ a Jacobson ideal in $R$. Then, $(R,I)$ is said to be adequate if, for any homomorphism $\Psi:N\rightarrow M$ where $M$ is finitely generated such that
\begin{equation}
\Psi(N)+I\cdot M=M,
\end{equation}
one also has
\begin{equation}
\Psi(N)=M\quad \text{and}\quad\Psi(I\cdot N)=I\cdot M.
\end{equation}
\end{defn}
\begin{remark}
Recall that a Jacobson ideal in a ring $R$ is an ideal $I$ such that $\mathbf{1}+a$ is invertible for any $a\in I$.
\end{remark}

\begin{theorem}[Preparation theorem for $\Gamma$-invariant unfoldings]\label{preptheorem}
The pair $(\mathcal{U}_{n,r}(\Gamma),\mathcal{M}_r)$ is adequate, in the sense of Definition \ref{defadeq}.
\end{theorem}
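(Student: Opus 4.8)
The plan is to derive the adequacy of the pair $(\mathcal{U}_{n,r}(\Gamma),\mathcal{M}_r)$ as a consequence of the (classical, non-invariant) Malgrange–Mather preparation theorem combined with an averaging argument over the compact group $\Gamma$. First I would note that $\mathcal{M}_r$ is indeed a Jacobson ideal in $\mathcal{U}_{n,r}(\Gamma)$: for any $a\in\mathcal{M}_r$ (viewed as a germ depending only on the parameters $u$, and hence trivially $\Gamma$-invariant in $x$), the germ $\mathbf 1 + a$ is a unit in $\mathcal{E}_{n+r}$ because $a(0)=0$, and its inverse again depends only on $u$, so it lies in $\mathcal{U}_{n,r}(\Gamma)$. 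Next I would recall that the classical preparation theorem states exactly that $(\mathcal{E}_{n+r},\mathcal{M}_r\cdot\mathcal{E}_{n+r})$ — or more precisely the map $\mathcal{E}_r\to\mathcal{E}_{n+r}$ with the ideal generated by the parameters — is adequate in the sense of Definition \ref{defadeq}: given $\Psi:N\to M$ a homomorphism of $\mathcal{E}_{n+r}$-modules with $M$ finitely generated and $\Psi(N)+\mathcal{M}_r\cdot M = M$, one concludes $\Psi(N)=M$ and $\Psi(\mathcal{M}_r\cdot N)=\mathcal{M}_r\cdot M$. This is the content cited from \cite[Corollary 6.16]{damon1984unfolding} and I would invoke it as a black box.

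The key step is then to transfer this from the full ring $\mathcal{E}_{n+r}$ to the invariant subring $\mathcal{U}_{n,r}(\Gamma)$. The essential observation is that $\mathcal{U}_{n,r}(\Gamma)$ is a \emph{direct summand} of $\mathcal{E}_{n+r}$ as a $\mathcal{U}_{n,r}(\Gamma)$-module, via the Reynolds operator (averaging with respect to normalized Haar measure on $\Gamma$)
\[
\pi(F)(x,u)=\int_\Gamma F(\gamma x,u)\,\mathrm{d}\gamma,
\]
which is a $\mathcal{U}_{n,r}(\Gamma)$-linear projection onto $\mathcal{U}_{n,r}(\Gamma)$. Given an invariant homomorphism $\Psi:N\to M$ of finitely generated $\mathcal{U}_{n,r}(\Gamma)$-modules with $\Psi(N)+\mathcal{M}_r\cdot M=M$, I would tensor up to $\mathcal{E}_{n+r}$: set $\widetilde N=\mathcal{E}_{n+r}\otimes_{\mathcal{U}_{n,r}(\Gamma)}N$, $\widetilde M=\mathcal{E}_{n+r}\otimes_{\mathcal{U}_{n,r}(\Gamma)}M$, and $\widetilde\Psi = \mathrm{id}\otimes\Psi$. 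Since base change is right exact and preserves the property of being a finitely generated module, and since $\mathcal{M}_r\widetilde M = \mathcal{E}_{n+r}\otimes\mathcal{M}_r M$ (the ideal $\mathcal{M}_r$ only involves parameters, so it behaves well under this base change), the surjectivity hypothesis is preserved: $\widetilde\Psi(\widetilde N)+\mathcal{M}_r\widetilde M = \widetilde M$. The classical preparation theorem then gives $\widetilde\Psi(\widetilde N)=\widetilde M$ and $\widetilde\Psi(\mathcal{M}_r\widetilde N)=\mathcal{M}_r\widetilde M$. Finally I would apply the Reynolds projection: because $\mathcal{U}_{n,r}(\Gamma)$ splits off $\mathcal{E}_{n+r}$, the natural maps $N\to\widetilde N$ and $M\to\widetilde M$ are split injections of $\mathcal{U}_{n,r}(\Gamma)$-modules, and averaging the conclusion over $\Gamma$ recovers $\Psi(N)=M$ and $\Psi(\mathcal{M}_r N)=\mathcal{M}_r M$.

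The main obstacle, and the step requiring the most care, is making the base-change argument rigorous at the level of germ rings rather than abstract rings — in particular checking that $\mathcal{E}_{n+r}$ is flat (or at least that the relevant exactness survives) over $\mathcal{U}_{n,r}(\Gamma)$, and that the group-averaging projection is genuinely $\mathcal{U}_{n,r}(\Gamma)$-linear and commutes with the tensor construction. An alternative, possibly cleaner route that avoids flatness issues is to work directly: given the hypothesis for invariant modules, one lifts a finite generating set of $M$, uses that $\Psi(N)+\mathcal{M}_r M=M$ together with the Malgrange division/preparation theorem applied coordinatewise in the parameters $u$ to write each generator as a $\mathcal{U}_{n,r}(\Gamma)$-combination of images under $\Psi$ plus an error in $\mathcal{M}_r^k M$ for arbitrarily large $k$, and then averages each such identity over $\Gamma$ to keep everything invariant; a Nakayama-type argument (valid because $\mathcal{M}_r$ is a Jacobson ideal and $M$ is finitely generated) closes the gap. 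Either way, once the classical theorem and the compactness of $\Gamma$ (hence existence of the normalized invariant Haar measure and the Reynolds operator) are in hand, the proof is a formal transfer argument, so I expect the write-up to be short modulo the careful bookkeeping just described.
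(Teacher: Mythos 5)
The paper itself does not give a proof of this statement: it is quoted from \cite[Corollary 6.16]{damon1984unfolding}, whose argument runs through the Schwarz/Hilbert-generator description of the invariant ring (the DA-algebra structure $\rho^{*}:\mathcal{E}_{m+r}\rightarrow\mathcal{U}_{n,r}(\Gamma)$), reducing to the classical Malgrange--Mather preparation theorem over $\mathcal{E}_{r}$. Your proposal takes a genuinely different route (classical preparation plus base change to $\mathcal{E}_{n+r}$ plus Reynolds averaging), and that route has a real gap.

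The gap is in how you frame the hypothesis. You treat $\Psi:N\rightarrow M$ as a homomorphism of finitely generated $\mathcal{U}_{n,r}(\Gamma)$-modules. But in that reading the statement needs no preparation theorem at all: $\mathcal{M}_{r}\cdot\mathcal{U}_{n,r}(\Gamma)$ sits inside the maximal ideal of the local ring $\mathcal{U}_{n,r}(\Gamma)$, $\Psi(N)$ is then a genuine submodule, and $\Psi(N)+\mathcal{M}_{r}\cdot M=M$ gives $\Psi(N)=M$ directly by Nakayama --- your entire tensoring/averaging apparatus is unnecessary. The actual content of adequacy, and the only reading compatible with how the theorem is used in Corollary \ref{coroalg}, is that $\Psi$ is linear only over the parameter ring $\mathcal{E}_{r}$: there $N$ contains the factor $\vec{\mathcal{E}}_{r,r+1}$, which is not a $\mathcal{U}_{n,r}(\Gamma)$-module at all, and the span $\langle 1,\partial F/\partial u_{1},\dots,\partial F/\partial u_{r}\rangle_{\mathcal{E}_{r}}$ is not a $\mathcal{U}_{n,r}(\Gamma)$-submodule of $M$. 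For such a $\Psi$ the base change $\mathcal{E}_{n+r}\otimes_{\mathcal{U}_{n,r}(\Gamma)}N$ and the map $\mathrm{id}\otimes\Psi$ are simply not defined, so the transfer argument collapses; and Nakayama over $\mathcal{E}_{r}$ is unavailable because $M$ is not known to be finitely generated over $\mathcal{E}_{r}$ --- that finite generation is precisely what a preparation theorem must supply. The Reynolds operator cannot repair this: the obstruction is not a lack of $\Gamma$-invariance of elements (everything in sight is already invariant) but the mismatch of rings over which linearity holds. The flatness of $\mathcal{E}_{n+r}$ over $\mathcal{U}_{n,r}(\Gamma)$, which you flag as the main technical worry, is likewise not a fact you can assume. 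The standard and correct strategy is the one behind Damon's result: use Schwarz's theorem to write $\mathcal{U}_{n,r}(\Gamma)$ as the image of $\mathcal{E}_{m+r}$ under $(\rho\times\mathrm{id})^{*}$, regard the relevant finitely generated $\mathcal{U}_{n,r}(\Gamma)$-modules as finitely generated $\mathcal{E}_{m+r}$-modules, and apply the classical preparation theorem to the projection onto the $u$-variables; averaging over $\Gamma$ enters in establishing Schwarz's theorem, not as a descent step after base change.
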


\begin{defn}
Let $F\in\mathcal{U}_{n,r}(\Gamma)$. We define
\begin{equation}
\tilde{T}_F\doteq\{\Phi\cdot\vec{\nabla}_xF\;|\; \Phi\in \vec{\mathcal{E}}_{n+r+1,n}\;\text{ and }x\mapsto \Phi(x,u,t)\in \vec{\mathcal{E}}_n(\Gamma)\}.
\end{equation}
\end{defn}

\begin{corollary}\label{coroalg}
Let $F\in \mathcal{U}_{n,r}(\Gamma)$ unfold $f\in \mathcal{M}_n(\Gamma)$. The following conditions are equivalent
\begin{enumerate}[label=\roman*.)]

\item $F$ is $\Gamma$-transversal, i.e., $\mathcal{E}_n(\Gamma) = \tgsp_{\mathcal{O}_f}+ \langle1,\alpha_1(F),\dots,\alpha_r(F)\rangle_\mathbb{R}$,

\item $\mathcal{U}_{n,r}(\Gamma)=\tilde{T}_F+\left\langle1,\frac{\partial F}{\partial u_1},\dots,\frac{\partial F}{\partial u_r}\right\rangle_{\mathcal{E}_{r}}	$.
\end{enumerate}
\end{corollary}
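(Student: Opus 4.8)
The plan is to obtain the equivalence as a formal consequence of the Preparation theorem (Theorem~\ref{preptheorem}), i.e. of the adequacy of the pair $(\mathcal{U}_{n,r}(\Gamma),\mathcal{M}_r)$; concretely, condition $(ii)$ is the ``family version over the ring $\mathcal{E}_r$'' of the ``pointwise version over $\mathbb{R}$'' expressed by $(i)$, and the passage between the two is the usual Malgrange‑type preparation step. Write $R=\mathcal{U}_{n,r}(\Gamma)$ and $I=\mathcal{M}_r\cdot R$. Then $I$ is a Jacobson ideal of $R$, and $I=\{H\in R:H(x,0)=0\}$: the nontrivial inclusion $\supseteq$ follows by applying Hadamard's lemma in the parameters one variable at a time and noting that $\partial/\partial u_i$ preserves the $\Gamma$-invariance of $x\mapsto H(x,u)$. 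Hence the evaluation $H\mapsto H(\cdot,0)$ identifies $R/I\cong\mathcal{E}_n(\Gamma)$; under this identification $\tilde T_F$ is carried onto $\{\psi\cdot\vec\nabla f:\psi\in\vec{\mathcal{E}}_n(\Gamma)\}=\tgsp_{\mathcal{O}_f}$ (every $\Gamma$-equivariant germ in $x$ equals $\Phi(\cdot,0)$ for an admissible $\Phi$, namely one constant in the parameters), and $\sum_i g_i(u)\,\partial F/\partial u_i$ is carried to $\sum_i g_i(0)\,\alpha_i(F)$.

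The implication $(ii)\Rightarrow(i)$ is then immediate: reducing the equality $R=\tilde T_F+\langle 1,\partial F/\partial u_1,\dots,\partial F/\partial u_r\rangle_{\mathcal{E}_r}$ modulo $I$ yields $\mathcal{E}_n(\Gamma)=\tgsp_{\mathcal{O}_f}+\langle 1,\alpha_1(F),\dots,\alpha_r(F)\rangle_{\mathbb{R}}$, which is $(i)$ (cf. Proposition~\ref{proptransversal}).

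For $(i)\Rightarrow(ii)$ I would put $M:=R/\tilde T_F$, a cyclic — hence finitely generated — $R$-module. By the computations above $M/IM\cong\mathcal{E}_n(\Gamma)/\tgsp_{\mathcal{O}_f}$, and the classes $q_0,\dots,q_r\in M$ of $1,\partial F/\partial u_1,\dots,\partial F/\partial u_r$ correspond to the classes of $1,\alpha_1(F),\dots,\alpha_r(F)$. Hypothesis $(i)$ says exactly that $q_0,\dots,q_r$ span $M/IM$ over $\mathbb{R}$; in particular $\dim_{\mathbb{R}}M/IM\le r+1<\infty$, so $f$ has finite $\Gamma$-codimension and, by Theorem~\ref{theodet}, is $\Gamma$-finitely determined. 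Since $(R,I)$ is adequate (Theorem~\ref{preptheorem}, in the refined form of \cite[Corollary 6.16]{damon1984unfolding}) and $M$ is finitely generated over $R$ with $M/IM$ finite-dimensional over $\mathbb{R}$, the adequacy property gives that $M$ is in fact finitely generated over the subring $\mathcal{E}_r$ and that $q_0,\dots,q_r$ generate $M$ over $\mathcal{E}_r$ — this is the content of adequacy beyond plain Nakayama. Unwinding the quotient, $R=\tilde T_F+\langle 1,\partial F/\partial u_1,\dots,\partial F/\partial u_r\rangle_{\mathcal{E}_r}$, namely $(ii)$.

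The main obstacle is precisely this last step. Definition~\ref{defadeq} is phrased for homomorphisms of $R$-modules, whereas the conclusion we need lives at the level of the subring $\mathcal{E}_r\subset R$ (generation over $\mathcal{E}_r$, not merely over $R$, the latter being vacuous once the unit $1$ is among the generators); so one must invoke the finer preparation-theorem statement of \cite{damon1984unfolding} and check that its hypotheses hold for $M$. The remaining ingredients — the Hadamard identification of $I$, the computation of $\tilde T_F$ and of $\partial F/\partial u_i$ modulo $I$, and the harmless index shift between $\vec{\mathcal{E}}_{n+r,n}$ and $\vec{\mathcal{E}}_{n+r+1,n}$ in the definition of $\tilde T_F$ (the extra variable being the spectator parameter $t$ of Theorems~\ref{eqtransversal} and \ref{theostable}) — are routine.
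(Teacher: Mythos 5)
Your proof is correct and follows essentially the same route as the paper's: both reduce modulo $\mathcal{M}_r\cdot\mathcal{U}_{n,r}(\Gamma)$ via the Hadamard/Taylor decomposition to identify condition $ii.)$ taken mod the ideal with condition $i.)$, and then lift back with the Preparation theorem (Theorem \ref{preptheorem}); the paper merely packages the right-hand side of $ii.)$ as the image of a single $\mathcal{E}_r$-linear map $\Psi:N\rightarrow\mathcal{U}_{n,r}(\Gamma)$ instead of first quotienting by $\tilde{T}_F$ as you do. The scruple you raise about Definition \ref{defadeq} being stated for module homomorphisms over $R$ applies equally to the paper's own $\Psi$ (which is only $\mathcal{E}_r$-linear, since $\vec{\mathcal{E}}_{r,r+1}$ is not a $\mathcal{U}_{n,r}(\Gamma)$-module), so both arguments rest on the mixed-ring form of Damon's result rather than on anything missing from your proposal.
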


\begin{proof}
$ii.)\implies i.)$. This is obvious, by taking $u=0$. 

\smallskip

\noindent $i.)\implies ii.)$. We first make the following initial observation. For any $\Gamma$-invariant unfolding $G\in \mathcal{U}_{n,r}(\Gamma)$ of a germ $g\in\mathcal{E}_n(\Gamma)$, by Taylor's theorem we can always write it as
\begin{equation}
G(x,u)=g(x)+\sum_{i=1}^ru_iG_i(x,u),
\end{equation}
for some $G_i(x,u)\in \mathcal{U}_{n,r}(\Gamma)$, $i=1,\dots,r$, i.e., 
\begin{equation}\label{eqtest1}
\mathcal{U}_{n,r}(\Gamma)\cong \mathcal{E}_n(\Gamma)\oplus \mathcal{M}_r\cdot \mathcal{U}_{n,r}(\Gamma).
\end{equation}
Similarly, we note that, also by Taylor expansion, one has for any element of $\tilde{T}_F$,
\begin{align}\label{eqtest2}
&\Phi(x,u)\cdot \vec{\nabla}_xF(x,u)=\phi(x)\cdot \vec{\nabla}_xf(x)+\sum_{i=1}^ru_iG_i(x,u),\; \text{ where }\;\phi(x)\cdot \vec{\nabla}_xf(x)\in \tgsp_{\mathcal{O}_f},\nonumber\\
&\sum_{i=1}^ru_iG_i(x,u)\in \mathcal{M}_r\cdot \mathcal{U}_{n,r}(\Gamma),
\end{align}
and for any element of $\left\langle1,\frac{\partial F}{\partial u_1},\dots,\frac{\partial F}{\partial u_r}\right\rangle_{\mathcal{E}_{r}}$,
\begin{align}\label{eqtest3}
&\chi_0(u)+\sum_{i=1}^r\frac{\partial F(x,u)}{\partial u_i}\chi_i(u)=\chi_0(0)+\sum_{i=1}^r\frac{\partial F(x,0)}{\partial u_i}\chi_i(0)+\sum_{i=1}^ru_iG_i(x,u),\;\text{ where}\nonumber\\
&\chi_0(0)+\sum_{i=1}^r\frac{\partial F(x,0)}{\partial u_i}\chi_i(0)\in \langle1,\alpha_1(F),\dots,\alpha_r(F)\rangle_\mathbb{R}\;\text{ and }\; \sum_{i=1}^ru_iG_i(x,u)\in \mathcal{M}_r\cdot \mathcal{U}_{n,r}(\Gamma).
\end{align}
Now, let $N,\,M$ be given by
\begin{equation}
N=\{\Phi\in \vec{\mathcal{E}}_{n+r,n}\;|\;x\mapsto \Phi(x,u)\in \vec{\mathcal{E}}_n(\Gamma)\}\oplus \vec{\mathcal{E}}_{r,r+1},\quad M=\mathcal{U}_{n,r}(\Gamma),
\end{equation}
and define $\Psi:N\rightarrow M$ by
\begin{equation*}
\Psi(\Phi,(\chi_0,\dots,\chi_r))(x,u)=\Phi(x,u)\cdot F(x,u)+\chi_0(u)+\sum_{i=1}^r\chi_{i}(u)\frac{\partial F(x,u)}{\partial u_i}.
\end{equation*}
Clearly, $\Phi(N)= \tilde{T}_F+\left\langle1,\frac{\partial F}{\partial u_1},\dots,\frac{\partial F}{\partial u_r}\right\rangle_{\mathcal{E}_{r}}$. Hence, condition $ii.)$ is equivalent to stating that $\Psi$ is surjective. Since the pair $(\mathcal{U}_{n,r}(\Gamma),\mathcal{M}_r)$ is adequate, by Theorem \ref{preptheorem} it follows that in order to prove the surjectivity of $\Psi$, it is enough to prove
\begin{equation}\label{eqtest4}
\Psi(N)+\mathcal{M}_r\cdot \mathcal{U}_{n,r}(\Gamma)=\mathcal{U}_{n,r}(\Gamma).
\end{equation}
But from Equations (\ref{eqtest1}), (\ref{eqtest2}) and (\ref{eqtest3}), one has that the equality (\ref{eqtest4}) holds if and only if
\begin{equation}
\tgsp_{\mathcal{O}_f}+ \langle1,\alpha_1(F),\dots,\alpha_r(F)\rangle_\mathbb{R}=\mathcal{E}_n(\Gamma),
\end{equation}
which is exactly contition $i.)$.
\end{proof}

\noindent \textit{Acknowledgements}: The author would like to thank Prof. Javier Bobadilla, Prof. Maria Aparecida Ruas, and Prof. Miriam Garcia Manoel for the helpful discussions.

\newpage

 \bibliographystyle{plain}
\bibliography{lib}

\end{document}

\subsection{$\Gamma$-invariant jet space}

Let $J^r(\mathbb{R}^n)$ denote the space of $r$-jets from $\mathbb{R}^n$ to $\mathbb{R}$. $J^r(\mathbb{R}^n)$ can be identified with $\mathbb{R}^n\times \mathbb{R}\times P_0^r$, where $P_0^r$ is the space of polynomial mappings from $\mathbb{R}^n$ to $\mathbb{R}$ with zero constant term. The identification is given by $(x_0,y_0,p) \rightarrow y_0+p(x-x_0)$ (not sure). The $\Gamma$-invariant $r$-jet space $J_\Gamma^r(\mathbb{R}^n)$ is the defined as the fixed points of the mappings

\begin{equation*}
(x,y,p)\mapsto(\gamma x,y,p\circ \gamma)\in \mathbb{R}^n\times \mathbb{R}\times P_0^r,
\end{equation*}
for all $\gamma\in \Gamma$. From now on, the $r$-jet at $x\in \mathbb{R}^n$ of some function $f:\mathbb{R}^n\rightarrow\mathbb{R}$ will be denoted by $j^r_xf$.

\subsubsection{Infinitesimal stability}

Let $F\in\mathcal{U}_{n,r}(\Gamma)$ be a $\Gamma$-invariant unfolding of a germ $f\in\mathcal{M}_n(\Gamma)$, and let $H\in\mathcal{U}_{n,r+1}(\Gamma)$ be an unfolding of $f$ satisfying
\begin{equation}
H(x,u,0)=F(x,u).
\end{equation}
The mapping $t\mapsto H(x,u,t)$ can be regarded as a smooth path in $\mathcal{U}_{n,r}(\Gamma)$, passing through $F$ at the origin $t=0$. We call \textit{infinitesimal stability} if for any smooth path $H$, the equivalence between $H(x,u,0)=F(x,u)$ and $H(x,u,t)$ holds ``infinitesimally'', i.e., if there exits some germs

\begin{enumerate}[label=\roman*.)]
\item $\phi\in \vec{\mathcal{M}}_{n+r+1,n}$, where $x\mapsto\phi(x,u,t)\in\vec{\mathcal{E}}_{n}(\Gamma)$
for $(u,t)$ in some neighborhood of $0\in\mathbb{R}^{r+1}$,
\item  $\psi\in \vec{\mathcal{M}}_{r+1,r}$, where $u\mapsto\psi(u,t)$ is a germ diffeomorphism for $t$ in some neighborhood of $0\in \mathbb{R}$,
\item $h\in \mathcal{M}_{r+1}$,
\end{enumerate}
such that
\begin{equation}\label{eq22}
\frac{\partial }{\partial t}(H(x,u,t)-F(\phi(x,u,t),\psi(u,t))-h(u,t))\big\rvert_{t=0}=0.
\end{equation}
Note that, by the arbitrariness of $H$, Equation (\ref{eq22}) is equivalent to the algebraic condition
\begin{equation}
\mathcal{U}_{r,n}(\Gamma)= \tilde{T}_F+\langle 1,\alpha_1(F),\dots, \alpha_r(F)\rangle_{\mathcal{E}_{r}}.
\end{equation}

\begin{defn}
Let $F$ be a $\Gamma$-invariant $r$-unfolding of $f$. Then, $F$ is said to be infinitesimally stable if
\begin{equation}
\mathcal{U}_{r,n}(\Gamma)= \tilde{T}_F+\langle 1,\alpha_1(F),\dots, \alpha_r(F)\rangle_{\mathcal{E}_{r}}
\end{equation}
\end{defn}

Another way to characterize infinitesimal stability is by the tangent space of the manifold generated by unfoldings $\Gamma$-equivalent to each other. Let $F\in \mathcal{U}_{r,n}(\Gamma)$ and consider the vectors $(\mathbf{1}_x+t\phi(x,u))$, $(\mathbf{1}+t\psi(u))$ and $(th(u))$, where:
\begin{enumerate}[label=\roman*.)]
\item $\phi\in \vec{\mathcal{M}} _{n+r,n}$, where $x\mapsto\phi(x,u)\in\vec{\mathcal{E}}_{n}(\Gamma)$
for $u$ in some neighborhood of $0\in\mathbb{R}^{r}$, and $\mathbf{1}_x$ denotes the germ $\mathbf{1}_x(x,u)=x$,

\item $\psi\in \vec{\mathcal{M}}_{r}$, $\mathbf{1}$ is the identity germ $\mathbf{1}(u)=u$,

\item $h\in \mathcal{M}_r$.
\end{enumerate}
Then, one has
\begin{equation}
\frac{\partial }{\partial t}(F(\mathbf{1}_x+t\phi(x,u),\mathbf{1}+t\psi(u))+th(u))\big\rvert_{t=0}=\phi(x,u)\cdot \vec{\nabla}_xF(x,u)+\psi(u)\cdot \vec{\nabla}_uF(x,u)+h(u).
\end{equation}
\begin{equation}
W_F(x,u)(x',u')=F(x+x',u+u')-F(x,u).
\end{equation}
I.e., the tangent space $\mathcal{G}_F$ of the manifold of $\Gamma$-equivalent unfoldings at a point $F\in \mathcal{U}_{r,n}$ is given by
\begin{equation}
\mathcal{G}_F=\{
\end{equation}

By straightforward calculations, similar to what has been done in (REF), oan may easily compute the tangent space generated by $T_{W_F}$ and $W_F$,
\begin{equation}
\mathcal{G}_F+T_{W_F}=\tilde{T}_F+\langle 1,\alpha_1(F),\dots, \alpha_r(F)\rangle_{\mathcal{E}_{r}}
\end{equation}

\begin{prop}
Let $F$ be a $\Gamma$-invariant $r$-unfolding of $f$. Then, $F$ is infinitesimally stable if and only if $J^k(\mathcal{G}_F)$ is transversal to $J^k(W_F)$.
\end{prop}

\begin{proof}
Note that, the transversality condition of $J^k(\mathcal{G}_F)$ and $J^k(W_F)$ imply that
\begin{equation}
J^k(\tilde{T}_F+\langle1,,\alpha_1(F),\dots, \alpha_r(F)\rangle_{\mathcal{E}_{r}})=J^K_\Gamma,
\end{equation}
i.e., one has
\begin{equation}
\tilde{T}_F+\langle1,,\alpha_1(F),\dots, \alpha_r(F)\rangle_{\mathcal{E}_{r}}+\mathcal{M}^r=\mathcal{U}_{n,r}(\Gamma).
\end{equation}

\end{proof}

\begin{lemma}
Let $F$ be a $\Gamma$-invariant $r$-unfolding of $f$. Then 

$N=\vec{\mathcal{E}}_{n+r,n}(\Gamma)\oplus\vec{\mathcal{E}}_{r,r+1}$, $R= \mathcal{E}_{n+r}(\Gamma)\oplus \mathcal{E}_r$
\begin{equation}
f
\end{equation}

\end{lemma}

\begin{equation}
s
\end{equation}